\theoremstyle{plain}
\newtheorem{theorem}{Theorem}[section]
\newtheorem{lemma}[theorem]{Lemma}
\newtheorem{proposition}[theorem]{Proposition}
\newtheorem{corollary}[theorem]{Corollary}
\theoremstyle{definition} 
\newtheorem{remark}[theorem]{Remark}
\newcounter{claim}
\renewcommand{\theclaim}{\Alph{claim}}
\newenvironment{claim}{\refstepcounter{claim}%
\par\medskip\par\noindent{\it Claim~\theclaim.~}~\rm}{\par\vspace{-2mm}\par}
{$\,\triangleleft$\par\medskip\par}
\newcommand{\Case}[2]{\bigskip\par\noindent{\it Case #1:\/ #2}\par\medskip\noindent}
\def\@gifnextchar#1#2#3{\let\@tempe#1\def\@tempa{#2}\def\@tempb{#3}%
  \futurelet\@tempc\@gifnch}
\def\@gifnch{\ifx\@tempc\@sptoken\let\@tempd\@tempb%
  \else\ifx\@tempc\@tempe\let\@tempd\@tempa\else\let\@tempd\@tempb\fi\fi\@tempd}
\def\SK@set#1{\left\{#1\right\}}
\def\SK@@set#1#2{\{#1\,:\,
    \begin{array}{@{}l@{}}#2\end{array}
\}}
\def\SK@mset#1{\left\{\!\!\left\{#1\right\}\!\!\right\}}
\def\SK@@mset#1#2{\{\!\!\{#1\,:\,
    \begin{array}{@{}l@{}}#2\end{array}
\}\!\!\}}
\def\BIG@set#1{\Big\{#1\Big\}}
\def\BIG@@set#1#2{\Big\{#1\:\Big|\:
    \begin{array}{@{}l@{}}#2\end{array}
\Big\}}
\newcommand{\Set}[1]{\@gifnextchar\bgroup{\SK@@set{#1}}{\SK@set{#1}}}
\newcommand{\Mset}[1]{\@gifnextchar\bgroup{\SK@@mset{#1}}{\SK@mset{#1}}}
\newcommand{\Bigset}[1]{\@gifnextchar\bgroup{\BIG@@set{#1}}{\BIG@set{#1}}}
\newcommand{\refeq}[1]{(\ref{eq:#1})}
\newcommand{\of}[1]{\left( #1 \right)}
\newcommand{\function}[2]{:#1 \rightarrow #2}
\DeclareMathOperator{\sym}{Sym}
\DeclareMathOperator{\aut}{Aut}
\DeclareMathOperator{\cay}{Cay}
\DeclareMathOperator{\rank}{rk}
\DeclareMathOperator{\ann}{Ann}
\DeclareMathOperator{\wele}{WL}
\DeclareMathOperator{\orb}{Orb}
\newcommand{\wll}[1]{\wele_2(#1)}
\newcommand{\orbb}[1]{\orb_2(#1)}
\newcommand{\WL}[1]{\ensuremath{#1\text{-}\mathrm{WL}}\xspace}
\newcommand{\wl}{\WL 2}
\newcommand{\kwl}{\WL k}
\newcommand{\bQ}{\mathbb{Q}}
\newcommand{\bZ}{\mathbb{Z}}
\newcommand{\bC}{\mathbb{C}}
\newcommand{\bU}{\mathbb{U}}
\newcommand{\cC}{\mathcal{C}}
\newcommand{\cS}{\mathcal{S}}
\newcommand{\reals}{\mathbb{R}}
\newcommand{\rgraph}{\mathbf{G}_n}
\newcommand{\prob}[1]{\mathsf{P}[ #1 ]}
\newcommand{\probla}[1]{\mathsf{P}_{\mathfrak{l}}[ #1 ]}
\newcommand{\probun}[1]{\mathsf{P}_{\mathfrak{u}}[ #1 ]}
\newcommand{\un}[1]{{#1}^{\mathfrak{u}}}
\newcommand{\la}[1]{{#1}^{\mathfrak{l}}}
\newcommand{\Q}{\mathcal{Q}}
\newcommand{\R}{\mathcal{R}}
\newcommand{\asym}{\ensuremath{\mathcal{A}}\xspace}
\newcommand{\firm}{\ensuremath{\mathcal{F}}\xspace}
\newcommand{\allc}{\ensuremath{\mathcal{C}}\xspace}
\newcommand{\fourier}{\mathcal{F}}
\title{Canonization of a random circulant graph by counting walks\footnote{A preliminary
    version of this paper was presented at WALCOM'24, the 18th International Conference
                  and Workshops on Algorithms and Computation~\cite{walcom}.}}
\author{Oleg Verbitsky\thanks{Institut f\"{u}r Informatik, Humboldt-Universit\"{a}t zu Berlin, Germany.
Supported by DFG grant KO 1053/8--2. On leave from the IAPMM, Lviv, Ukraine.}
  \and
Maksim Zhukovskii\thanks{School of Computer Science, University of Sheffield, UK.}}
\date{}
\begin{document} 

\maketitle

\begin{abstract}
  It is well known that almost all graphs are canonizable by a simple combinatorial routine known as color refinement, also referred to as the 1-dimensional Weisfeiler-Leman algorithm.
  With high probability, this method assigns a unique label to each vertex of a random input graph and, hence, it is applicable only to asymmetric graphs. The strength of combinatorial refinement techniques becomes a subtle issue if the input graphs are highly symmetric.
We prove that the combination of color refinement and vertex individualization yields a canonical labeling for almost all circulant digraphs (i.e., Cayley digraphs of a cyclic group).
This result provides first evidence of good average-case performance of combinatorial refinement within the class of vertex-transitive graphs.
Remarkably, we do not even need the full power of the color refinement algorithm. We show that the canonical label of a vertex $v$ can be obtained just by counting walks of each length from $v$ to an individualized vertex. Our analysis also implies that almost all circulant graphs are compact in the sense of Tinhofer, that is, their polytops of fractional automorphisms are integral.
Finally, we show that a canonical Cayley representation can be constructed for almost all circulant graphs by the more powerful 2-dimensional Weisfeiler-Leman algorithm. 
\end{abstract}

\section{Introduction}\label{s:intro}

\subsection{Combinatorial refinement and canonization of random graphs}

As is well known, the graph isomorphism problem is very efficiently solvable
in the average case by a simple combinatorial method known as \emph{color refinement}
(also \emph{degree refinement} or \emph{naive vertex classification}).
When a random graph $\rgraph$ on $n$ vertices is taken as an input,
this algorithm produces a canonical labeling of all vertices in $\rgraph$
by coloring them initially by their degrees and then by refining the color classes as follows:
Two equally colored vertices $u$ and $v$ get new distinct colors if one of the initial colors
occurs in the neighborhoods of $u$ and $v$ with different multiplicity.
In this way, every vertex gets a unique color with probability $1-O(n^{-1/7})$
(Babai, Erd\H{o}s, and Selkow \cite{BabaiES80}).
Thus, the method produces a canonical labeling for almost all graphs on a fixed set of $n$ vertices.

This approach is not applicable to regular graphs, even with
many refinement rounds, because if all vertices have the same degree, then
the refinement step makes obviously no further vertex separation.
Weisfeiler and Leman \cite{WLe68} came up with a more powerful refinement algorithm which colors
pairs of vertices instead of single vertices (see Section \ref{ss:alg} for a formal description). The idea can be lifted
to $k$-tuples of vertices, for each integer parameter $k$, and the general
approach is referred to as the $k$-dimensional Weisfeiler-Leman algorithm,
abbreviated as \kwl. Thus, \wl is the original algorithm in \cite{WLe68},
and \WL1 corresponds to color refinement.
Remarkably, \wl is powerful enough to produce a canonical labeling for almost all regular graphs
of a given vertex degree (Bollob{\'a}s \cite{Bollobas82}; see also~\cite{Kucera87}).

Further restriction of regular input graphs to vertex-transitive graphs is challenging for
combinatorial refinement because no vertex classification is at all possible in this case.
Indeed, \wl assigns the same color to any two vertices $u$ and $v$ (or, more precisely, to the pairs $(u,u)$ and $(v,v)$) because $u$ is mapped
to $v$ by an automorphism of the graph. The same holds for any dimension~$k$.

Moreover, the graph isomorphism problem for vertex-transitive graphs is provably unsolvable
by \kwl for any fixed dimension $k$.
Indeed, the Cai-F\"{u}rer-Immerman (CFI) construction \cite{CaiFI92}
of non-isomorphic graphs $X_k$ and $Y_k$ indistinguishable by \kwl can be modified so that
these graphs become vertex-transitive~\cite{FuhlbruckKPV21}.
A natural way to enhance combinatorial refinement is to combine
it with \emph{vertex individualization}---that is, assigning a special color to one vertex in the graph \cite{McKayP14}. While this algorithmic approach
proves advantageous in many contexts (see, e.g.,~\cite{BabaiCSTW13}), it nevertheless fails
to overcome the obstacle posed by the CFI graphs.
To see this, note that \WL{(k+k')} is more powerful than any combination of \kwl with prior individualization
of $k'$ vertices. As a consequence, for any pair of arbitrarily large integers $k$ and $k'$, isomorphism of vertex-transitive graphs cannot be
solved by \kwl even under individualization of $k'$ vertices.

Motivated by the question of whether or not these basic obstacles persist in the average case setting,
we focus in this paper on Cayley graphs and, more specifically, on circulant graphs,
that is, Cayley graphs of a cyclic group. While the canonization problem for this class
of graphs is known to be solvable in polynomial time \cite{EvdokimovP04} by advanced algebraic methods,
it is an open question whether this can be done by using \kwl for some fixed dimension $k$; see~\cite{Ponomarenko23,WuPonomarenko24}.
This poses an ongoing challenge for the combinatorial refinement method, especially because
the research on isomorphism of circulant graphs has a long history with many deep results
(see \cite{Babai77,EvdokimovP04,MuzychukKP99,Muzychuk04} and the references therein) and because \kwl with small
dimension $k$ is known to be applicable to many other natural graph classes (e.g., planar graphs \cite{KieferPS19}).
The recent paper \cite{Kluge24} investigates the round complexity of \wl on circulant graphs,
exploiting the close connections of the subject with intricate mathematical concepts. 
Circulant graphs are also interesting on their own right as they naturally appear and are intensively
investigated in many other theoretical and applied areas; see, e.g., the books~\cite{BoseK19,Davis94,DobsonMM22}.
After all, our primary motivation for the study of circulant graphs is that this is the most natural first choice
of a graph class for benchmarking of combinatorial refinement in the realm of vertex-transitive graphs.

\subsection{Our contribution: Random circulant graphs}

We begin by fixing the basic notation and terminology.
Throughout the paper, the isomorphism relation $X\cong Y$ for graphs $X$ and $Y$ refers
to the standard combinatorial notion of graph isomorphism, regardless of any underlying
algebraic structure on the vertex sets of $X$ and $Y$. In particular, when speaking of
isomorphisms and automorphisms of graphs, we always mean the standard graph-theoretic concepts.

For a bijective vertex labeling $\lambda\function{V(X)}{\{0,1,\ldots,n-1\}}$ of an $n$-vertex digraph $X$,
let $X^\lambda$ denote the relabeled version of $X$, that is, the digraph
on vertex set $\{0,1,\ldots,n-1\}$ containing an edge from $x$ to $y$ whenever $X$ contains
an edge from $\lambda^{-1}(x)$ to $\lambda^{-1}(y)$. Given an input digraph $X$,
a \textbf{canonical labeling algorithm} produces a \emph{canonical labeling} $\lambda_X$ of $X$,
which satisfies the following properties:
\begin{itemize}
\item
  a labeling $\lambda_Y$ is computed by the algorithm also for every $Y\cong X$, and 
\item
  $X^{\lambda_X}=Y^{\lambda_Y}$ for all such~$Y$.
\end{itemize}
When applied to a randomly chosen $X$, such an algorithm may occasionally fail,
i.e., terminate without producing any output (this failure occurs simultaneously
for all isomorphic inputs). We say that the algorithms \emph{succeeds} with
probability at least $1-\varepsilon(n)$, if the failure probability on the inputs
with $n$ vertices does not exceed~$\varepsilon(n)$.

\paragraph{Canonical labeling of a random circulant.}

Our goal is to show that the individualization-refinement approach can be used to canonize
almost all circulants at minimal computational costs. Our treatment covers also
circulant directed graphs, which is advantageous for expository purposes as the case of digraphs
is technically somewhat simpler.
We show that the individualization
of a single vertex suffices for random circulant digraphs,
and that two individualized vertices are enough in the undirected case
(in fact, we just perform color refinement twice, each time with a single individualized vertex).
In both the directed and undirected cases, we maintain an overall running time of
$O(n^2\log n)$,
which is the standard running time of color refinement \cite{CardonC82}.
This is possible because our input graphs are vertex-transitive, and hence,
it does not actually matter which vertex is individualized --- even though, in the undirected case,
the choice of the second vertex to be individualized is not arbitrary.
Here, $n$ denotes the number of vertices. Thus, our time bound is actually linearithmic,
that is, it is $O(N\log N)$ for the input length $N$ where the input (di)graph is presented by the adjacency matrix
and the cyclic structure is not explicitly given (see below the discussion of different
representation concepts).
Note also that, as one might expect, our average-case bound of $O(n^2 \log n)$ is substantially better than
the worst-case bound resulting from \cite{EvdokimovP04}.\footnote{The algorithm in \cite{EvdokimovP04}
  involves the 2-dimensional Weisfeiler-Leman algorithm, which has time complexity $O(n^3 \log n)$.
  Its overall running time is stated as $n^c$, where the unspecified constant $c$ depends on the complexity
  of several algorithms from computational group theory.}
We summarize our main result in a somewhat condensed form as follows.

\begin{theorem}[Main Theorem]\label{thm:main}
A uniformly random circulant (di)graph with $n$ vertices is with probability at least $1-n^{-1/2+o(1)}$
canonizable by color refinement combined with vertex individualization in running time~$O(n^2\log n)$.  
\end{theorem}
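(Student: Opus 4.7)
The plan is to analyze what color refinement achieves on $X = \cay(\mathbb{Z}_n, S)$ after individualizing vertex $0$, using the walk-count abstraction highlighted in the abstract. To each vertex $v$ I assign the sequence $w_v = ((A^k)_{v,0})_{k\ge 1}$, where $A$ is the adjacency matrix of $X$; a standard inductive argument shows that color refinement with $0$ individualized separates $v$ and $v'$ whenever $w_v \ne w_{v'}$, so it suffices to prove $v \mapsto w_v$ injective up to unavoidable symmetries. The circulant structure makes the problem spectral: with $\lambda_j = \sum_{s\in S}\omega^{js}$ and $\omega = e^{2\pi i/n}$, one has $(A^k)_{v,0} = \frac{1}{n}\sum_j \lambda_j^k\,\omega^{jv}$, hence $w_v = w_{v'}$ iff for every eigenvalue $\lambda$ the character sums $\sum_{j:\lambda_j=\lambda}\omega^{jv}$ and $\sum_{j:\lambda_j=\lambda}\omega^{jv'}$ coincide. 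The multiplier group $M(S) = \{m\in\mathbb{Z}_n^\times : mS = S\}$ permutes indices while preserving eigenvalues, which forces $w_{mv} = w_v$ for every $m\in M(S)$ and thus makes the $M(S)$-orbits on $\mathbb{Z}_n$ an intrinsic obstruction.

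The probabilistic core of the proof has two parts. First, I would show that with probability $1 - n^{-1/2+o(1)}$, $M(S) = \{1\}$ in the directed case (respectively $M(S) = \{\pm 1\}$ in the undirected case). For each non-trivial multiplier $m$, the event $mS = S$ requires $S$ to be a union of $m$-orbits on $\mathbb{Z}_n\setminus\{0\}$ and so has probability $2^{-(n-1-c(m))}$, where $c(m)$ counts those orbits; a union bound dominated by small-order units yields the claimed rate. Second, and this is the step I expect to be the main obstacle, I would prove that outside an event of probability $n^{-1/2+o(1)}$ no \emph{accidental} walk-count coincidences occur, i.e.\ $w_v = w_{v'}$ only when $v' \in M(S)\cdot v$. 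This reduces to showing that the eigenvalue multiplicity pattern of $A$ is entirely explained by $M(S)$ with high probability; given this, the character-sum condition above collapses to an identity over $M(S)$-orbits, which by orthogonality of characters forces $v' \in M(S)\cdot v$. The delicate point is ruling out arithmetic coincidences among the $\lambda_j$'s uniformly over all pairs $(v,v')$, which requires sharp concentration of random character sums and a careful union bound.

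Combining the two ingredients, the digraph case is complete: triviality of $M(S)$ together with injectivity of $v\mapsto w_v$ produces canonical labels, and vertex transitivity makes the choice of $0$ as the individualized vertex immaterial. For the undirected case, the ever-present $-1 \in M(S)$ leaves the residual ambiguity $w_v = w_{-v}$ after one round, so a second individualization is needed. I would canonically select a second vertex from the resulting $\pm$-pair partition (using the walk-count signatures already computed) and repeat the spectral analysis to confirm that the combined two-round refinement separates all remaining vertices. The overall $O(n^2\log n)$ running time is inherited from the standard cost of color refinement, invoked a constant number of times.
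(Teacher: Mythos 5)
Your overall plan coincides with the paper's strategy: pass from color refinement with an individualized vertex to walk counts (the paper's Lemma \ref{lem:CR}), diagonalize the circulant to reduce injectivity of $v\mapsto w_v$ to the absence of eigenvalue coincidences beyond the forced ones (Lemmas \ref{lem:simple}, \ref{lem:saturated}, proved via Lemma \ref{lem:spectrum}), and handle the undirected ambiguity $w_v=w_{-v}$ by a second individualization (Lemma \ref{lem:w-saturated}). However, the step you yourself flag as ``the main obstacle'' is precisely the paper's main technical content (Theorem \ref{thm:whp}), and your proposal does not supply it. Saying that it ``requires sharp concentration of random character sums and a careful union bound'' understates the difficulty and, as stated, would not give the bound $n^{-1/2+o(1)}$: concentration alone cannot work, because for some pairs $(a,b)$ the coincidence $\lambda_a=\lambda_b$ genuinely has probability of order $n^{-1/2}$ (e.g.\ for $n=3p$ the event $\lambda_p=\lambda_{2p}$ reduces to $|S_1|=|S_2|$ for two binomial counts; see the Remark after Theorem \ref{thm:whp}). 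The paper's proof needs two ingredients your sketch lacks: (i) an exact algebraic rigidity statement --- no two distinct subsets of $\{\zeta_n^j: 1\le j\le n/\ln n\}$ have equal sums, coming from $\dim_{\mathbb Q}\mathbb Q(\zeta_n)=\phi(n)>n/\ln n$ --- which, after ``rotating'' exponents into a short arc, shows that for most pairs $(a,b)$ the conditional equation has at most one satisfying assignment of the unexposed Bernoulli variables, giving probabilities like $n^{-\Omega(\ln n)}$ or $e^{-n^{\Omega(1)}}$; and (ii) a number-theoretic case analysis on the orders $g=n/\gcd(a,n)$ and $g'$ together with a divisor-counting bound showing that the dangerous pairs (small $g$), where only the weak $O(g\,n^{-1/2})$ estimate is available, number at most $n^{O(\varepsilon)}$, so the union bound still closes at $n^{-1/2+o(1)}$. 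Without these two components your second step is a restatement of what must be proved, so the argument has a genuine gap at its core.

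Two smaller points. First, your multiplier-group step is sound but structured differently from what is needed: triviality of $M(S)$ (which in fact fails only with exponentially small probability, cf.\ Lemma \ref{lem:m-free}, not $n^{-1/2+o(1)}$) is not what the paper uses to get walk-discreteness --- simple/saturated spectrum alone suffices --- and in the undirected case your ``canonical choice of a second vertex'' must avoid $u=n/2$ for even $n$; the paper secures this by taking a color class of size exactly two and checking $y=-x$ and $y=2u-x$ cannot hold simultaneously. Second, Theorem \ref{thm:main} is proved in the paper for three models of a random circulant (random connection set, random unlabeled, random labeled circulant); your proposal addresses only the random-connection-set model, and extending the bound to the other two requires the transition arguments of Section \ref{s:main} (Muzychuk's isomorphism bound, multiplier-free/firm graph estimates, Lemmas \ref{lem:u} and \ref{lem:l}), which are absent from your sketch.
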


Theorem \ref{thm:main} includes two statements, one for undirected graphs (where
all undirected circulant graphs are equiprobable) and the similar statement
for directed graphs. Note that the concept of a uniform distribution of $n$-vertex circulants
is somewhat ambiguous as the notion of an \emph{$n$-vertex circulant} alone
can be defined in three different natural ways:
\begin{itemize}
\item
  as a Cayley (di)graph of the cyclic group $\bZ_n$,
\item
  as an isomorphism class of Cayley (di)graphs of $\bZ_n$ (which we call an \emph{unlabeled circulant}),
\item
  as a (di)graph on the vertex set $\{0,1,\ldots,n-1\}$ isomorphic
  to a Cayley (di)graph of $\bZ_n$ (which we call a \emph{labeled circulant}).  
\end{itemize}
More formally, a \emph{connection set} $S \subseteq \bZ_n \setminus \{0\}$ defines the \emph{Cayley digraph} $\cay(\bZ_n, S)$,
where two vertices $x, y \in \bZ_n$ form a directed edge $(x, y)$ if $y - x \in S$. If $S$ is inverse-closed,
that is, $S = -S$, then $\cay(\bZ_n, S)$ is
undirected. When referring to a random Cayley (di)graph, we assume that all connection sets (with $S = -S$ in the undirected case)
are equally likely. Detailed definitions of the other two distributions are provided in Section~\ref{s:main}.

We first prove Theorem \ref{thm:main} for random Cayley digraphs and graphs (see the proof outline in Section \ref{s:proof-overview}), and then extend the result to the other two concepts. The transition from one distribution
to another is quite general and is based on known results in algebraic graph theory~\cite{BhoumikDM14,DobsonSV16,Muzychuk04}.

\paragraph{Canonical Cayley representation.}

Our canonical labeling algorithm in Theorem \ref{thm:main} is
based on individualization of a single vertex in the input digraph
followed by color refinement (or \WL1 in other terminology).
The aforementioned 2-dimensional Weisfeiler-Leman algorithm (\wl)
is strictly stronger than the combination of \WL1 with one-vertex individualization
(cf.~\cite[Theorem 3.2]{RattanS23}). It turns out
that \wl can be used to solve an even more challenging algorithmic problem
than computing a canonical labeling, which we describe below.

Theorem \ref{thm:main} provides an algorithm for producing a canonical labeling $\lambda_X$ of a
random circulant $X$. Note, however, that the proof does not guarantee
that the canonical form $X^{\lambda_X}$ is a Cayley digraph
of $\bZ_n$---or, equivalently, that the cycle $(\lambda_X^{-1}(0),\lambda_X^{-1}(1),\ldots,\lambda_X^{-1}(n-1))$
is an automorphism of $X$. In cases where this condition holds, i.e., when
$X^{\lambda_X}=\cay(\bZ_n,S)$ for some connection set $S$, we say that
the map $\lambda_X$ is a \emph{Cayley representation} of $X$.
We are interested in an algorithm which, with high probability, succeeds
in computing a map $\lambda_X$ that is \emph{both} a canonical labeling
and a Cayley representation of~$X$.

\begin{theorem}\label{thm:CCR}
  A uniformly random labeled circulant admits a canonical Cayley representation
  computable by means of \wl in time $O(n^3\log n)$ with success probability~$1-O(n^22^{-n/8})$.  
\end{theorem}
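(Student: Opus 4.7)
The plan is to run \wl on the input digraph $X$ and extract a canonical Cayley representation from its stable colouring. The starting observation is that for any circulant $X=\cay(\bZ_n,S)$, each cyclic shift is a graph automorphism, so \wl's colouring of pairs is shift-invariant: the colour $c(i,j)$ depends only on the difference $j-i\bmod n$. The induced partition $\pi_S$ of $\bZ_n$ (with $\{0\}$ as its own cell) is precisely the partition into basic sets of the Schur ring (S-ring) over $\bZ_n$ generated by $S$, i.e., the coherent closure of $S$ under the shift group. For the probabilistic analysis I work with $S$ uniform in $\bZ_n\setminus\{0\}$; the transition to the uniform labeled-circulant distribution is standard, as already used for Theorem~\ref{thm:main}.

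\textbf{High-probability collapse to the discrete partition.} The central claim is that, with probability $1-O(n^2\,2^{-n/8})$, the S-ring generated by $S$ is the finest one, so $\pi_S$ is discrete and every difference is distinguished by its \wl colour. I would prove this by a union bound over non-trivial S-rings $\mathcal T$ of $\bZ_n$: if $\mathcal T$ has $k_{\mathcal T}$ non-zero basic sets, a random $S\subseteq\bZ_n\setminus\{0\}$ is a union of basic sets of $\mathcal T$ with probability $2^{k_{\mathcal T}-(n-1)}$. Invoking the structural classification of S-rings over cyclic groups (cyclotomic, coset, and wreath/tensor types), one argues that every non-trivial $\mathcal T$ forces $k_{\mathcal T}$ to be sufficiently below $n-1$ and that the number of candidate S-rings is polynomially bounded. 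This probabilistic estimate is the main obstacle: the natural families (cyclotomic and coset) give $2^{-\Omega(n)}$ bounds comfortably, but exotic constructions such as wreath or tensor products of smaller S-rings, arising when $n$ has small repeated prime factors, need careful handling to achieve the $2^{-n/8}$ exponent uniformly in $n$.

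\textbf{Canonical Cayley representation and canonicity.} Assume now that $\pi_S$ is discrete. Then \wl assigns a distinct colour to every difference $d\in\bZ_n$, and intersection numbers of the coherent configuration reveal which $d$ are generators of $\bZ_n$ (those coprime to $n$): a difference $d$ generates $\bZ_n$ iff iterated composition of its colour-class reaches every other colour. Pick the generator $g$ whose colour is lexicographically smallest under a canonical ordering, and define $\lambda_X(kg\bmod n)=k$, taking the input-labeled vertex $0$ as base. A direct computation yields $X^{\lambda_X}=\cay(\bZ_n,g^{-1}S)$, so $\lambda_X$ is a Cayley representation. Canonicity across isomorphic copies $Y=X^\sigma$ holds because \wl colourings are isomorphism-equivariant: $\sigma$ matches canonical colours between $X$ and $Y$, inducing a unique multiplier $m\in\bZ_n^\times$ such that $Y$'s canonical generator is $mg$ and $Y=\cay(\bZ_n,mS)$, whence $(mg)^{-1}\cdot mS=g^{-1}S$ and $Y^{\lambda_Y}=X^{\lambda_X}$. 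The choice of vertex $0$ as base is immaterial, since a different base only cyclically renames the labels without altering the output Cayley digraph. The overall running time is $O(n^3\log n)$ dominated by \wl, with $O(n^2)$ post-processing.
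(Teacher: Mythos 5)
Your structural setup is sound: the \wl-stable partition of a circulant is shift-invariant and corresponds to the Cayley scheme (S-ring) generated by $S$, and once every difference class is a singleton one can indeed read off a canonical generator and hence a canonical Cayley representation, with \wl dominating the $O(n^3\log n)$ running time. However, there is a genuine gap at exactly the point where the theorem's content lies: the claim that with probability $1-O(n^2 2^{-n/8})$ the generated S-ring is the finest possible one. You only sketch a union bound over all proper S-rings of $\bZ_n$ via the Leung--Man-type classification, and you yourself flag that the wreath/tensor cases and the counting of candidate S-rings are unresolved; nothing in the proposal establishes the $2^{-n/8}$ exponent, nor even that the number of S-rings is small enough for the union bound. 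The paper does not prove such an estimate from scratch either -- it sidesteps it by citing that almost all circulant (di)graphs are \emph{firm} (Proposition~\ref{prop:firm}, due to Dobson--Spiga--Verret and Bhoumik--Dobson--Morris), that firm implies normal and hence Schurian (Proposition~\ref{prop:EP}, Evdokimov--Ponomarenko), so that \wl computes precisely the orbital partition of the cyclic (resp.\ dihedral) automorphism group, and then transfers the bound to uniformly random \emph{labeled} circulants via Lemmas~\ref{lem:u} and~\ref{lem:l}. Your proof would need either to import these same results or to actually carry out the S-ring union bound, which is a substantial piece of work, not a routine verification.

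A second, smaller but real, problem: your "discrete partition" claim is false for undirected circulants, which the theorem also covers. Since $x\mapsto -x$ is always an automorphism when $S=-S$, the \wl colour of the difference $d$ always equals that of $-d$, so the finest attainable partition has classes $\{d,-d\}$; your analysis is written only for the digraph model ($S$ uniform in $\bZ_n\setminus\{0\}$) and would have to be redone for inverse-closed $S$ with the symmetrized partition, where the canonical choice is a pair of mutually inverse cycles rather than a single generator (as in Lemmas~\ref{lem:generators} and~\ref{lem:repr-auto} of the paper). The canonicity argument at the end is essentially correct in spirit, but it conflates an arbitrary isomorphic labeled copy $Y$ with a Cayley graph $\cay(\bZ_n,mS)$; this is repairable, unlike the missing probabilistic estimate.
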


Compared to Theorem~\ref{thm:main}, the proof of Theorem~\ref{thm:CCR} requires less technical effort,
primarily due to the greater expressive power of the \wl algorithm.
Moreover, this algorithm is closely related to the notion of a coherent configuration in algebraic
combinatorics \cite{CP2019}.
This connection enables us to leverage strong results from algebraic graph theory~\cite{BhoumikDM14,DobsonSV16,EvdokimovP03},
whose algorithmic interpretation is the core of Theorem~\ref{thm:CCR}.

\section{Proof strategy and further implications}\label{s:proof-overview}

\subsection{Proof overview of the Main Theorem: The power of walk counts}

Remarkably, we show that canonization of a random circulant does not even need the
full power of color refinement and can actually be accomplished by a weaker algorithmic tool.
Let $G$ be an arbitrary (di)graph on the vertex set $V=\{0,1,\ldots,n-1\}$,
and let $T\subseteq V$. The \emph{walk matrix} $W_T=(w_{ik})_{i,k\in V}$ is defined
by setting $w_{ik}$ to be the number of walks of length $k$ from the vertex $i$ to
a vertex in $T$. That is, $w_{ik}$ is the number of vertex sequences $x_0,x_1,\ldots,x_k$
in $G$ such that $x_0=i$, $x_k\in T$, and the pair $(x_j,x_{j+1})$ is an edge of $G$ for every $j<k$.
If $A$ is the adjacency matrix of $G$ and $\chi_T$ denotes the characteristic
vector of the subset $T$, then $W_T$ is formed by the columns $\chi_T,A\chi_T,A^2\chi_T,\ldots,A^{n-1}\chi_T$.
The theory of walk matrices, including their applicability to isomorphism testing,
has been developed by Godsil \cite{Godsil12} and by Liu and Siemons \cite{LiuS22}.
Let $G_T$ be obtained from $G$ by coloring all vertices in $T$ by the same color.
We call $G_T$ \emph{walk-discrete} if the rows of $W_T$ are pairwise different.
For any walk-discrete $G_T$, the walk matrix $W_T$ yields a canonical labeling of the vertices of $G_T$.
This purely algebraic canonization method can be superseded by the purely combinatorial
method of color refinement because if $w_{uk}\ne w_{vk}$ for some $k$, then color refinement assigns
distinct colors to the vertices $u$ and $v$ in $G_T$ (see Section~\ref{ss:CR} for details).

Let $W=W_V$ be the \emph{standard} walk matrix of $G=G_V$. Obviously,
$G$ is walk-discrete whenever $W$ is non-singular.
Noteworthy, the rank of $W$ for an undirected graph $G$ is equal to the number of different
\emph{main} eigenvalues of the adjacency matrix~$A$; see \cite{Hagos02}.
As shown by O'Rourke and Touri \cite{ORourkeT16}, a random undirected graph $\rgraph$
has non-singular walk matrix with high probability.
As a consequence, $\rgraph$ is, with high probability, canonizable by
computing its standard walk matrix.\footnote{In fact, only a small part of the walk matrix
  suffices for this purpose --- as shown in \cite{ESA23}, $\rgraph$ is canonizable with high probability
  by assigning each vertex $u$ the triple $(w_{u1}, w_{u2}, w_{u3})$. We also refer the interested reader
  to \cite{Kriege22} for applications of color refinement and walk numbers in machine learning.}

The above theory essentially exploits the fact that the adjacency matrices of
undirected graphs are symmetric and, by this reason, does not apply to directed graphs.
Nevertheless, we obtain the following spectral criterion for circulant digraphs.

\begin{lemma}\label{lem:simple}
  Let $X$ be a Cayley digraph of a cyclic group and $X_0=X_{\{0\}}$
  be its version with one individualized vertex.
Let $W_0$ be the walk matrix of $X_0$.
  Then $W_0$ is non-singular (implying that $X_0$ is walk-discrete)
  if and only if $X$ has simple spectrum, that is, all eigenvalues of $X$
  are pairwise distinct.
\end{lemma}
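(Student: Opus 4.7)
The plan is to identify $W_0$ with the Krylov matrix $K(A,e_0)=[e_0,Ae_0,A^2e_0,\ldots,A^{n-1}e_0]$, where $A$ is the adjacency matrix of $X$ and $e_0$ is the indicator vector of the individualized vertex $0$. This reduces the lemma to determining when $e_0$ is a cyclic vector for $A$, a classical question in linear algebra.

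Since $A$ is circulant, it is simultaneously diagonalized by the Fourier matrix $F$ with $F_{ij}=\omega^{ij}$, $\omega=e^{2\pi i/n}$: one has $A=FDF^{-1}$ with $D=\diag(\lambda_0,\ldots,\lambda_{n-1})$ and $\lambda_j=\sum_{s\in S}\omega^{js}$ running through the eigenvalues of $X$. Setting $w=F^{-1}e_0$, I would write $W_0=F\cdot K(D,w)$. A direct computation shows $w=\tfrac1n(1,1,\ldots,1)^{T}$, so every component of $w$ is non-zero.

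Because $D$ is diagonal, the $j$-th row of $K(D,w)$ equals $w_j\cdot(1,\lambda_j,\lambda_j^2,\ldots,\lambda_j^{n-1})$, whence $K(D,w)=\diag(w)\cdot V$, where $V$ is the Vandermonde matrix of $\lambda_0,\ldots,\lambda_{n-1}$. Taking determinants gives
\[
\det W_0 \;=\; \det F \cdot \prod_{j=0}^{n-1} w_j \cdot \prod_{0\le i<j\le n-1}(\lambda_j-\lambda_i).
\]
Since $\det F\neq 0$ and every $w_j=\tfrac1n\neq 0$, $W_0$ is non-singular if and only if the Vandermonde factor does not vanish, i.e., if and only if all eigenvalues of $X$ are pairwise distinct. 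Walk-discreteness of $X_0$ is automatic from non-singularity of $W_0$, since distinct rows are necessary for full rank.

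The argument is essentially mechanical once this viewpoint is adopted; there is no serious obstacle beyond bookkeeping with Fourier conventions. The conceptual point, which could be regarded as the only mild subtlety, is that the Dirac vector $e_0$ has a uniform Fourier transform and therefore excites every eigendirection of $A$, so $e_0$ can fail to be a cyclic vector only because of an eigenvalue collision.
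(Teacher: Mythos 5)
Your proof is correct and takes essentially the same route as the paper: the paper derives the lemma from the more general Lemma~\ref{lem:spectrum} (that $\rank W_0$ equals the number of distinct eigenvalues) by viewing the columns of $W_0$ as convolution powers of $\chi_{-S}$ and applying the discrete Fourier transform, which is your factorization $W_0=F\,\diag(w)\,V$ in different clothing. The remaining differences are cosmetic: you compute a determinant rather than a rank (sufficient for the simple-spectrum statement, and your factorization with all $w_j\neq0$ would in any case also give the rank identity needed for Lemma~\ref{lem:saturated}), and by diagonalizing $A$ directly you bypass the paper's short detour through the transpose digraph $\cay(\bZ_n,-S)$ and the Galois automorphism $\zeta_n\mapsto\zeta_n^{-1}$.
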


Suppose now that $X$ is an undirected Cayley graph of $\bZ_n$.
In this case, the map $x\mapsto(-x)\bmod n$ is an automorphism of $X_0$,
which implies that the walk matrix of $X_0$ has at most $\lceil(n+1)/2\rceil$
different rows. If this bound is achieved, we call $X_0$ \emph{walk-saturated}.
On the other hand, the spectrum of $X$ has at most $\lceil(n+1)/2\rceil$
different eigenvalues. If there are exactly so many eigenvalues,
we say that $X$ has \emph{saturated spectrum}. We have the following analog of Lemma~\ref{lem:simple}
for the undirected case.

\begin{lemma}\label{lem:saturated}
  Let $X$ be a Cayley graph of the cyclic group $\bZ_n$. Then $W_0$ has the maximum possible rank $\lceil(n+1)/2\rceil$
  (implying that $X_0$ is walk-saturated)
  if and only if $X$ has saturated spectrum.
\end{lemma}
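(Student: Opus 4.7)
The plan is to mirror the proof of Lemma~\ref{lem:simple}, adding the extra ingredient that $S=-S$ forces a palindromic structure on both the spectrum of $X$ and the rows of $W_0$. Let $A$ be the (circulant) adjacency matrix of $X=\cay(\bZ_n,S)$. Then $W_0=[\,e_0\mid Ae_0\mid\cdots\mid A^{n-1}e_0\,]$ is a Krylov matrix, where $e_0=\chi_{\{0\}}$. Diagonalize $A$ by the discrete Fourier transform: its eigenvectors $v_0,\ldots,v_{n-1}$ are the columns of the Fourier matrix (so every entry of every $v_j$ is nonzero), and its eigenvalues are $\lambda_j=\sum_{s\in S}\omega^{js}$, where $\omega=e^{2\pi i/n}$.

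The first step, identical to the one used in the proof of Lemma~\ref{lem:simple}, is to determine $\rank W_0$. Writing $e_0=\sum_j c_j v_j$, the nonvanishing of all entries of the $v_j$ forces $c_j\neq 0$ for every~$j$. Hence the $A$-cyclic subspace generated by $e_0$, which is precisely the column span of $W_0$, has dimension equal to the number of distinct $\lambda_j$, i.e.\ the number of distinct eigenvalues of~$X$.

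The second step exploits the assumption $S=-S$. This gives $\lambda_j=\lambda_{n-j}$ for every $j$, so the number of distinct eigenvalues is bounded above by $\lceil(n+1)/2\rceil$, with equality iff $X$ has saturated spectrum. Combined with the first step, this yields $\rank W_0\le\lceil(n+1)/2\rceil$, with equality exactly when $X$ has saturated spectrum---the main equivalence claimed by the lemma.

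It remains to justify the parenthetical ``walk-saturated'' statement. The map $\iota\colon x\mapsto -x\bmod n$ is an automorphism of $X$ that fixes $0$, hence is an automorphism of $X_0$; therefore rows $i$ and $n-i$ of $W_0$ are equal, capping the number of distinct rows at $\lceil(n+1)/2\rceil$. When $\rank W_0$ attains this cap, the distinct-row count is sandwiched between the rank and the cap and must equal $\lceil(n+1)/2\rceil$, so $X_0$ is walk-saturated. The main (minor) obstacle is the same Fourier fact already needed for Lemma~\ref{lem:simple}---namely, that $e_0$ has nonzero coefficient in every eigenvector of the circulant $A$---after which the argument is a routine Krylov-subspace computation combined with the $x\mapsto -x$ symmetry.
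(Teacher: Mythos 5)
Your proof is correct and takes essentially the same route as the paper: both arguments diagonalize the circulant adjacency matrix by the DFT and reduce to a Vandermonde/Krylov-dimension count (the paper packages this as Lemma~\ref{lem:spectrum}, via convolution powers of $\chi_{-S}$), then use the $S=-S$ symmetry $\lambda_j=\lambda_{n-j}$ and the $x\mapsto-x$ automorphism exactly as you do. One small repair: "all entries of the $v_j$ are nonzero" does not by itself imply $c_j\neq0$ for a general basis, so justify this step instead by orthogonality of the Fourier eigenbasis, or simply by noting $e_0=\frac{1}{n}\sum_j v_j$, so every $c_j=1/n\neq 0$.
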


As an immediate consequence of Lemma \ref{lem:simple},
the rows of the walk matrix yield a canonical labeling of a circulant digraph whenever it has a simple spectrum.
With only a small amount of additional technical effort (see Lemma \ref{lem:w-saturated} in the next section),
Lemma \ref{lem:saturated} implies that the walk matrix can also be used
to canonize a circulant graph whenever it has a saturated spectrum. The following theorem therefore
estimates the success probability of these canonization methods on random circulants. 

\begin{theorem}\label{thm:whp}\hfill
  \begin{enumerate}[\bf 1.]
  \item
    If $S\subseteq\bZ_n\setminus\{0\}$ is chosen uniformly at random,
    then the Cayley digraph $\cay(\bZ_n,S)$ has simple spectrum with probability at least $1-n^{-1/2+o(1)}$.
  \item
    If $S\subseteq\bZ_n\setminus\{0\}$ is chosen uniformly at random among all inverse-closed sets,
    then the Cayley graph $\cay(\bZ_n,S)$ has saturated spectrum with probability at least $1-n^{-1/2+o(1)}$.    
  \end{enumerate}
\end{theorem}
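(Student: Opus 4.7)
The plan is to apply Fourier analysis on $\bZ_n$ combined with Littlewood--Offord-type anti-concentration, followed by a stratified union bound organized by the arithmetic structure of $n$.

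The eigenvalues of $\cay(\bZ_n,S)$ are $\lambda_j = \sum_{s \in S} \omega^{js}$ for $j \in \bZ_n$, where $\omega = e^{2\pi i/n}$. Writing $\xi_s = \mathbf{1}[s \in S]$ (i.i.d.\ uniform on $\{0,1\}$ for $s \in \{1,\dots,n-1\}$), a collision $\lambda_j = \lambda_k$ for $j \ne k$ amounts to the vanishing of the random linear form $\sum_{s=1}^{n-1} \xi_s a_s^{(j,k)}$, where $a_s^{(j,k)} = \omega^{js} - \omega^{ks} = \omega^{ks}(\omega^{(j-k)s} - 1)$. Setting $g = \gcd(n, j-k)$, the coefficient vanishes precisely for $s$ in the subgroup $(n/g)\bZ_n$, so $n-g$ of the coefficients are nonzero; since $g$ is a proper divisor of $n$, one has $g \le n/p \le n/2$ (with $p$ the smallest prime factor of $n$), and hence at least $n/2$ nonzero coefficients.

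For each fixed pair $(j,k)$ I would bound $\Pr[\lambda_j = \lambda_k]$ by an anti-concentration inequality for complex-valued random linear forms with $\{0,1\}$ coefficients. The nonzero $a_s^{(j,k)}$ occupy many distinct directions in $\bC$ and span $\bR^2$ robustly for generic $(j,k)$, so a Hal\'asz-type inequality (or its inverse Littlewood--Offord refinement) yields $\Pr[\lambda_j = \lambda_k] \le n^{-A}$ for any fixed $A > 0$, provided the set $\{a_s^{(j,k)}\}$ has no exceptional additive structure. I would then stratify the pairs by $g = \gcd(n, j-k)$: there are $n\,\phi(n/g)$ ordered pairs at each stratum. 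The bulk of the $\Theta(n^2)$ pairs are ``arithmetically generic'' and contribute superpolynomially little total probability. The remaining ``exceptional'' pairs---those whose coefficient sets admit $\bQ$-linear relations coming from cyclotomic identities---translate via Galois theory into discrete events of the form ``$S$ is a union of cosets of a nontrivial subgroup $H \le \bZ_n$'', whose probabilities can be computed explicitly.

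The main obstacle is twofold: first, invoking the anti-concentration machinery sharply enough on the generic pairs to overcome the $\Theta(n^2)$ union-bound loss; second, pinning down the precise contribution of the algebraically structured exceptional events and verifying that their total probability does not exceed $n^{-1/2+o(1)}$. For part 2 (undirected, saturated spectrum) the inverse-closed constraint $\xi_s = \xi_{n-s}$ forces every eigenvalue to be real and the identities $\lambda_j = \lambda_{n-j}$ to hold. Pairing coefficients gives $a_s^{(j,k)} + a_{n-s}^{(j,k)} = 2(\cos(2\pi js/n) - \cos(2\pi ks/n)) \in \bR$, so the argument takes place in $\bR$ rather than $\bC$; the classical real Littlewood--Offord inequality replaces the complex version, the forced coincidences $\lambda_j = \lambda_{n-j}$ are excluded from the union, and the stratification now runs over both $\gcd(n, j-k)$ and $\gcd(n, j+k)$. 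Otherwise the strategy parallels part 1.
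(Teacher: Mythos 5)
Your setup (eigenvalues as Bernoulli linear forms in roots of unity, union bound over pairs of indices) matches the paper's, but two essential ingredients are missing, and the second is a genuine error in the plan. First, for the ``generic'' pairs you delegate all of the difficulty to a Hal\'asz-type inequality ``provided the coefficient set has no exceptional additive structure'': verifying that hypothesis for the vectors $\bigl(\omega^{js}-\omega^{ks}\bigr)_s$, uniformly over the $\Theta(n^2)$ pairs and sharply enough to beat the union bound, is exactly the hard part, and nothing in the proposal does it. The paper needs no such machinery: it conditions on (``exposes'') all $\sigma_s$ outside a carefully distilled index set $K$ of size roughly $\ln^2 n$, and shows the residual equation has at most one satisfying assignment because any collection of powers $\zeta_n^{j}$ with $1\le j\le n/\ln n$ is linearly independent over $\bQ$ (since $\phi(n)>n/\ln n$), after a rotation by a suitable $\zeta_n^{t}$ that moves the relevant roots into a short arc. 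This gives $2^{-|K|}=n^{-\Omega(\ln n)}$ per generic pair by an elementary argument.

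Second, and more seriously, you mischaracterize the exceptional pairs, which are the ones that actually determine the $n^{-1/2+o(1)}$ bound. These are the pairs where \emph{both} $\zeta_n^{j}$ and $\zeta_n^{k}$ have small multiplicative order (both $\gcd(j,n)$ and $\gcd(k,n)$ large); for them $\lambda_j=\lambda_k$ degenerates not into ``$S$ is a union of cosets of a nontrivial subgroup'' (an exponentially unlikely event) but into a binomial balance coincidence such as $\sum_{s\in J'}\sigma_s=c$, whose probability is $\Theta(|J'|^{-1/2})=\Theta(n^{-1/2})$. The paper's closing remark in Section 5 makes this concrete: for $n=3p$ one has $\lambda_p=\lambda_{2p}$ precisely when $|S_1|=|S_2|$, an event of probability $\Theta(n^{-1/2})$, so no argument can push these pairs below $n^{-1/2}$, and your claimed dichotomy (either $n^{-A}$ by Hal\'asz or an explicitly computable coset event) fails for them. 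What rescues the theorem is a counting step your plan lacks: such pairs number only $n^{o(1)}$, since $j$ must be a divisor of $n$ (at most $d(n)=n^{O(1/\ln\ln n)}$ choices) times a small cofactor. Relatedly, stratifying by $\gcd(n,j-k)$ alone does not isolate these pairs --- e.g.\ $j-k=n/2$ with $j$ coprime to $n$ gives a large $\gcd$ yet well-spread coefficients --- the right parameters are the orders $n/\gcd(j,n)$ and $n/\gcd(k,n)$ individually, which is how the paper organizes its Cases A--C. Your Part 2 adjustments (real pairing, excluding $\lambda_j=\lambda_{n-j}$) are in the right spirit but inherit the same gaps.
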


Thus, Theorem \ref{thm:main} for Cayley (di)graphs follows from Theorem \ref{thm:whp} in view of Lemmas \ref{lem:simple} and \ref{lem:saturated}.
Theorem \ref{thm:whp}---and hence, as already noted, also Theorem \ref{thm:main}---extends to the
other two models of a random circulant discussed in Section \ref{s:intro}, namely, the random labeled and unlabeled circulants.

Theorem \ref{thm:whp} is our main technical contribution and may be of independent interest in the context of research
on random circulant matrices~\cite{BoseK19,Meckes09}.
Moreover, this result has further noteworthy consequences for properties of random circulant graphs,
which we discuss in the next subsection.

\subsection{Compactness and naïve canonization}

Note that two $n$-vertex graphs $G$ and $H$ with adjacency matrices $A$ and $B$, respectively,
are isomorphic if and only if there is an $n\times n$ permutation matrix $P$ such that $AP=PB$.
This observation leads to a natural linear programming relaxation for the graph isomorphism problem.
Recall that an $n\times n$ real matrix $S$ is \emph{doubly stochastic} if its
elements are nonnegative and all its rows and columns sum up to~$1$.
A doubly stochastic matrix $S$ satisfying the equation $AS=SB$ is called a \emph{fractional isomorphism}
from $G$ to $H$. In particular, if $AS=SA$, then $S$ is a \emph{fractional automorphism} of~$G$.

Tinhofer \cite{Tinhofer86} calls a graph $G$ \emph{compact} if the polytope formed in $\reals^{n^2}$ by the
fractional automorphisms of $G$ is integral, that is, the extreme points of this polytope have integer coordinates.
If a compact graph $G$ is isomorphic to another graph $H$, then the
polytope of fractional isomorphisms from $G$ to $H$ is also integral.
On the other hand, this polytope contains no integral point at all if $G$ and $H$ are non-isomorphic.
This has an algorithmic consequence. If we know that a graph $G$ is compact, then we can decide
whether or not $G$ is isomorphic to any other given graph $H$ in polynomial time by using linear
programming. It suffices to compute an extreme point of the polytope of fractional isomorphisms
from $G$ to $H$ and check if it has integer coordinates.

While no efficient method is known in general for determining whether a graph $G$ is compact,
Schreck and Tinhofer \cite{SchreckT88} established a sufficient condition for the compactness
of circulant graphs. In our terminology, they showed that circulant graphs with saturated spectrum are compact.
Consequently, Part 2 of our Theorem \ref{thm:whp} immediately implies the following result.

\begin{corollary}\label{cor:compact}
Almost all circulant graphs are compact. More precisely,
for all three notions of a circulant graph, if $X$ is a uniformly random
circulant graph on $n$ vertices, then $X$ is compact with probability at least $1-n^{-1/2+o(1)}$.
\end{corollary}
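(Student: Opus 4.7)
The proof plan is to combine Part~2 of Theorem~\ref{thm:whp} with the Schreck--Tinhofer sufficient condition for compactness of circulant graphs \cite{SchreckT88}, which in our terminology states that every Cayley graph of $\bZ_n$ with saturated spectrum is compact. With this implication in hand, every high-probability lower bound on the event ``$X$ has saturated spectrum'' transfers verbatim to the same bound on the event ``$X$ is compact''.

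For the first notion of a random circulant (a uniformly random inverse-closed connection set $S\subseteq\bZ_n\setminus\{0\}$ defining the Cayley graph $\cay(\bZ_n,S)$), the argument is therefore immediate: Theorem~\ref{thm:whp}.2 says that $\cay(\bZ_n,S)$ has saturated spectrum with probability at least $1-n^{-1/2+o(1)}$, and by Schreck--Tinhofer the same lower bound holds for compactness.

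To handle the remaining two notions---random unlabeled circulants and random labeled circulants---I would use exactly the same reduction between probability models that the authors invoke to lift Theorem~\ref{thm:whp} from Cayley digraphs to the other two models (as outlined after Theorem~\ref{thm:whp} and based on \cite{BhoumikDM14,DobsonSV16,Muzychuk04}). Crucially, both ``saturated spectrum'' and ``compact'' are graph-isomorphism invariants, so the model-to-model transfer works identically for the two events; no new structural argument is needed beyond what is already carried out for Theorem~\ref{thm:whp}.

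The main obstacle is therefore already subsumed by the proof of Theorem~\ref{thm:whp}; the corollary itself requires only the black-box invocation of Schreck--Tinhofer together with this distributional transfer. The one point to double-check is that the reduction between the three models preserves probability bounds up to the $n^{o(1)}$ slack, but since this preservation has already been established for Theorem~\ref{thm:whp} and since both events in question are isomorphism-invariant, no additional work is needed here.
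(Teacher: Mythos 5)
Your proposal matches the paper's argument: the corollary is obtained exactly by combining the Schreck--Tinhofer result (saturated spectrum implies compact) with Part 2 of Theorem~\ref{thm:whp}, and the extension to unlabeled and labeled circulants is precisely the isomorphism-invariant transfer carried out for Theorem~\ref{thm:whp} itself via the transition lemmas of Section~\ref{s:main}. No gaps; this is essentially the same proof.
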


In \cite{Tinhofer91}, Tinhofer presented another, fairly surprising combinatorial approach to testing
whether two graphs are isomorphic provided one of them is compact.
The approach can be recast as a canonization algorithm and
was considered as such also by Immerman and Lander \cite{ImmermanL90}.
Specifically, they consider the following algorithmic procedure.

\medskip

\noindent\textsc{Naïve canonization}

\smallskip

\noindent\textsc{Input:} a graph $X$.
\begin{enumerate}
\item
  Set $\widetilde X=X$.
\item
  Run color refinement on $\widetilde X$ and denote the resulting colored graph by $\widehat X$.
\item
  If the vertex colors in $\widehat X$ are pairwise distinct, then output this coloring as a canonical labeling of $X$.
\item
  Otherwise
  \begin{enumerate}
  \item
    choose an arbitrary vertex $v$ in the non-singleton color class of $\widehat X$ with least\footnote{%
      The set of colors produced by color refinement is endowed with a natural order.
      Moreover, we can suppose that each vertex of $\widehat X$ is colored by an integer in $\{0,1,\ldots,n-1\}$;
      see Section~\ref{ss:CR}.}
    color;
\item
reset $\widetilde X$ to be $\widehat X$ with the vertex $v$ individualized;
\item
  repeat Step 2 again.
\end{enumerate}
\end{enumerate}

We say that the above procedure \emph{works correctly} on input $X$ if it produces a canonical labeling
of $X$ irrespectively of which vertex $v$ is chosen in any execution of Step 4(a).
Note that naïve canonization works correctly whenever it terminates after the first execution of Step 3,
which occurs for almost all graphs $X$ by \cite{BabaiES80}. In the general case, when Step 4 is executed
at least once, it is clear that a non-backtracking refinement-individualization procedure like this
cannot be expected to be correct. Indeed, naïve canonization obviously fails on
any regular but non-vertex-transitive graph $X$. All the more surprising, then, is the result established in \cite{Tinhofer91}
that the approach still works correctly on such a broad and naturally defined class of graphs as the class of compact graphs.

Combining this fact with Corollary \ref{cor:compact}, we obtain the following result.

\begin{corollary}\label{cor:tinhofer}
Naïve canonization works correctly for almost all circulant graphs. More precisely,
this holds for all but a fraction of $n^{-1/2+o(1)}$ of circulant graphs on $n$ vertices,
under any of the three models of circulant graphs.
\end{corollary}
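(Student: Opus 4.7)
The proof is a direct combination of Corollary~\ref{cor:compact} with Tinhofer's theorem from~\cite{Tinhofer91} stating that naïve canonization works correctly on every compact graph. The plan is simply to invoke these two results in sequence; essentially no new probabilistic analysis is required, since all the randomness has already been absorbed into Corollary~\ref{cor:compact}.

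More concretely, I would first cite~\cite{Tinhofer91} (whose algorithmic reading is also discussed in~\cite{ImmermanL90}) to establish the deterministic fact that, whenever the input $X$ is a compact graph, the procedure of Section~\ref{s:proof-overview} produces a canonical labeling of $X$ irrespective of which vertex is chosen in any execution of Step~4(a). The underlying reason is that compactness forces every fractional isomorphism from $X$ to an isomorphic target $Y$ to be a convex combination of genuine isomorphisms, and this property is preserved when one individualizes an arbitrary vertex in a non-singleton color class produced by color refinement; this is precisely what powers the non-backtracking Steps~2--4.

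Next, I would instantiate this deterministic statement with a uniformly random circulant graph $X$ on $n$ vertices. By Corollary~\ref{cor:compact}, the event that $X$ is compact has probability at least $1-n^{-1/2+o(1)}$ under each of the three distributions on $n$-vertex circulant graphs considered in the paper. On this event, naïve canonization works correctly on $X$, so the failure probability is bounded by the complementary event, yielding the claimed $n^{-1/2+o(1)}$ bound uniformly across the three models.

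The only point that requires even minor attention --- and hence the closest thing to a technical obstacle --- is verifying that the particular formulation of naïve canonization given in Section~\ref{s:proof-overview}, with the deterministic tie-breaking rule ``least-color non-singleton class, then arbitrary vertex'' in Step~4(a), falls within the class of procedures for which~\cite{Tinhofer91} proves correctness on compact graphs. This is routine: the correctness proof in~\cite{Tinhofer91} treats the individualized vertex as genuinely arbitrary within its color class, so any deterministic tie-breaking, including the one used here, is admissible. With that observation in place, the corollary follows.
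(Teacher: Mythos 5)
Your proposal is correct and follows exactly the paper's argument: Corollary~\ref{cor:tinhofer} is obtained by combining Tinhofer's theorem from~\cite{Tinhofer91} (naïve canonization works correctly on every compact graph) with Corollary~\ref{cor:compact}, which already carries the $n^{-1/2+o(1)}$ bound for all three models. No further comment is needed.
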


Note that the class of graphs on which naïve canonization succeeds is strictly larger than the class
of compact graphs; see \cite{ArvindKRV17}. This remains true also when we focus on circulant graphs.
Indeed, Schreck and Tinhofer \cite{SchreckT88} proved that a non-empty and non-complete circulant graph with a prime number of vertices
is compact if and only if it has saturated spectrum. Remarkably, Kluge~\cite{Kluge24} showed that
naïve canonization works correctly on \emph{every} circulant graph with a prime number of vertices.

Finally, we remark that naïve canonization makes sense only if we are a priori confident in its correctness.
Corollary \ref{cor:tinhofer} is quite constructive in this regard,
as the sufficient condition of Schreck and Tinhofer \cite{SchreckT88}---having a saturated spectrum---is
verifiable in polynomial time. Moreover, it can be shown by a direct argument that naïve canonization
works correctly on every walk-saturated circulant graph.

\subsection{Organization of the rest of the paper}
The proof of our Main Theorem (Theorem \ref{thm:main}) spans Sections \ref{s:wm-cr}--\ref{s:main}.
Section \ref{s:wm-cr} begins with preliminaries and then provides
a detailed description of our canonical labeling algorithm for circulant graphs with saturated spectrum.
Lemmas \ref{lem:simple} and \ref{lem:saturated} are proved in Section \ref{s:spectra},
as special cases of a more general result stated there as Lemma~\ref{lem:spectrum}.
Theorem \ref{thm:whp} is proved in Section \ref{s:whp}. This establishes Theorem \ref{thm:main}
for Cayley (di)graphs over~$\bZ_n$.

The proof of Theorem \ref{thm:main} is completed in Section~\ref{s:main},
which is devoted to the other two models of a random circulant. In this section,
we establish two ``transition'' lemmas. Specifically, Lemma \ref{lem:u} allows us to extend
the probability bound of Theorem \ref{thm:whp} from random Cayley (di)graphs to random
\emph{unlabeled} circulants, while Lemma \ref{lem:l} further extends this bound to random \emph{labeled} circulants.

Finally, Theorem \ref{thm:CCR} is proved in Section~\ref{s:wl2}.

\section{The walk matrix and color refinement}\label{s:wm-cr}

\subsection{Definitions and a relationship}\label{ss:CR}

Speaking of a directed graph or, for short, \emph{digraph} $G=(V,E)$,
we always assume that $G$ is loopless, that is, the adjacency relation $E\subset V^2$
is irreflexive. Without loss of generality
we suppose that $G$ is defined on the vertex set $V=\{0,1,\ldots,n-1\}$.
If $E$ is symmetric, $G$ is referred to as an (undirected) \emph{graph}.
The definitions given below for digraphs apply, as a special case, also to graphs.

For $t\in V$, we write $G_t$ to denote the digraph $G$ with distinguished vertex $t$.
The vertex $t$ is called \emph{terminal} or \emph{individualized}.
We consider $G_t$ to be a vertex-colored digraph where all vertices are equally colored
with the exception of $t$ which has a special color. An isomorphism from $G_t$ to
another vertex-individualized digraph $H_u$
is defined as a digraph isomorphism from $G$ to $H$ taking $t$ to~$u$.

A \emph{walk} in $G$ is a sequence of vertices $x_0x_1\ldots x_k$ such that
$(x_i,x_{i+1})\in E$ for every $0\le i<k$. We say that $x_0x_1\ldots x_k$
is a walk of length $k$ from $x_0$ to $x_k$. Note that a one-element sequence $x_0$
is a walk of length $0$. Given a digraph $G_t$ with terminal vertex $t$,
we define its \emph{walk matrix} $W_t=(w_{x,k})_{0\le x,k<n}$ by setting $w_{x,k}$
to be the number of walks of length $k$ from $x$ to $t$. Let
$$
W_t(x)=(w_{x,0},w_{x,1},\ldots w_{x,n-1})
$$
be the row of $W_t$ corresponding to the vertex $x$. If $\phi$ is an isomorphism from $G_t$ to $H_u$,
then clearly $W_u(\phi(x))=W_t(x)$. This means that $W_t(x)$ can be used as a canonical label
for a vertex $x$ in $G_t$. We call $G_t$ \emph{walk-discrete} if $W_t(x)\ne W_t(x')$ for all $x\ne x'$.
Thus, the walk matrix yields a canonical labeling for the class of walk-discrete digraphs
with an individualized vertex.

As it was mentioned in Section \ref{s:proof-overview}, the walk matrix is efficiently computable by
linear algebraic operations. For walk-discrete digraphs, the corresponding canonical labeling can also be
obtained combinatorially via the \emph{color refinement} algorithm (CR), which we now describe formally.

Given a vertex-colored digraph $G$ with initial coloring $C_0$, CR iteratively computes new colorings
\begin{equation}
  \label{eq:ref-step}
C_{i+1}(x)=\of{C_i(x),\Mset{C_i(y)}_{y\in N(x)}},  
\end{equation}
where $\Mset{}$ denotes a multiset and $N(x)=\Set{y}{(x,y)\in E}$ is the out-neighborhood of $x$.
For each $i$, the coloring $C_i$ is canonical, i.e., isomorphism-invariant.
Indeed, an easy induction on $i$ shows that if $\phi$ is a (color-preserving) isomorphism from $G$ to
a vertex-colored graph $H$, then
\begin{equation}
  \label{eq:col-canon}
C_i(\phi(x))=C_i(x).
\end{equation}

The color classes of $C_{i+1}$ refine the color classes of $C_i$:
if $C_i(x)\ne C_i(x')$, then $C_{i+1}(x)\ne C_{i+1}(x')$.
The algorithm terminates after performing $n$ refinement rounds, where $n$ is the number
of vertices in $G$, and outputs the coloring $C_n$. Note that the color partition becomes
stable by this point.

A relationship between CR and the walk counts was observed in \cite{PowersS82}.
We use the following adaptation of this result for our purposes.

\begin{lemma}\label{lem:CR}
  Let $t\in V(G)$ and $C_0$ be the vertex coloring associated with $G_t$, that is, $C_0(x)=C_0(x')$
and $C_0(x)\ne C_0(t)$ for all $x\ne t$ and $x'\ne t$.
  If $w_{x,k}\ne w_{x',k}$, then $C_k(x)\ne C_k(x')$.
\end{lemma}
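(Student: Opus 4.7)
The plan is to argue by induction on $k$, in the contrapositive form: I will show that $C_k(x)=C_k(x')$ implies $w_{x,k}=w_{x',k}$, from which the stated implication follows.

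For the base case $k=0$, a walk of length zero from $x$ is just the one-element sequence $x$ itself, so $w_{x,0}$ equals $1$ when $x=t$ and $0$ otherwise. Because $C_0$ separates $t$ from every other vertex and does nothing more, a one-line check yields the equivalence $C_0(x)=C_0(x')\Longleftrightarrow w_{x,0}=w_{x',0}$, which in particular covers what is needed.

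The inductive step is the only substantive part. I would combine two ingredients. First, the standard walk-count recurrence
\[
w_{x,k+1}=\sum_{y\in N(x)} w_{y,k},
\]
which comes from the unique factorisation of a walk of length $k+1$ from $x$ to $t$ as an initial edge $(x,y)\in E$ followed by a walk of length $k$ from $y$ to $t$. Second, the refinement rule \eqref{eq:ref-step}: if $C_{k+1}(x)=C_{k+1}(x')$, then the multisets of $C_k$-colors on $N(x)$ and $N(x')$ coincide, so a color-preserving bijection $\pi\colon N(x)\to N(x')$ exists. Applying the inductive hypothesis along $\pi$ promotes the multiset identity to the pointwise identity $w_{y,k}=w_{\pi(y),k}$; summing over $y\in N(x)$ and invoking the recurrence gives $w_{x,k+1}=w_{x',k+1}$, completing the induction.

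I do not anticipate any real obstacle: the content is simply that each refinement round records the information revealed by one more multiplication of the adjacency matrix against $\chi_{\{t\}}$. The only point worth noting is that the argument is phrased with the out-neighborhood $N(x)$ and walks starting at $x$, so it applies uniformly in the directed setting and, as a special case, to undirected circulant graphs.
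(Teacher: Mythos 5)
Your proposal is correct and follows essentially the same route as the paper: an induction on $k$ proving the contrapositive ($C_k(x)=C_k(x')\Rightarrow w_{x,k}=w_{x',k}$), with the base case by definition and the inductive step combining the multiset equality of $C_k$-colors on out-neighborhoods with the recurrence $w_{x,k+1}=\sum_{y\in N(x)}w_{y,k}$. No gaps.
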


\begin{proof}
  Using the induction on $k$, we prove that $w_{x,k}=w_{x',k}$ whenever $C_k(x)=C_k(x')$.
  In the base case of $k=0$, the equalities $w_{x,0}=w_{x',0}$ and $C_0(x)=C_0(x')$
  are equivalent by definition. Assume that $C_k(y)=C_k(y')$ implies $w_{y,k}=w_{y',k}$
  for all $y$ and $y'$. Let $C_{k+1}(x)=C_{k+1}(x')$. By the definition of the refinement step,
  we have $\Mset{C_k(y)}_{y\in N(x)}=\Mset{C_k(y)}_{y\in N(x')}$.
  Using the induction assumption, from here we derive the equality
  $\Mset{w_{y,k}}_{y\in N(x)}\allowbreak=\Mset{w_{y,k}}_{y\in N(x')}$.
  The equality $w_{x,k+1}=w_{x',k+1}$ now follows by noting that
  $w_{x,k+1}=\sum_{y\in N(x)}w_{y,k}$.
\end{proof}

Lemma \ref{lem:CR} shows that a canonical labeling of a walk-discrete digraph $G_t$
can be obtained by running CR on $G_t$ rather than by directly computing the walk matrix,
and we conclude this subsection by commenting on the efficiency of~CR.

The inductive definition \refeq{ref-step} leads to an exponential increase in the size of the color encoding.
To prevent this, colors are renamed after each refinement step.
In this way, we never need more than $n$ distinct color names.
To encode the colors after each refinement round, we can use, for example, the first $n$ non-negative integers in binary.
This allows us, in the next subsection, to refer to the \emph{least} color having a certain property.
Note that, once the renaming rule is fixed, the modified coloring remains canonical in the sense of
Equality~\refeq{col-canon}.

Finally, recall that CR can be implemented in time $O(n^2\log n)$ \cite{BerkholzBG17,CardonC82,ImmermanL90},
while preserving the canonicity of the final coloring with respect to graph isomorphism; see~\cite{BerkholzBG17}.

\subsection{Cayley graphs of a cyclic group}\label{ss:canon}

Let $\bZ_n$ denote a cyclic group of order $n$. More specifically, we let $\bZ_n=\Set{0,1,\ldots,n-1}$
and consider the addition modulo $n$ on this set.
The \emph{Cayley digraph} $X=\cay(\bZ_n,S)$ is defined by a \emph{connection set}
$S\subseteq\bZ_n\setminus\{0\}$ as follows: $V(X)=\bZ_n$ and $(x,y)\in E(X)$ if and only if
$y-x\in S$. Note that $S=N(0)$, the out-neighborhood of $0$.
If $S$ is \emph{inverse-closed}, i.e., $S=-S$, then $E(X)$ is symmetric and we speak of a \emph{Cayley graph}.
Cayley (di)graphs of $\bZ_n$ are also called \emph{circulant (di)graphs} or \emph{circulants}.

For $u\in\bZ_n$, let $X_u$ be the vertex-individualized version of $X$.
Since $X$ is vertex-transitive, all $X_u$ are isomorphic to each other,
and we can speak about $X_0$ without loss of generality.
Clearly, in order to canonize $X$, it is sufficient to canonize $X_0$.
Therefore, the canonization method in the preceding subsection applies to any
Cayley digraph $X=\cay(\bZ_n,S)$ provided that $X_0$ is walk-discrete.
We just have to individualize an arbitrary vertex of $X$ and then run~CR.

This method does not work for circulant \emph{graphs}.
Indeed, define $\rho\function{\bZ_n}{\bZ_n}$ by $\rho(x)=-x$.
If $S=-S$, then $\rho$ is an automorphism of $X=\cay(\bZ_n,S)$
and, hence, $W_0(\rho(x))=W_0(x)$. This implies that the walk matrix $W_0$
has at most $\lceil(n+1)/2\rceil$ different rows, and $X_0$ cannot be walk-discrete.
If this maximum is attained, we call $X_0$ \emph{walk-saturated}.

\begin{lemma}\label{lem:w-saturated}
Let $X=\cay(\bZ_n,S)$, where $S=-S$, and suppose that $X_0$ is walk-saturated.
Fix $u\ne0$ such that $u\ne n/2$ if $n$ is even. Then
$$
(W_0(x),W_u(x))\ne(W_0(y),W_u(y))
$$
for any two different vertices $x$ and $y$ of~$X$.
\end{lemma}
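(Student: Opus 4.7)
The plan is to combine two facts: the translation-invariance of walk counts in a Cayley graph, and the characterization of walk-saturation as saying that $W_0(x)=W_0(y)$ forces $y=\pm x$. Both are easy observations that together reduce the statement to a short case analysis in $\bZ_n$.

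First I would record the translation identity $W_u(x)=W_0(x-u)$. The map $z\mapsto z-u$ is an automorphism of the underlying unlabeled Cayley graph (it preserves all differences between consecutive vertices of any walk), so it is a bijection between walks of length $k$ from $x$ to $u$ and walks of length $k$ from $x-u$ to $0$. This gives the identity for every $k$, hence for the whole row.

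Next I would take the contrapositive. Assume $(W_0(x),W_u(x))=(W_0(y),W_u(y))$. By the translation identity this becomes the pair of equalities
\[
W_0(x)=W_0(y), \qquad W_0(x-u)=W_0(y-u).
\]
Walk-saturation of $X_0$ means that the rows of $W_0$ are pairwise distinct except for the coincidences forced by the automorphism $\rho(z)=-z$; equivalently, $W_0(a)=W_0(b)$ if and only if $b\in\{a,-a\}$. Applying this to both equalities above yields $y\in\{x,-x\}$ and $y-u\in\{x-u,-(x-u)\}$.

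The conclusion $y=x$ now follows from a four-case check. The only genuinely restrictive case is $y=-x$ together with $y-u=-(x-u)$, which simplifies to $2u\equiv 0\pmod n$, i.e.\ $u\in\{0,n/2\}$, contradicting the hypothesis on $u$. In the remaining cases one gets either $y=x$ directly, or $y=-x$ combined with $2x\equiv 0\pmod n$; but $2x\equiv0$ forces $x\in\{0,n/2\}$, so $-x=x$ in $\bZ_n$ and again $y=x$. I do not expect any real obstacle here: the entire argument is a translation trick followed by solving a single linear equation in $\bZ_n$, and the hypothesis $u\notin\{0,n/2\}$ is used precisely and only to kill the offending case.
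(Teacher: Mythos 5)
Your proposal is correct and follows essentially the same route as the paper: the translation identity $W_u(x)=W_0(x-u)$, the walk-saturation characterization $W_0(a)=W_0(b)\iff b\in\{a,-a\}$, and the observation that the only problematic case forces $2u=0$, which the hypothesis on $u$ excludes. The paper's proof is just a slightly compressed version of your case analysis (it assumes $x\ne y$ up front and so skips the harmless cases), so there is nothing to add.
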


\begin{proof}
Since $X_0$ is walk-saturated, the equality $W_0(x)=W_0(y)$ for $x\ne y$
is possible only if $y=\rho(x)$, i.e., $y=-x$ in $\bZ_n$. Note that
$W_u(x)=W_0(x-u)$. Therefore, the equality $W_u(x)=W_u(y)$ implies
that $y=2u-x$. The equalities $y=-x$ and $y=2u-x$ can be fulfilled simultaneously
only if $2u=0$, which is excluded.
\end{proof}

Lemma \ref{lem:w-saturated} justifies the correctness of the following algorithm
for the class of walk-saturated circulant graphs.

\medskip

\noindent\textsc{Canonical labeling algorithm}

\smallskip

\noindent\textsc{Input:} a circulant graph $X$.
\begin{enumerate}
\item
  Individualize an arbitrary vertex of $X$. By vertex-transitivity, we can
  without loss of generality assume that the individualized vertex is~$0$.
\item
  Run CR on $X_0$. Let $C$ be the obtained coloring of the vertex set.
\item
  Let $c$ be the least color such that there are exactly two vertices
  $u_1$ and $u_2$ with $C(u_1)=C(u_2)=c$. If such $c$ does not exist, then give up.
  Let $u$ be any of $u_1$ and $u_2$. Individualize $u$ in~$X$.
\item
  Run CR on $X_u$. Let $C'$ be the obtained coloring.
\item
  To each vertex $x$, assign the label $L(x)=(C(x),C'(x))$.
\item
Check that all labels $L(x)$ are pairwise distinct. If not, then give up.
\end{enumerate}

For each circulant input graph, our canonization algorithm either produces
a vertex labeling $L$ or explicitly gives up (doing always the same
for isomorphic inputs). The labeling $L$ is canonical because it does not
depend on which vertex is individualized in Step 1 (by vertex-transitivity)
nor in Step 3 (because $u_1$ and $u_2$ are interchangeable by an automorphism
of $X_0$). Lemma \ref{lem:w-saturated} ensures that the algorithm succeeds whenever $X_0$ is walk-saturated,
and this will allow us to estimate the success probability based on
Lemma \ref{lem:saturated} and Theorem~\ref{thm:whp}.

Finally, we remark that if the algorithm is run on a non-circulant input graph
and outputs a vertex labeling, then this labeling does not need to be canonical.
To make it canonical in all cases, Steps 1 and 3 have to be performed for
all possible individualized vertices, which can yield $2n$ different labelings
$L_1,\ldots,L_{2n}$. Of all these candidate labelings, the algorithm chooses
that which yields the isomorphic copy of $X$ with lexicographically least adjacency
matrix. The running time of this algorithm variant is $O(n^3\log n)$.
The similar modification is as well possible in the case of digraphs.

\section{The walk matrix and the spectrum of circulants}\label{s:spectra}

\subsection{The spectrum of a circulant}\label{ss:spectrum}

Let $A=(a_{ij})$ be the adjacency matrix of a circulant digraph $X$,
that is, $A$ is the 0--1 matrix whose rows and columns are indexed
by the elements $0,1,\ldots,n-1$ of $\bZ_n$ such that $a_{ij}=1$ exactly when
$(i,j)\in E(X)$, that is, $j-i\in S$. Note that $A$ is a \emph{circulant matrix},
which means that the $(i+1)$-th row of $A$ is obtained from its $i$-th row by the cyclic shift
in one element to the right.

Let $\omega$ be an $n$-th root of unity, i.e., $\omega\in\bC$ and $\omega^n=1$.
For the vector $V_\omega=(1,\omega,\omega^2,\ldots,\omega^{n-1})^\top$,
the definition of a circulant matrix easily implies the equality
$$
A\,V_\omega=\of{a_0+a_1\omega+a_2\omega^2+\cdots+a_{n-1}\omega^{n-1}}\,V_\omega=\of{\sum_{j\in S}\omega^j}V_\omega,
$$
where $(a_0,a_1,a_2,\ldots,a_{n-1})$ is the first row of $A$, that is,
the characteristic vector of $S\subset\bZ_n$.
We conclude that $V_\omega$ is an eigenvector of $A$ corresponding to the
eigenvalue $\lambda_{\omega,S}=\sum_{j\in S}\omega^j$.

Now, let $\omega=\zeta_n$ be a primitive $n$-th root of unity. To be specific, we fix $\zeta_n=e^{-2\pi\mathrm{i}/n}$.
The $n$ column vectors $V_\omega$ for $\omega=\zeta_n^0,\zeta_n^1,\zeta_n^2,\ldots,\zeta_n^{n-1}$ form
a Vandermonde matrix with non-zero determinant. It follows that these $n$ vectors are linearly independent
and, therefore, $\lambda_{\zeta_n^0,S},\lambda_{\zeta_n^1,S},\ldots,\lambda_{\zeta_n^{n-1},S}$
is the full spectrum of $A$. The $i$-th eigenvalue in this sequence will be below denoted by
\begin{equation}\label{eq:liS}
\lambda_{i,S}=\sum_{j\in S}\zeta_n^{ij}.  
\end{equation}

\subsection{Discrete Fourier transform}

Let $\bC^{\bZ_n}$ denote the vector space of all functions from ${\bZ_n}$
to the field of complex numbers $\bC$ with pointwise addition and pointwise scalar multiplication.
The pointwise multiplication on $\bC^{\bZ_n}$ will be denoted by $\circ$.
Another way to introduce a product on $\bC^{\bZ_n}$ is to consider the convolution
$\alpha*\beta$ of two functions $\alpha,\beta\function {\bZ_n}{\bC}$, which is defined by
$(\alpha*\beta)(x)=\sum_{y\in {\bZ_n}}\alpha(x-y)\beta(y)$ for each $x\in {\bZ_n}$.
Both $\circ$ and $*$ are bilinear and, therefore, both $(\bC^{\bZ_n},\circ)$ and $(\bC^{\bZ_n},*)$
are $n$-dimensional algebras over $\bC$.
The algebra $(\bC^{\bZ_n},*)$ can alternatively be seen as the \emph{group algebra} of
$\bZ_n$ over $\bC$ and, as such, it is semisimple by Maschke's theorem; see, e.g., \cite[Section 7.1]{DrozdK94}.
Like any two $n$-dimensional commutative semisimple $\bC$-algebras, the algebras
$(\bC^{\bZ_n},*)$ and $(\bC^{\bZ_n},\circ)$ are isomorphic (see \cite[Corollary 2.4.2]{DrozdK94}).
We now describe an explicit algebra isomorphism from $(\bC^{\bZ_n},*)$ to $(\bC^{\bZ_n},\circ)$.

For $T\subseteq {\bZ_n}$, let $\chi_T\in\bC^{\bZ_n}$ be the characteristic function of $T$.
In particular, $\chi_{\bZ_n}$ is the identically one function.
For $x\in {\bZ_n}$, we set $\delta_x=\chi_{\{x\}}$.

The \emph{discrete Fourier transform (DFT)} is the linear operator $\fourier\function{\bC^{\bZ_n}}{\bC^{\bZ_n}}$
mapping a function $\alpha\function{\bZ_n}{\bC}$ into the function $\fourier(\alpha)\function{\bZ_n}{\bC}$ defined by
\begin{equation}\label{eq:dft}
\fourier(\alpha)(i)=\sum_{j=0}^{n-1}\zeta_n^{ij}\alpha(j).  
\end{equation}
In the standard basis $\delta_0,\delta_1,\ldots,\delta_{n-1}$,
the DFT is represented by the matrix $F=(\zeta_n^{ij})_{i,j\in\bZ_n}$.
Since $F$ is the familiar Vandermonde matrix with non-zero determinant,
the map $F$ is a linear isomorphism from $\bC^{\bZ_n}$
onto itself. It is well known and easy to derive from the definitions that
\begin{equation}\label{eq:op-op}
 \fourier(\alpha*\beta)=\fourier(\alpha)\circ\fourier(\beta). 
\end{equation}

\subsection{The rank of the walk matrix}\label{ss:rank}

We are now prepared to derive Lemmas \ref{lem:simple} and \ref{lem:saturated}
from the following more general fact.

\begin{lemma}\label{lem:spectrum}
$X=\cay(\bZ_n,S)$ has exactly $\rank W_0$ distinct eigenvalues, where $W_0$ is the walk matrix of~$X_0$.  
\end{lemma}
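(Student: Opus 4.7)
The plan is to identify $\rank W_0$ with $\dim\mathcal{K}$, where $\mathcal{K}=\mathrm{span}\{A^k\delta_0 : 0\le k\le n-1\}$ is the span of the columns of $W_0$ viewed as iterates of the adjacency matrix $A$ of $X$ on $\delta_0$, and then compute $\dim\mathcal{K}$ via the circulant spectral decomposition recalled in Section~\ref{ss:spectrum}.

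First I would observe that the $k$-th column of $W_0$ is precisely $A^k\delta_0$, since the number of walks of length $k$ from $x$ to $0$ equals $(A^k)_{x,0}$. Thus $\rank W_0=\dim\mathcal{K}$, and it remains to determine $\dim\mathcal{K}$.

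Next I would use the Fourier eigenbasis: set $V_i:=V_{\zeta_n^i}=(1,\zeta_n^i,\ldots,\zeta_n^{(n-1)i})^\top$ for $i=0,\ldots,n-1$, which is an orthogonal eigenbasis of $A$ with corresponding eigenvalues $\lambda_{i,S}$. Since every $V_i$ has zeroth entry equal to $1$, a direct computation (equivalently, orthogonality of the Fourier basis) yields $\delta_0=\frac{1}{n}\sum_{i=0}^{n-1}V_i$. Grouping this expansion by the distinct eigenvalues $\mu_1,\ldots,\mu_r$ of $A$ and setting $I_j=\{i : \lambda_{i,S}=\mu_j\}$, I would rewrite it as $\delta_0=w_1+\cdots+w_r$, where $w_j=\frac{1}{n}\sum_{i\in I_j}V_i$ lies in the $\mu_j$-eigenspace of $A$. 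The crucial point is that each $w_j$ is \emph{nonzero}, which follows from $I_j\neq\emptyset$ together with the linear independence of the $V_i$. Consequently $A^k\delta_0=\sum_{j=1}^{r}\mu_j^k w_j$ stays inside the $r$-dimensional subspace $\bigoplus_{j=1}^{r}\bC w_j$, so $\dim\mathcal{K}\le r$.

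To finish, I would apply the standard Vandermonde argument: the $r\times r$ matrix $(\mu_j^k)_{1\le j\le r,\,0\le k\le r-1}$ has nonzero determinant because the $\mu_j$ are pairwise distinct, so $\delta_0,A\delta_0,\ldots,A^{r-1}\delta_0$ are linearly independent (after projection onto the basis $w_1,\ldots,w_r$), yielding $\dim\mathcal{K}\ge r$ and hence equality. The only nonformal ingredient is the verification that $\delta_0$ has a nonzero component in every eigenspace of $A$, and I expect this to be the main ``obstacle'' in the sense that it is the single place where the individualized vertex enters the argument: it is essential that the zeroth coordinate of every Fourier eigenvector equals $1$, and this is exactly what forces the walk matrix of $X_0$ to attain its maximum possible rank, namely the number of distinct eigenvalues of~$X$.
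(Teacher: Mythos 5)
Your proof is correct, and it is a genuine (if close) variant of the paper's argument rather than a transcription of it. Both proofs are spectral and hinge on the same two facts --- the circulant eigenbasis $V_{\zeta_n^i}$ and a Vandermonde nonsingularity argument --- but they are organized differently. You view the columns of $W_0$ as the Krylov sequence $A^k\delta_0$ and expand the cyclic-type vector $\delta_0=\frac1n\sum_i V_i$ in the eigenbasis of $A$ itself; the key observation that every eigenspace component $w_j$ of $\delta_0$ is nonzero then gives $\rank W_0=r$ directly, with the Vandermonde matrix $(\mu_j^k)$ supplying the lower bound. The paper instead treats the columns as functions on $\bZ_n$, identifies them as convolution powers $\chi_{-S}^{*(k)}$, and applies the DFT to turn convolution into pointwise multiplication; this produces the eigenvalues of the \emph{transpose} digraph $\cay(\bZ_n,-S)$, so the paper needs an extra preliminary step (the automorphism of $\bQ(\zeta_n)$ sending $\zeta_n\mapsto\zeta_n^{-1}$) to see that $X$ and its transpose have the same number of distinct eigenvalues. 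Your arrangement avoids that transpose detour entirely, at the cost of having to argue explicitly that the $w_j$ are nonzero and linearly independent (nonzero vectors in distinct eigenspaces), which you do; the paper's version, in exchange, packages the whole computation into the algebra isomorphism $\fourier(\alpha*\beta)=\fourier(\alpha)\circ\fourier(\beta)$ that it has already set up and reuses elsewhere. One small point worth making explicit in your write-up: the subspace $\bigoplus_j\bC w_j$ is $r$-dimensional precisely because the $w_j$ are nonzero and belong to distinct eigenspaces of the diagonalizable matrix $A$, which is what legitimizes reading off linear independence of $\delta_0,A\delta_0,\ldots,A^{r-1}\delta_0$ from the coordinate Vandermonde.
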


\begin{proof}
Let $X'=\cay(\bZ_n,-S)$ be the transpose of the digraph $X$, i.e., the digraph obtained
from $X$ by reversing all arcs. We begin by noting that $X$ and $X'$ have the same number
of distinct eigenvalues. To see this, consider the automorphism $f$
of the cyclotomic field $\bQ(\zeta_n)$---the field obtained by adjoining $\zeta_n$ to the field of
rationals $\bQ$---which maps $\zeta_n$ to $\zeta_n^{-1}$ while fixing $\bQ$.
It suffices to observe that $f(\lambda_{i,S})=\lambda_{i,-S}$.
Denote the number of distinct eigenvalues of $X'$, and hence also of $X$, by~$R$.
  
A column vector $(a_0,a_1,\ldots,a_{n-1})^\top$ will be naturally identified with
the function $\alpha\in\bC^{\bZ_n}$ defined by $\alpha(x)=a_x$ for all $x\in\bZ_n$.
In this way, the columns of the walk matrix $W_0$ correspond to the functions
$\eta_0,\eta_1,\ldots,\eta_{n-1}$ where $\eta_k(x)=w_{x,k}$. 
Thus, the rank of $W_0$ is equal to the dimension of the linear subspace $U$
of $\bC^{\bZ_n}$ spanned by these functions. We, therefore, have to prove that $\dim U=R$.

Note that
\begin{multline*}
  \eta_{k+1}(x)=\sum_{y\in N(x)}\eta_k(y)=\sum_{y\in\bZ_n}\chi_S(y-x)\eta_k(y)\\
  =\sum_{y\in\bZ_n}\chi_{-S}(x-y)\eta_k(y)=(\chi_{-S}*\eta_k)(x).  
\end{multline*}
It follows that $\eta_0=\delta_0$, $\eta_1=\chi_{-S}$, $\eta_2=\chi_{-S}*\chi_{-S}$
and, generally, $\eta_k=\chi_{-S}^{*(k)}$ is the $(k-1)$-fold convolution of $k$
copies of the characteristic function $\chi_{-S}$ of the set~$-S$.

Let us apply the discrete Fourier transform $\fourier$.
Note that $\fourier(\delta_0)$ is the all-ones vector.
As easily seen from \refeq{liS} and \refeq{dft}, $\fourier(\chi_{-S})$ is
the vector whose entries are the eigenvalues $\lambda_{0,-S},\lambda_{1,-S},\ldots,\lambda_{n-1,-S}$
of the transpose $X'=\cay(\bZ_n,-S)$. Equality \refeq{op-op} readily
implies that the matrix formed by the column vectors $\fourier(\eta_0),\fourier(\eta_1),\ldots,\fourier(\eta_{n-1})$
has exactly $R$ distinct rows. Consequently, $\dim U=\dim\fourier(U)\le R$.
In fact, equality holds because the aforementioned matrix contains
a non-generate Vandermonde matrix of size $R\times R$ as a submatrix.
\end{proof}

Lemmas \ref{lem:simple} and \ref{lem:saturated} correspond, respectively, to the special cases
$\rank W_0=n$ and $\rank W_0=\lceil(n+1)/2\rceil$ of Lemma \ref{lem:spectrum}.
If $S=-S$, then $\rank W_0\le\lceil(n+1)/2\rceil$ due to the symmetry of the undirected graph $X=\cay(\bZ_n,S)$.
Lemma \ref{lem:spectrum} shows that $X$ can have at most $\lceil(n+1)/2\rceil$ distinct eigenvales\footnote{%
This can be seen also directly as Equality \refeq{liS} implies that
$\lambda_{a,S}=\lambda_{b,S}$ for $a\ne b$ whenever $a+b=n$.},
and that this maximum is attained exactly when 
$\rank W_0$ attains the same maximum value $\lceil(n+1)/2\rceil$.

\section{Proof of Theorem \ref{thm:whp}}\label{s:whp}

We set $\zeta_n=e^{-2\pi\mathrm{i}/n}$.
As discussed in Subsection \ref{ss:spectrum}, a circulant $X=\cay(\bZ_n,S)$
has eigenvalues $\lambda_0,\lambda_1,\ldots,\lambda_{n-1}$ where
$$
\lambda_a=\sum_{j\in S}\zeta_n^{aj}=\sum_{j=0}^{n-1}\chi_S(j)\zeta_n^{aj}.
$$
Let $\sigma_j=\chi_S(j)$. If $X$ is a random digraph, i.e., the connection set $S$ is chosen uniformly at random
among all subsets of $\bZ_n\setminus\{0\}$, then $\sigma_1,\sigma_2,\ldots,\sigma_{n-1}$ is a Bernoulli process,
that is, these $n-1$ random variables are independent and identically distributed
with $\sigma_j$ taking each of the values 0 and 1 with probability $1/2$.
If $X$ is a random graph, i.e., the connection set $S=-S$ is chosen randomly among all inverse-closed subsets,
then the values $\sigma_1,\sigma_2,\ldots,\sigma_{\lfloor n/2\rfloor}$
form a Bernoulli process, and the remaining values are determined by the equality $\sigma_j=\sigma_{n-j}$.
For each $a=0,1,\ldots,n-1$, the eigenvalue
\begin{equation}
  \label{eq:lambdaa}
\lambda_a=\sum_{j=1}^{n-1}\sigma_j\zeta_n^{aj}  
\end{equation}
becomes a random variable taking its values in the cyclotomic field~$\bQ(\zeta_n)$.

We will use the following observation.
As usually, $\phi(n)$ stands for Euler's totient function.

\begin{lemma}\label{lem:cyclotomic}
No two different subsets of $\Set{\zeta_n^j}{1\leq j\leq n/\ln n}$ have equal sums of their elements.
\end{lemma}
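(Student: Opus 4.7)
The plan is to convert the equality of two subset sums into a vanishing polynomial identity at $\zeta_n$ and then exploit the large algebraic degree of $\zeta_n$ over $\bQ$. Suppose, towards a contradiction, that two distinct subsets $A, B \subseteq \{1, 2, \ldots, \lfloor n/\ln n\rfloor\}$ satisfy $\sum_{j \in A}\zeta_n^j = \sum_{j \in B}\zeta_n^j$, and consider
$$
f(x) \;=\; \sum_{j \in A \setminus B} x^j \;-\; \sum_{j \in B \setminus A} x^j \;\in\; \bZ[x].
$$
Since $A \neq B$, the polynomial $f$ is nonzero, has coefficients in $\{-1,0,1\}$, no constant term, and degree at most $\lfloor n/\ln n\rfloor$. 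The assumed equality of sums is precisely $f(\zeta_n)=0$.

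Next, I would invoke the fact that the minimal polynomial of $\zeta_n$ over $\bQ$ is the $n$-th cyclotomic polynomial $\Phi_n$ of degree $\phi(n)$. Consequently, any nonzero polynomial vanishing at $\zeta_n$ is divisible by $\Phi_n$ and has degree at least $\phi(n)$. Combining this with the bound $\deg f \leq \lfloor n/\ln n\rfloor$ yields $\phi(n) \leq n/\ln n$, which is the inequality I then need to rule out.

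The main—and purely number-theoretic—obstacle is to certify that $\phi(n) > n/\ln n$ for the values of $n$ at hand. This is a routine consequence of the Rosser--Schoenfeld estimate $\phi(n) \geq n/(e^\gamma \ln\ln n + 3/\ln\ln n)$, valid for all $n \geq 3$, which beats $n/\ln n$ as soon as $\ln n > e^\gamma \ln\ln n + 3/\ln\ln n$, i.e.\ for all sufficiently large $n$. Since the lemma will be applied only in this asymptotic regime, the desired contradiction is immediate. I do not expect any conceptual obstacle beyond this standard totient bound; the content of the lemma is essentially the simple observation that $\{\zeta_n, \zeta_n^2, \ldots, \zeta_n^{\lfloor n/\ln n\rfloor}\}$ lies well within the $\bQ$-linear span's algebraic degree, so no short $\{-1,0,1\}$-combination can vanish.
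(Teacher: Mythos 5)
Your proof is correct and follows essentially the same route as the paper: both arguments reduce an equality of subset sums to a nonzero $\{-1,0,1\}$-combination of powers $\zeta_n^j$ with $j\le n/\ln n$ vanishing, which is impossible because $\zeta_n$ has degree $\phi(n)>n/\ln n$ over $\bQ$ (the paper phrases this via linear independence of $1,\zeta_n,\ldots,\zeta_n^{\phi(n)-1}$, you via the minimal polynomial $\Phi_n$ — the same fact). The only difference is that you certify $\phi(n)>n/\ln n$ only for sufficiently large $n$ via Rosser--Schoenfeld, which suffices for the lemma's use in the paper, where $n$ is taken large.
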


\begin{proof}
The known lower bounds for $\phi(n)$ (see, e.g., \cite[Thm.~8.8.7]{BachS96}) imply that
$\phi(n) > n/\ln n$ for $n\ge3$. The existence of two different subsets with equal sums would therefore yield
a non-trivial linear combinations with rational coefficient of 
$1,\zeta_n,\zeta_n^2,\ldots,\zeta_n^{\phi(n)-1}$, contradicting the fact that these numbers
form a basis of $\bQ(\zeta_n)$ considered as a vector space over~$\bQ$.
\end{proof}

Our overall strategy for proving Theorem \ref{thm:whp} will be to use the union bound
\begin{equation}
  \label{eq:union-bound}
\prob{\lambda_a=\lambda_b\text{ for some }0\le a,b\le n-1}\le
\sum_{0\le a,b\le n-1}\prob{\lambda_a=\lambda_b}    
\end{equation}
and to show that the right hand side is bounded by $n^{-1/2+o(1)}$.
The summation ranges over unequal $a$ and $b$;
in the undirected case, we additionally require $a+b\ne n$.
For a fixed pair $\{a,b\}$, we have to show that $\lambda_a=\lambda_b$ occurs
with sufficiently small probability. Using \refeq{lambdaa}, this equality can be rewritten as
\begin{equation}\label{eq:k=m}
\sum_{j=1}^{n-1}\sigma_j\zeta_n^{aj}=\sum_{j=1}^{n-1}\sigma_j\zeta_n^{bj}
\end{equation}
The basic idea is to ``distill'' a sufficiently large set of indices $J\subset\{1,\ldots,n-1\}$ such that
exposing all random variables $\sigma_j$ for $j\notin J$ converts \refeq{k=m} into an equality
\begin{equation}
  \label{eq:exposed}
\sum_{j\in J}\sigma_j\zeta_n^{aj}=\sum_{j\in J}\sigma_j\zeta_n^{bj}+\mathrm{const}  
\end{equation}
that can be satisfied by at most one assignment to the remaining random variables $\sigma_j$ for $j\in J$.
The last condition immediately implies the bound $\prob{\lambda_a=\lambda_b}=2^{-|J|}$,
which will be strong enough for our purposes. To justify that \refeq{exposed} has at most one satisfying assignment,
we will crucially rely on Lemma~\ref{lem:cyclotomic}.

Let
$$
\bU=\Set{z\in\bC}{|z|=1}
$$
denote the unit circle in the complex plane.
Lemma~\ref{lem:cyclotomic} will be applicable not only when $aj$ and $bj$ (modulo $n$) do not
exceed $n/\ln n$ for all $j\in J$---which would be technically difficult to ensure---but also whenever
the set $\Set{\zeta_n^{aj}}_{j\in J}\cup\Set{\zeta_n^{bj}}_{j\in J}$ is contained
within an arbitrary arc of $\bU$ of length $2\pi/\ln n$. In such cases, we can ``rotate''
this set by multiplying both sides of \refeq{exposed} by a suitable root of unity $\zeta_n^t$.
This will transform \refeq{exposed} into a linear combination of the complex numbers $\zeta_n^{aj+t}$ and
$\zeta_n^{bj+t}$ with exponents (modulo $n$) lying in the range covered by Lemma~\ref{lem:cyclotomic}.

The choice of a suitable set $J$ depends on specific properties of the pair $\{a,b\}$.
We divide all such pairs into three categories and describe an appropriate ``distillation''
of $J$ separately for each of the three cases. The relevant properties of a pair $\{a,b\}$
are expressed in terms of elementary number-theoretic parameters, which we now introduce.

Given $z\in\bC$, we write $\langle z\rangle$ to denote the cyclic subgroup of
the multiplicative group $\bC^\times$ generated by $z$. For an integer $c$, let $g(c)=|\langle\zeta_n^c\rangle|$
denote the order of the element $\zeta_n^c$ in the group $\langle\zeta_n\rangle$.
Note that $g(c)=n/\gcd(c,n)$. We also define
$$
c'=c/\gcd(c,n),
$$
noting that
\begin{equation}
  \label{eq:znzg}
  \zeta_n^{c}=\zeta_{g(c)}^{c'}.  
\end{equation}
The above notation will be used both for $c=a$ and $c=b$. We set
$$
g=g(a)\text{ and }h=g(b),
$$
and suppose, without loss of generality, that
$$
h\le g.
$$

Since $\gcd(a',g)=1$, the integer $a'$ can be regarded as an invertible element
of the ring $\bZ_g$. Let $r$ denote the inverse of $a'$ in $\bZ_g$, i.e., $r$ is the smallest
integer for which $ra'=1\pmod g$.

Note that $\langle\zeta_n^a\rangle=\langle\zeta_g\rangle$. Indeed, $\zeta_n^a\in\langle\zeta_g\rangle$
by \refeq{znzg}. On the other hand, \refeq{znzg} also implies that
$\zeta_g=\zeta_g^{ra'}=\zeta_n^{ar}$ belongs to $\langle\zeta_n^a\rangle$.
Set
\begin{equation}\label{eq:xi-eta-def}
\xi=\zeta_g=\zeta_n^{ar}\text{ and }\eta=\zeta_n^{br}.
\end{equation}
Thus, all $g-1$ non-unity elements of the multiplicative group $\langle\zeta_n^a\rangle=\langle\xi\rangle$
can be listed as
\begin{equation}
  \label{eq:xi}
\xi=\zeta_n^{ar},\ \xi^2=\zeta_n^{2ar},\ \ldots\ ,\xi^{g-1}=\zeta_n^{(g-1)ar}.
\end{equation}
It is useful to note that they appear in the left hand side of Equality \refeq{k=m}
for the indices $j=r,2r,\ldots,(g-1)r$, which are understood modulo~$n$.
The same positions in the right hand side of Equality \refeq{k=m} are occupied by
\begin{equation}
  \label{eq:eta}
\eta=\zeta_n^{br},\ \eta^2=\zeta_n^{2br},\ \ldots\ ,\eta^{g-1}=\zeta_n^{(g-1)br}.  
\end{equation}

The importance of the parameter $g$ stems from the fact that the elements of \refeq{xi},
together with the unity $\xi^{g}=1$, partition the unit circle $\bU$ into $g$ arcs, each of length $2\pi/g$.
When $g$ is large, this gives us better chances to obtain a linear combination \refeq{exposed}
such that the degrees $\zeta_n^{aj}$ involved in \refeq{exposed} are sufficiently close to one another on $\bU$,
allowing us to apply Lemma~\ref{lem:cyclotomic} as described above.

The minimum distance in $\bU$ between two distinct elements $\zeta_n^{brk_1}=\eta^{k_1}$
and $\zeta_n^{brk_2}=\eta^{k_2}$ is important by the same reason. This distance is equal to $2\pi/g'$ where
$$
g'=|\langle\eta\rangle|
$$
is the order of the element $\eta$ in $\langle\zeta_n\rangle$. Since $\langle\eta\rangle\subseteq\langle\zeta_n^b\rangle$,
we have
$$
g'\le h\le g.
$$
While all numbers in \refeq{xi} are distinct, the number of distinct numbers in the sequence \refeq{eta}
is equal to $g'$ if $g'<g$ and to $g'-1$ if $g'=g$.

The parameters $g$ and $g'$ are crucial for our analysis, and we partition all pairs $\{a,b\}$
into three groups based on the values of $g$ and $g'$. Recall that we assume
$g(b)\le g(a)$. In cases where $g(b)=g(a)$, we will further assume that $b<a$.
This additional assumption does not play any substantive role in the analysis
but allows us to treat $a,b$ as an ordered pair.

We now outline our argument for each of the three cases---Cases A, B, and C---described below.
After presenting the main ideas, we will proceed to a detailed proof, organized into
Claims A, B, and C, respectively.

\Case A{$g$ and $g'$ are large.}%
This is the most technically demanding case, and it is further divided into three subcases.
In the first two of them, we can find a sufficiently large set $K\subseteq\{1,\ldots,n-1\}$,
specifically of size $|K|=\Omega(\ln^2n)$, such that:
\begin{enumerate}[(i)]
\item
  the subsequence $(\xi^k)_{k\in K}$ of the sequence \refeq{xi} consists of distinct elements,
  and the same holds true for the corresponding subsequence $(\eta^k)_{k\in K}$ of~\refeq{eta};
\item
  the sets $\Set{\xi^k}_{k\in K}$ and $\Set{\eta^k}_{k\in K}$ are disjoint;
\item
  both sets are contained within a sufficiently short arc of $\bU$, specifically of length less than $1/\ln n$.
\end{enumerate}
We now argue that properties (i)--(iii) allow us to bound the probability of \refeq{k=m}
based on Lemma~\ref{lem:cyclotomic}.

Let us expose the values of all random variables $\sigma_j$ excepting $\sigma_{kr}$ with $k\in K$
(recall that the indices are considered modulo $n$). Equality \refeq{k=m} can now be written as
\begin{equation}
  \label{eq:c1c2}
c_1+\sum_{k\in K}\sigma_{kr}\xi^k=c_2+\sum_{k\in K}\sigma_{kr}\eta^k
\end{equation}
for some constants $c_1,c_2\in\bC$. In other words, we estimate the probability of the event that
\begin{equation}
  \label{eq:xieta}
c_1+\sum_{k\in K'}\xi^k=c_2+\sum_{k\in K'}\eta^k
\end{equation}
for a random subset $K'\subseteq K$. We show that this equality can be true for at most one~$K'$.

Indeed, assume that Equality \refeq{xieta} holds true for two different subsets $K'=K_1$ and $K'=K_2$ of $K$.
This implies that $\sum_{k\in K'}(\xi^k-\eta^k)=c_2-c_1$ for both $K'=K_1$ and $K'=K_2$ and, therefore,
\begin{equation}
  \label{eq:KK}
 \sum_{k\in K_1}\xi^k+\sum_{k\in K_2}\eta^k=\sum_{k\in K_2}\xi^k+\sum_{k\in K_1}\eta^k.
\end{equation}
By Conditions (i)--(ii), both sides of \refeq{KK} are sums of $|K_1|+|K_2|$ distinct numbers.
Since $K_1\ne K_2$, we can without loss of generality suppose that $K_1\not\subseteq K_2$.
Taking $k\in K_1\setminus K_2$, we see that the number $\xi^k$ occurs only in the left hand side of \refeq{KK}.
Thus, we have equality of two sums of $\zeta_n^j$ over different sets of indices $j$.
By multiplying both sides of Equation \refeq{KK} by a suitable $\zeta_n^t$, and using Condition (iii), we
can ``rotate'' these index sets modulo $n$ so that they lie within the interval $[1,n/\ln n]$.
This leads to a contradiction with Lemma~\ref{lem:cyclotomic}.

We conclude that Equality \refeq{xieta}, and therefore also Equality \refeq{k=m}, holds with probability at most
$2^{-|K|}=n^{-\Omega(\ln n)}$.

The third subcase of Case A, which relies on the same idea and is somewhat simpler,
is omitted from this outline (it appears as Case 3 in the proof of Claim \ref{cl:case1} below).

\Case B{$g$ is large, and $g'$ is small.}%
Since $\xi=\zeta_g=e^{-2\pi\mathrm{i}/g}$, the first $m$ elements
of the sequence \refeq{xi} are contained in an arc of $\bU$ of length $2\pi m/g$. The corresponding
elements of the sequence \refeq{eta} can take on at most $g'$ different values.  Therefore, there is
a set $K\subseteq\{1,\ldots,m\}$ of size
$$
|K|\ge m/g'
$$
such that if $k\in K$, then $\eta^k=\eta'$
for the same $\eta'$. Expose all $\sigma_j$ excepting $\sigma_{kr}$ for $k\in K$.
Similarly to Case A, Equality \refeq{k=m} converts into Equality \refeq{c1c2}
for some constants $c_1$ and $c_2$. This event can be recast as
\begin{equation}
  \label{eq:xietaB}
  c_1+\sum_{k\in K'}\xi^k=c_2+|K'|\eta'
\end{equation}
for a random subset $K'\subseteq K$. By Chernoff's bound, the size of $K'$ is concentrated
in the interval $\frac12|K|-|K|^{2/3}<|K'|<\frac12|K|+|K|^{2/3}$ with probability no less
than $1-2\exp(-|K|^{1/3})$. Fix an integer $m'$ such that
$\frac12|K|-|K|^{2/3}<m'<\frac12|K|+|K|^{2/3}$ and consider the event \refeq{xietaB}
conditioned on $|K'|=m'$. Under this condition, Equality \refeq{xietaB} reads
\begin{equation}
  \label{eq:c2metac1}
\sum_{k\in K'}\xi^k=c_2+m'\eta'-c_1.
\end{equation}
We choose $m$ such that
\begin{equation}
  \label{eq:m-large}
2\pi m/g<1/\ln n  
\end{equation}
while $m/g'$, and hence $|K|$, is large.
Condition \refeq{m-large} ensures that Lemma~\ref{lem:cyclotomic} applies, implying that
Equality \refeq{c2metac1} holds for at most one subset $K'\subseteq K$.
It follows that this equality holds with probability at most $1/{|K|\choose m'}$.
We therefore conclude that Equality \refeq{k=m} holds with probability at most
\begin{equation}
  \label{eq:bin-exp}
1/{|K|\choose |K|/2+|K|^{2/3}}+2\exp\of{-|K|^{1/3}}.  
\end{equation}
The specification of ``large'' $g$ and ``small'' $g'$ in the formal argument below ensures
that this probability is subexponentially small.

\Case C{$g$ is small.}%
This case can be treated without invoking Lemma~\ref{lem:cyclotomic}.
Let $J=\Set{j<n}{\zeta_n^{aj}=\xi}$, and note that $|J|=n/g$.
The set $J$ contains a subset $J'$ of size $|J'|\ge|J|/h\ge n/g^2$
such that the values $\zeta_n^{bj}$ are all equal for $j\in J'$.
That is, for all $j\in J'$, we have $\zeta_n^{aj}=\xi$ and $\zeta_n^{bj}=\eta'$
where $\eta'$ is the same $h$-th root of unity. Moreover, $J'\subseteq J$ can be chosen so that
$\xi\ne\eta'$; see the proof of Claim \ref{cl:case3} below for details.

Let us expose all random variables $\sigma_j$ except those for $j\in J'$.
Equality \refeq{k=m} then implies that
$$
\sum_{j\in J'}\sigma_j(\xi-\eta')=c
$$
for a constant $c\in\bC$ and, therefore, the sum $\sum_{j\in J'}\sigma_j$ evaluates to a constant value.
The probability of the last event is bounded by
${|J'|\choose\lfloor|J'|/2\rfloor}2^{-|J'|} \le |J'|^{-1/2} \le gn^{-1/2}$.
Although this bound is weaker than those obtained in Cases A and B,
it is still sufficient for use with the union bound \refeq{union-bound}
because, as we will see, the number of pairs $a,b$ covered by Case C is relatively small.

\bigskip

After this outline, we proceed to the detailed proof.
We begin by noting that the numbers $\xi$ and $\eta$, defined by \refeq{xi-eta-def},
are distinct. Indeed, the equality $\xi=\eta$ would imply $h=g$.
By \refeq{znzg}, this leads to
$$
\zeta_g^{a'r}=\zeta_n^{ar}=\zeta_n^{br}=\zeta_h^{b'r}=\zeta_g^{b'r},
$$
which holds only if $r(a'-b')$ is divisible by $g$. Since $r$ and $g$ are coprime,
we conclude that $a'-b'$ must be divisible by $g$, which is possible only when $a'=b'$,
contradicting the assumption that $a\ne b$.

Let $\varepsilon>0$ be an arbitrarily small constant.
Once this parameter is fixed, we assume that $n$ is sufficiently large.
We divide the remainder of the proof into Claims \ref{cl:case1}, \ref{cl:case2}, and \ref{cl:case3},
corresponding to Cases A, B, and C discussed above.
We present a detailed argument for digraphs (Part 1 of the theorem),
which, with minor modifications, also applies to graphs (Part 2).
We comment on these modifications at the end of the proof.

For the argument $\arg(z)$ of a complex number $z$, we suppose that $\arg(z)\in[0,2\pi)$.

\begin{claim}
Let $P_1=\Set{(a,b)}{g\geq n^{6\varepsilon}\text{ and }g'\geq n^{\varepsilon}}$. Then
$$
\sum_{(a,b)\in P_1}\prob{\lambda_a=\lambda_b} \le n^{-0.5\ln n}.
$$
\label{cl:case1}
\end{claim}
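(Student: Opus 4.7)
The plan is to apply the union bound across the at most $n^2$ pairs $(a,b) \in P_1$ and prove the per-pair estimate $\Pr[\lambda_a = \lambda_b] \leq 2^{-|K|}$ for some index set $K \subseteq \{1, \ldots, n-1\}$ of size $|K| \geq \ln^2 n$ satisfying properties (i)--(iii) from the outline preceding the claim. Since $n^2 \cdot 2^{-\ln^2 n} \leq n^{-0.5 \ln n}$ for $n$ sufficiently large, it is enough to construct such a $K$ for every pair in $P_1$ and then invoke the argument of the outline verbatim: expose all random variables $\sigma_j$ except those $\sigma_{kr}$ with $k \in K$, reducing~\refeq{k=m} to~\refeq{xieta}, and apply Lemma~\ref{lem:cyclotomic} after multiplying both sides by a suitable $\zeta_n^t$ to conclude that~\refeq{xieta} has at most one solution $K' \subseteq K$.

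To construct $K$, I fix an integer $m$ with $\ln^3 n \leq m \leq g/(4\pi\ln n)$; this is possible since $g \geq n^{6\varepsilon}$. For $k \in \{1, \ldots, m\}$, the values $\xi^k = \zeta_g^k$ are pairwise distinct (because $m < g$) and all lie on a common arc $A_\xi \subset \bU$ of length $2\pi m/g < 1/(2\ln n)$, so properties (i) and (iii) are satisfied on the $\xi$-side for any $K \subseteq \{1, \ldots, m\}$. On the $\eta$-side, the number of distinct values of $\eta^k$ for $k \in \{1, \ldots, m\}$ equals $\min(m, g')$, which is at least $\min(\ln^3 n, n^\varepsilon) = \ln^3 n$ for $n$ large. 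Partitioning $\bU$ into $O(\ln n)$ arcs of length $1/(2\ln n)$ and applying pigeonhole, some arc $A_\eta$ contains at least $\ln^2 n$ distinct values of $\eta^k$; choosing one representative index $k$ per distinct value yields a candidate $K$ of the required cardinality, with (i) and (iii) also holding on the $\eta$-side.

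The main obstacle is to secure property (ii)---disjointness of $\{\xi^k\}_{k \in K}$ and $\{\eta^k\}_{k \in K}$---together with the stronger form of (iii) that Lemma~\ref{lem:cyclotomic} actually requires: namely, that $A_\xi \cup A_\eta$ fit inside a single arc of length $2\pi/\ln n$, so that after rotation by some $\zeta_n^t$ all relevant exponents fall in the range $[1, n/\ln n]$. This is what drives the subcase split announced in the outline. In the subcase where a populous arc $A_\eta$ can be placed so that $A_\xi \cup A_\eta$ lies inside an arc of length $2\pi/\ln n$, I expect to handle disjointness by discarding any indices witnessing $\xi^k = \eta^{k'}$, starting from a slightly larger initial $|K|$ to absorb the loss. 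In the opposite subcase, every populous $A_\eta$ lies far from $A_\xi$, so (ii) is automatic; one then rotates so that $A_\eta$ lands inside $[1, n/\ln n]$ and argues about the $\xi^k$-side via a symmetric application of the cyclotomic lemma, or by starting from~\refeq{k=m} after swapping sides. The ``somewhat simpler'' third subcase presumably addresses the degenerate regime in which $g' = g$ (so $\xi$ and $\eta$ generate the same cyclic subgroup of $\bU$); here the distinctness $\xi \neq \eta$ established in the preamble should allow $K$ to be built directly, by selecting $k$'s at which $\xi^k$ and $\eta^k$ drift apart at angular rate $|\arg(\xi) - \arg(\eta)|$, bypassing pigeonhole entirely. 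Executing this case analysis while uniformly guaranteeing $|K| \geq \ln^2 n$ is, in my view, the central technical challenge of the proof.
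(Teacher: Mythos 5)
Your overall frame (union bound over at most $n^2$ pairs, per-pair bound $2^{-|K|}$ via a set $K$ satisfying (i)--(iii), Lemma~\ref{lem:cyclotomic} after one rotation) matches the paper, but your construction of $K$ has a genuine gap precisely at the point you identify as ``the central technical challenge,'' and the sketch you offer there does not close it. Taking $K$ inside an initial segment $\{1,\ldots,m\}$ pins the $\xi^k$ near $1$ but lets the $\eta^k$ sweep the whole circle, so the populous arc $A_\eta$ produced by pigeonhole may lie far from $A_\xi$. In that regime the reduction to Lemma~\ref{lem:cyclotomic} breaks down: after rearranging \refeq{xieta} into \refeq{KK}, the lemma needs \emph{all} roots of unity occurring in the two equal sums to be moved into $\Set{\zeta_n^j}{1\le j\le n/\ln n}$ by a \emph{single} multiplication by $\zeta_n^t$, which is impossible when the exponents come from two widely separated arcs. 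Your fallback --- ``rotate so that $A_\eta$ lands inside $[1,n/\ln n]$ and argue about the $\xi^k$-side via a symmetric application of the cyclotomic lemma, or swap sides'' --- is not an argument: the lemma gives no information about relations mixing exponents inside and outside the window, and $\{0,\pm1\}$-relations among roots of unity from two distant arcs do exist in general (antipodal pairs already sum to zero), so something must rule them out. Likewise, in your ``nearby'' subcase the plan to ``discard indices witnessing $\xi^k=\eta^{k'}$ and absorb the loss'' is unsupported: without a comparison of the two angular step sizes you cannot bound how many indices are lost (when $\langle\eta\rangle$ and $\langle\xi\rangle$ overlap, the two value sets can coincide almost entirely).

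The paper's proof avoids the far-apart situation altogether by a preliminary pigeonhole \emph{on the powers of $\eta$}: since $g'\ge n^{\varepsilon}$, the elements $\eta^s$, $1\le s\le\lceil\ln^4 n\rceil$, are distinct, so some step $s\le\lceil\ln^4 n\rceil$ has $\eta^{\pm s}$ within angular distance $2\pi/\ln^4 n$ of $1$. One then takes $K$ to consist of multiples of this common step $s$, so that \emph{both} $\xi^{\ell s}$ and $\eta^{\ell s}$ drift slowly and the whole union stays inside one arc of length $O(1/\ln n)$ around $1$, which is exactly what the rotation step needs. Disjointness is then settled by a three-way split on the drift direction of $\eta^s$: if $\eta^s$ drifts opposite to $\xi^s$, the two clusters sit on opposite sides of $1$ (your condition (ii) is automatic); if they drift the same way but $\eta^s\ne\xi^s$, an interleaving argument using the \emph{different} step sizes shows at least half the indices survive; and if $\eta^s=\xi^s$ (this, not ``$g'=g$'', is the third subcase), one factors $\xi^{k+1}-\eta^{k+1}=\xi^{\ell s}(\xi-\eta)$ and applies Lemma~\ref{lem:cyclotomic} to a single sum $\sum_{k\in K'}\xi^k$. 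Separately, your pigeonhole only yields $|K|\ge c\ln^2 n/\ln n\cdot\ln n$ with a constant $c<1$ (about $\ln^2 n/(4\pi)$ from $\ln^3 n$ points in $\Theta(\ln n)$ arcs), which is too weak for the stated bound $n^{-0.5\ln n}$ after multiplying by $|P_1|\le n^2$; this is fixable by enlarging $m$, but the far-apart-arcs issue is not fixable within your construction and is where the proposal fails.
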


\begin{proof}
We claim that there exists $s$ such that
\begin{equation}
  \label{eq:1slnn}
1\leq s\leq \lceil\ln^4n\rceil  
\end{equation}
and
$$
\text{either }0<\arg(\eta^s)\le2\pi/\ln^4n \text{ or } 0<\arg(\eta^{-s})\le2\pi/\ln^4n.
$$
Indeed, consider the set $\mathcal{S}=\Set{\eta^s}{1\leq s\leq\lceil\ln^4n\rceil}$.
All elements of $\mathcal{S}$ are pairwise distinct. Indeed, suppose that $\eta^{s_1}=\eta^{s_2}$
for $1\le s_1<s_2\le\lceil\ln^4n\rceil$. Then $\eta^{s_2-s_1}=1$, so $s_2-s_1$ must be a multiple of the order
of $\eta$, which is $g'$. This yields a contradiction because, since $n$ is assumed to be sufficiently large,
we have
$$
s_2-s_1<\ln^4n<n^{\varepsilon}\le g'.
$$
Let $\eta^{s_1}$ and $\eta^{s_2}$, with $s_1<s_2$, be two elements of $\mathcal{S}$
such that the distance between them on the unit circle $\bU$ is minimal among all such pairs.
Since this distance is at most $2\pi/\ln^4n$, we can set $s=s_2-s_1$.

We now consider three cases and show that, in each of them,
\begin{equation}
  \label{eq:n-lnn}
\prob{\lambda_a=\lambda_b} \le n^{-0.6\ln n}.  
\end{equation}

\Case1{$\arg(\eta^{s})\le2\pi/\ln^4n$.}%
We follow the strategy presented in Case A of the outline above.
To establish the upper bound \refeq{n-lnn}, it suffices to find a set $K\subseteq\{1,\ldots,n-1\}$
of size $|K|=\lceil\ln^2n\rceil$ that satisfies Conditions (i)--(iii) stated above.
We set $K=\Set{s,2s,\ldots,\lceil\ln^2 n\rceil s}$.

Condition (i) for this set is ensured by the estimate $\arg(\xi)=2\pi-2\pi/g \ge 2\pi-2\pi/n^{6\varepsilon}$
and by the definition of Case 1. Since $n$ is large enough, due to \refeq{1slnn} we have $\arg(\xi^{\ell s})>2\pi-2\pi/n^{5\varepsilon}$
for all $\ell\leq\lceil\ln^2n\rceil$. In the case under consideration we also have $\arg(\eta^{\ell s})<\pi/\ln n$
for all $\ell\le\lceil\ln^2 n\rceil$. Consequently,
$$
\Set{\arg(\xi^k)}_{k\in K}\subset(2\pi-\pi/\ln n,2\pi)\text{ and }\Set{\arg(\eta^k)}_{k\in K}\subset(0,\pi/\ln n). 
$$
This implies the other two Conditions (ii) and~(iii), yielding the upper bound
\begin{equation}
  \label{eq:Pab}
\prob{\lambda_a=\lambda_b} \le 2^{-|K|}\le n^{-0.6\ln n}.  
\end{equation}

\Case2{$\arg(\eta^{-s})\le2\pi/\ln^4n$ and $\eta^s\neq\xi^s$.}%
The first of the two assumptions
ensures---similarly to Case 1---that Condition (i) is fulfilled for every set
$K\subseteq\Set{s,2s,\ldots,(2\lceil\ln^2 n\rceil) s}$. Each such set $K$ also satisfies
Condition (iii) because, similarly to Case 1, we have
$$
\Set{\arg(\xi^k)}_{k\in K}\subset(2\pi-2\pi/\ln n,2\pi)\text{ and }\Set{\arg(\eta^k)}_{k\in K}\subset(2\pi-2\pi/\ln n,2\pi).
$$
These inclusions, however, do not guarantee Condition (ii). Nevertheless, we can show that
Condition (ii) holds for at least one set $K\subseteq\{s,2s,\ldots,(2\lceil\ln^2 n\rceil) s\}$
such that $|K|\ge\lceil\ln^2 n\rceil$.
 
Indeed, assume without loss of generality that $\arg(\xi^{-s})<\arg(\eta^{-s})$.
The existence of a suitable set $K$ follows from the observation that, among any two consecutive values
$\arg(\xi^{\ell s})$ and $\arg(\xi^{(\ell+1)s})$ for $\ell<2\lceil\ln^2 n\rceil$, at most one can belong to the set
$\Set{\arg(\eta^k)}_{k\in \{s,2s,\ldots,(2\lceil\ln^2 n\rceil) s\}}$, because the distance between any two
neighboring elements of this set on the unit circle $\bU$ (which is equal to $\arg(\eta^{-s})$)
is larger than the distance between $\xi^{\ell s}$ and $\xi^{(\ell+1)s}$ (which is equal to $\arg(\xi^{-s})$).
It follows that the set $K$, consisting of all those $\ell s$ with $\ell\le2\lceil\ln^2 n\rceil$
such that $\arg(\xi^{\ell s})\notin\Set{\arg(\eta^k)}_{k\in \{s,2s,\ldots,(2\lceil\ln^2 n\rceil) s\}}$,
has the desired size. Condition (ii) is satisfied for this set by construction,
and we obtain the upper bound \refeq{Pab} also in this case.

\Case3{$\arg(\eta^{-s})\le2\pi/\ln^4n$ and $\eta^s=\xi^s$.}%
Let $K=\Set{0,s,2s,\ldots,\lceil\ln^2 n\rceil s}$.
Following our general strategy, let us expose all random variables $\sigma_j$
excepting $\sigma_{(k+1)r}$ for $k\in K$. The event \refeq{k=m} can now be recast as the equality
\begin{equation}
  \label{eq:xik1-etak1}
\sum_{k\in K'}(\xi^{k+1}-\eta^{k+1})=c  
\end{equation}
for a constant $c\in\bC$ and a random subset $K'\subseteq K$.
For $k=\ell s$, we have
$$
\xi^{k+1}-\eta^{k+1}=\xi^{\ell s+1}-\eta^{\ell s+1}=\xi^{\ell s+1}-\xi^{\ell s}\eta=\xi^{\ell s}(\xi-\eta),
$$
which allows us to rewrite Equality \refeq{xik1-etak1} as
\begin{equation}
  \label{eq:sumxik}
\sum_{k\in K'}\xi^k=c/(\xi-\eta) 
\end{equation}
(recall that $\xi\ne\eta$). Since $\arg(\xi^s)=\arg(\eta^s)\ge2\pi-2\pi/\ln^4n$,
we have $\arg(\xi^{\ell s})>2\pi-2\pi/\ln n$ for all $\ell\le\lceil\ln^2 n\rceil$.
We can, therefore, use Lemma \ref{lem:cyclotomic}
to conclude that Equality \refeq{sumxik} can be true for at most one $K'\subseteq K$.
It follows that Equality \refeq{xik1-etak1}, and as well Equality \refeq{k=m},
holds with probability at most $2^{-|K|} \le n^{-0.6\ln n}$.

\medskip

Thus, Bound \refeq{n-lnn} is established in each of the three cases. Consequently,
$$
\sum_{(a,b)\in P_1}\prob{\lambda_a=\lambda_b} \le |P_1|\cdot n^{-0.6\ln n} \le n^{-0.5\ln n},
$$
where the last inequality holds because $n$ is assumed to be sufficiently large.
The proof of the claim is complete.
\end{proof}

\begin{claim}
Let $P_2=\Set{(a,b)}{g\geq n^{6\varepsilon}\text{ while }g'<n^{\varepsilon}}$. Then
$$
\sum_{(a,b)\in P_2}\prob{\lambda_a=\lambda_b}\le\exp\of{-n^\varepsilon}.
$$
\label{cl:case2}
\end{claim}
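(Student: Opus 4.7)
The plan is to implement the strategy sketched for Case~B in the outline preceding the claim. I begin by fixing $m=\lfloor g/(2\pi\ln n)\rfloor$. Since $g\geq n^{6\varepsilon}$, this yields $m\geq n^{6\varepsilon-o(1)}$ and guarantees $2\pi m/g<1/\ln n$. Accordingly, the powers $\xi,\xi^2,\ldots,\xi^m$ are pairwise distinct (as $m<g$) and lie in an arc of $\bU$ of length less than $1/\ln n$, namely $\{z\in\bU:\arg(z)\in(2\pi-1/\ln n,2\pi)\}$. On the other hand, the sequence $(\eta^k)_{k\geq 1}$ has period $g'<n^{\varepsilon}$, so by pigeonhole there exist $\eta'\in\langle\eta\rangle$ and a set $K\subseteq\{1,\ldots,m\}$ with
\[
|K|\ \geq\ m/g'\ \geq\ n^{5\varepsilon-o(1)}
\]
such that $\eta^k=\eta'$ for every $k\in K$. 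For each $k\in K$, I pick an index $j_k\in\{1,\ldots,n-1\}$ with $\zeta_n^{aj_k}=\xi^k$ and $\zeta_n^{bj_k}=\eta^k=\eta'$; as in the discussion preceding Claim~\ref{cl:case1}, such a $j_k$ can be taken as $kr\bmod n$ with $r$ the inverse of $a/\gcd(a,n)$ modulo~$g$, and the resulting $j_k$'s are pairwise distinct.

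Setting $J=\{j_k:k\in K\}$, I expose all random variables $\sigma_j$ for $j\notin J$. Let $K'=\{k\in K:\sigma_{j_k}=1\}$, so that $K'$ is uniformly distributed over the subsets of $K$. Equality~\refeq{k=m} then reduces to
\[
\sum_{k\in K'}\xi^k\ =\ |K'|\,\eta'\,+\,c
\]
for some constant $c\in\bC$ depending only on the exposed values. Next I condition on $|K'|$. By a standard Chernoff bound,
\[
\prob{\,\bigl|\,|K'|-|K|/2\bigr|\,>\,|K|^{2/3}\,}\ \leq\ 2\exp\bigl(-\Omega(|K|^{1/3})\bigr).
\]
For any integer $m'$ with $\bigl|m'-|K|/2\bigr|\leq|K|^{2/3}$, conditioning on $|K'|=m'$ makes $K'$ uniform on the $m'$-subsets of $K$ and the displayed equation becomes $\sum_{k\in K'}\xi^k=\mathrm{const}$. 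Because all $\xi^k$ with $k\in K$ lie in an arc of length less than $1/\ln n$, multiplying both sides by a suitably chosen root of unity $\zeta_n^t$ transforms the equation into $\sum_{k\in K'}\zeta_n^{e_k}=\mathrm{const}'$, where $\{e_k:k\in K\}$ is a set of $|K|$ distinct positive integers bounded above by $n/\ln n$. Lemma~\ref{lem:cyclotomic} then forces at most one $K'$ of size $m'$ to satisfy this identity, so the conditional probability is at most $\binom{|K|}{m'}^{-1}$.

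Combining these two steps via the law of total probability gives
\[
\prob{\lambda_a=\lambda_b}\ \leq\ \binom{|K|}{\lceil|K|/2+|K|^{2/3}\rceil}^{-1}+2\exp\bigl(-\Omega(|K|^{1/3})\bigr)\ \leq\ \exp\bigl(-\Omega(|K|^{1/3})\bigr),
\]
where the binomial term is estimated by Stirling's formula. Since $|K|\geq n^{5\varepsilon-o(1)}$ and $|P_2|\leq n^2$, summation over the pairs in $P_2$ yields
\[
\sum_{(a,b)\in P_2}\prob{\lambda_a=\lambda_b}\ \leq\ n^2\,\exp\bigl(-n^{5\varepsilon/3-o(1)}\bigr)\ \leq\ \exp(-n^{\varepsilon})
\]
for all sufficiently large $n$, as required. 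The main obstacle, and the step requiring the most care, will be the modular-arithmetic verification that the indices $j_k$ can indeed be chosen pairwise distinct in $\{1,\ldots,n-1\}$, so that the $|K|$ exposed variables are genuinely independent Bernoulli$(1/2)$ random variables; this is the analogue of the corresponding bookkeeping in Claim~\ref{cl:case1} and should pose no real difficulty beyond that.
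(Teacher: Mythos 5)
Your proof is correct and follows essentially the same route as the paper, which proves Claim~\ref{cl:case2} by instantiating the Case~B outline (pigeonhole for $K$, exposure, Chernoff conditioning on $|K'|$, Lemma~\ref{lem:cyclotomic} via the short arc, and the binomial bound), the paper simply taking $m=n^{5\varepsilon}$ where you take $m=\lfloor g/(2\pi\ln n)\rfloor$ — an inessential difference. Your filled-in details, including the distinctness of the indices $kr\bmod n$ (which follows from the injectivity of $k\mapsto\xi^k$ for $k\le m<g$), are accurate.
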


\begin{proof}
We closely follow the proof strategy presented (in a fairly complete form) in Case B of the outline above.
Specifically, we set $m=n^{5\varepsilon}$.
Since $n$ is assumed to be sufficiently large, this implies $|K|\ge m/g'>n^{4\varepsilon}$
and also ensures Bound \refeq{m-large}. This justifies the probability bound \refeq{bin-exp}.
Using the estimate ${n\choose k}\ge(n/k)^k$, we conclude from \refeq{bin-exp} that if $(a,b)\in P_2$, then
$$
\prob{\lambda_a=\lambda_b}
<\of{\frac12+|K|^{-1/3}}^{|K|/2}+2\exp\of{-|K|^{1/3}}.
$$
Since $n$ is sufficiently large, this probability is bounded by $3\exp\of{-n^{4\varepsilon/3}}$
and, therefore,
$$
\sum_{(a,b)\in P_2}\prob{\lambda_a=\lambda_b}\le|P_2|\cdot3\exp\of{-n^{4\varepsilon/3}}\le\exp\of{-n^{\varepsilon}},
$$
as required.
\end{proof}

\begin{claim}
Let $P_3=\Set{(a,b)}{g < n^{6\varepsilon}}$. Then
$$
\sum_{(a,b)\in P_3}\prob{\lambda_a=\lambda_b} \le n^{-1/2+20\varepsilon}.
$$
\label{cl:case3}
\end{claim}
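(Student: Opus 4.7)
My plan is to execute the strategy outlined as Case C preceding the claim. The first step is to construct a large subset $J' \subseteq \{1,\ldots,n-1\}$ on which $\zeta_n^{aj}$ and $\zeta_n^{bj}$ are both constant, with different constants. Set $J = \{j : \zeta_n^{aj} = \xi\}$; since $aj \equiv ar \pmod{n}$ reduces to $j \equiv r \pmod{g}$, we have $|J| = n/g$. The map $j \mapsto \zeta_n^{bj}$ takes at most $h$ distinct values on $J$, so at most one of these values can equal $\xi$, and pigeonhole applied to the remaining classes yields a constant value $\eta' \ne \xi$ on a subset $J' \subseteq J$ with $|J'| \ge |J|/h \ge n/g^2$. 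Such a class with $\eta' \ne \xi$ exists in every subcase: when $h \ge 2$ there are at least $h-1$ non-$\xi$ classes, while $h=1$ forces $b=0$ and $\eta'=1$, which differs from $\xi = \zeta_g$ since throughout $P_3$ one has $g \ge 2$ (otherwise $g=1$ gives $a=0$, and combined with $b=0$ would violate $a \ne b$).

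Next, expose all $\sigma_j$ with $j \notin J'$. Equation~\refeq{k=m} rewrites as $(\xi - \eta') \sum_{j \in J'} \sigma_j = c$ for some constant $c \in \bC$, and because $\xi \ne \eta'$ this pins the sum $\sum_{j \in J'} \sigma_j$ to a specific integer. Since the $\sigma_j$ are independent Bernoulli$(1/2)$ variables, the standard anti-concentration bound gives
$$
\prob{\lambda_a = \lambda_b} \le \binom{|J'|}{\lfloor |J'|/2 \rfloor} 2^{-|J'|} = O\of{|J'|^{-1/2}} = O\of{g\, n^{-1/2}}.
$$

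To finish, I bound $|P_3|$. Rewriting $g(a) = n/\gcd(a,n) < n^{6\varepsilon}$ as $\gcd(a,n) > n^{1 - 6\varepsilon}$, the number of admissible $a$ is at most $\sum_{d \mid n,\, d > n^{1-6\varepsilon}} n/d \le d(n) \cdot n^{6\varepsilon} = n^{6\varepsilon + o(1)}$, using the divisor bound $d(n) = n^{o(1)}$. Hence $|P_3| \le n^{12\varepsilon + o(1)}$, and combining with the per-pair bound $O(g\, n^{-1/2}) \le n^{6\varepsilon - 1/2}$ yields
$$
\sum_{(a,b) \in P_3} \prob{\lambda_a = \lambda_b} \le n^{12\varepsilon + o(1)} \cdot n^{6\varepsilon - 1/2} \le n^{-1/2 + 20\varepsilon}
$$
for $n$ sufficiently large. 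The main technical obstacle is the first paragraph: securing the pigeonhole class with $\eta' \ne \xi$ together with the lower bound $|J'| \ge n/g^2$, which rests on the small but delicate observation that $g \ge 2$ throughout $P_3$. The undirected case of Part~2 is handled by restricting $J'$ to a set of representatives of the pairs $\{j, n-j\}$, losing only a factor of $2$ in $|J'|$ and thus only constants in the final exponent.
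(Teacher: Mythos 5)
Your proposal follows the paper's own route almost step for step: the same set $J=\Set{j}{\zeta_n^{aj}=\xi}$ of size $n/g$, the same exposure of all $\sigma_j$ with $j\notin J'$ reducing \refeq{k=m} to $(\xi-\eta')\sum_{j\in J'}\sigma_j=c$, the same central-binomial anti-concentration bound $\binom{|J'|}{\lfloor|J'|/2\rfloor}2^{-|J'|}=O(|J'|^{-1/2})$, and the same count of $|P_3|$ via the divisor bound $d(n)=n^{o(1)}$. The only place where you diverge is the construction of $J'$, and that is exactly where your argument has a gap. Your assertion that ``when $h\ge2$ there are at least $h-1$ non-$\xi$ classes'' is false in general: the number of distinct values of $j\mapsto\zeta_n^{bj}$ on $J$ is $n/\gcd(bg,n)$, which can be as small as $1$ even when $h$ is large (e.g.\ whenever $n\mid bg$, all of $J$ maps to the single value $\eta$). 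Consequently the pigeonhole ``applied to the remaining classes'' does not by itself give $|J'|\ge|J|/h$: a priori the $\xi$-class could swallow most of $J$, leaving the non-$\xi$ classes too small. The conclusion you need is true, but it requires an extra observation, e.g.\ that the value classes inside $J$ are cosets (arithmetic progressions of common difference $gm$ with $m=n/\gcd(bg,n)\le h$), hence all of the same size $|J|/m\ge|J|/h$, and that the class containing $j=r$ has value $\eta\ne\xi$ (the inequality $\xi\ne\eta$ is established just before the claims). The paper instead splits: if $h<g$ then $\xi$, being of order $g$, is not an $h$-th root of unity and so never occurs as a value, and a plain pigeonhole over all classes suffices; if $h=g$ it takes an explicit progression on which both $\zeta_n^{aj}$ and $\zeta_n^{bj}$ are constant with distinct constants. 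Either repair is one or two lines, so this is a fixable lapse rather than a wrong approach, but as written the key inequality $|J'|\ge n/g^2$ with $\eta'\ne\xi$ is not proved.

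A smaller point: your one-sentence treatment of the undirected case (restricting to representatives of $\{j,n-j\}$) overlooks that the coefficient of $\sigma_j$ becomes $\xi+\overline{\xi}-\eta'-\overline{\eta'}$, which vanishes when $\eta'=\overline{\xi}$; ruling this out needs an additional argument beyond losing a factor of $2$ in $|J'|$. The paper is admittedly also terse about Part 2 here, so this is a caveat rather than a decisive objection, but be aware the adaptation is not purely cosmetic.
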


\begin{proof}
As explained in Case C of the outline, we begin by considering the set $J=\Set{j<n}{\zeta_n^{aj}=\xi}$,
which has size $|J|=n/g>n^{1-6\varepsilon}$. Since $\zeta_n^{bj}$ can take on at most $h\le g$
distinct values, $J$ contains a subset $J'\subseteq J$ of size $|J'|\ge|J|/g>n^{1-12\varepsilon}$
such that the values $\zeta_n^{bj}$ for $j\in J'$ are all equal to the same $h$-th
root of unity $\eta'$. If $h<g$, then clearly $\eta'\ne\xi$.
If $h=g$, then we define $J'$ explicitly by $J'=\Set{\ell g+1}{\ell=0,1,\ldots,\lfloor(n-2)/g\rfloor}$,
which yields $\eta'=\eta$ and, therefore, $\eta'\ne\xi$ in any case.

Exposing all $\sigma_j$ except those for $j\in J'$, we obtain from Equality \refeq{k=m}
that $\sum_{j\in J'}\sigma_j=c$ for a constant $c$. This occurs with probability at most
$$
{|J'|\choose\lfloor|J'|/2\rfloor}2^{-|J'|} < |J'|^{-1/2} < n^{-1/2+6\varepsilon},
$$
providing us with an upper bound for $\prob{\lambda_a=\lambda_b}$ if $(a,b)\in P_3$.

We now estimate the number of pairs $(a,b)$ in $P_3$.
Recall that $a=\gcd{(a,n)}\cdot a'$, where $a'\le n/\gcd{(a,n)}=g<n^{6\varepsilon}$.
The factor $\gcd{(a,n)}$ of $a$ can take on at most $d(n)$ different values,
where $d(n)$ denotes the total number of divisors of $n$.
It is known \cite[Theorem 13.12]{Apostol98} that $d(n)=n^{O(1/\ln\ln n)}$.
Since there are less than $n^{6\varepsilon}$ possibilities to choose the factor $a'$,
the element $a$ can be chosen in at most $n^{7\varepsilon}$ ways, and the same
holds true as well for $b$. It follows that $|P_3|\le n^{14\varepsilon}$,
and we conclude that
$$
\sum_{(a,b)\in P_3}\prob{\lambda_a=\lambda_b} < |P_3|\cdot n^{-1/2+6\varepsilon} \le n^{-1/2+20\varepsilon},
$$
completing the proof of the claim.
\end{proof}

Claims~\ref{cl:case1},~\ref{cl:case2},~and~\ref{cl:case3} readily imply that
$$
\sum_{0\le a,b\le n-1}\prob{\lambda_a=\lambda_b} \le n^{-1/2+\varepsilon}
$$
for each $\varepsilon>0$ and sufficiently large $n$. This completes the proof of Part 1 of Theorem~\ref{thm:whp}.
The proof of Part 2
proceeds in essentially the same way. The only difference is that, for an unexposed
random variable $\sigma_j$, instead of the term $\sigma_j\zeta_n^{aj}$ we now
deal with $\sigma_j\zeta_n^{aj}+\sigma_{n-j}\zeta_n^{a(n-j)}=\sigma_j(\zeta_n^{aj}+\zeta_n^{-aj})$.
Lemma~\ref{lem:cyclotomic} remains applicable after multiplying the entire sum by an additional
factor of $\zeta_n^{t'}$ for a small value of $t'$. Finally,
in Case 1 of the proof of Claim A, one has to address also the possibility that $\eta^{-s}=\xi^s$,
which is handled similarly to Case 3 of the proof.

\begin{remark}
The probability bound in Theorem \ref{thm:whp} is nearly optimal.
This can be shown by noticing that a random digraph $\cay(\bZ_n,S)$ for $n=3p$ with $p$ prime
has repeated eigenvalues with probability $\Omega(n^{-1/2})$. Indeed, for $r=0,1,2$, let
$S_r=\Set{s\in S}{s=r\pmod3}$. Note that
$$
\lambda_p=\sum_{j\in S}\zeta_n^{pj}=\sum_{j\in S}\zeta_3^{j}=\sum_{j\in S_0}1+\sum_{j\in S_1}\zeta_3+\sum_{j\in S_2}\zeta_3^{2}.
$$
Likewise, 
$$
\lambda_{2p}=\sum_{j\in S}\zeta_n^{2pj}=\sum_{j\in S}\zeta_3^{2j}=\sum_{j\in S_0}1+\sum_{j\in S_1}\zeta_3^2+\sum_{j\in S_2}\zeta_3.
$$
As a consequence, $\lambda_p=\lambda_{2p}$ whenever $|S_1|=|S_2|$. The last equality is true with probability
$(1-o(1))/\sqrt{\pi n/3}$ because $|S_1|$ and $|S_2|$ are independent random variables with probability
distribution~$\mathrm{Bin}(p,1/2)$.

It can be similarly shown that, for $n=5p$, the spectrum of a random graph $\cay(\bZ_n,S)$
is not saturated with the same probability bound~$\Omega(n^{-1/2})$.
\end{remark}

\section{Proof of Theorem \ref{thm:main}}\label{s:main}

We are now ready to prove our main result. Specifically, we prove Theorem \ref{thm:main}
for each of the three concepts of a circulant:
\begin{itemize}
\item
  A Cayley (di)graph $X=\cay(\bZ_n,S)$. The uniform probability distribution of $X$ means
  that the connection set $S$ is equiprobably chosen among all subsets of $\bZ_n\setminus\{0\}$
  in the case of digraphs and among all inverse-closed subsets in the case of graphs.
\item
  An unlabeled circulant, i.e., an isomorphism class of Cayley (di)graphs $X=\cay(\bZ_n,S)$.
  The uniform distribution means that each isomorphism class on $\bZ_n$ is chosen equiprobably.
  In the algorithmic setting, an isomorphism class is presented by its representative
(a (di)graph from the class). Alternatively, we can think of the probability distribution
on all Cayley (di)graphs $X=\cay(\bZ_n,S)$ in which each $X$ appears with probability
$1/(\ell_n\, s(X))$, where $\ell_n$ is the total number of $n$-vertex unlabeled circulants
and $s(X)$ is the number of connection sets $S$ such that $\cay(\bZ_n,S)$ is isomorphic to~$X$.
\item
  A labeled circulant, i.e., an arbitrary (di)graph on the vertex set $\{0,1,\ldots,n-1\}$ isomorphic
  to some Cayley (di)graph $X=\cay(\bZ_n,S)$. The uniform distribution is considered on all $n$-vertex (di)graphs
  in this class.
\end{itemize}

In each of the three cases, we use the same canonization algorithm presented in Subsection \ref{ss:canon}.
For digraphs, the algorithm is extremely simple: We just individualize one vertex in an input digraph $X$
and run CR on the obtained vertex-colored graph $X_0$. In this way either we get an individual
label for each vertex of $X$ or the algorithm gives up. The labeling is canonical for all
circulants $X$, and it is successfully produced whenever $X_0$ is walk-discrete.
For graphs, the algorithm is a little bit more complicated and is discussed in detail in Subsection \ref{ss:canon}.
It succeeds whenever $X_0$ is walk-saturated.

Lemmas \ref{lem:simple} and \ref{lem:saturated} provide us with two sufficient spectral conditions:
$X_0$ is walk-discrete whenever $X$ has simple spectrum, and $X_0$ is walk-saturated 
whenever $X$ has saturated spectrum. This reduces our task to estimating the probability
that the random digraph $X$ has simple spectrum and, respectively, that the random
graph $X$ has saturated spectrum. In the case of Cayley (di)graphs, the proof is
completed by applying Theorem~\ref{thm:whp}.

It remains to show that the estimate of Theorem~\ref{thm:whp} stays as well true for
the uniformly distributed labeled and unlabeled circulants. To this end, we present
a general way to convert an estimate for one distribution into an estimate for
another distribution with a small overhead cost.

\subsection{Formal framework}

We introduce the following notation simultaneously for graphs and digraphs.
Let \allc be the set of all Cayley (di)graphs $X$ of cyclic groups, that is, all (di)graphs $X=\cay(\bZ_n,S)$
for any $n$ and $S$. Recall that the notion of a Cayley (di)graph was formally defined in Section \ref{ss:canon}.
According to this definition, two graphs $X=\cay(\bZ_n,S)$ and $Y=\cay(\bZ_n,T)$ in \allc
are different exactly when $S\ne T$.

In general, speaking of a (di)graph $X$, we always suppose that the vertex set of $X$
is $\{0,1,\ldots,n-1\}$, where $n$ is the order of $X$.
For a set of (di)graphs $\Q$, by $\Q_n$ we denote the set of the (di)graphs in $\Q$
that have order $n$. By $\un \Q$ we denote the unlabeled version of $\Q$ where
isomorphic graphs are not distinguished. Formally, $\un\Q$ is the quotient set of $\Q$
by the isomorphism relation. In other words, $\un \Q$ consists of all unlabeled graphs
whose representatives appear in  $\Q$. 
Furthermore, $\la \Q$ is defined to be the closure of $\Q$ under isomorphism, that is, if $\Q$
contains a (di)graph $X$ of order $n$, then $\la \Q$ contains all graphs on
the vertex set $\{0,1,\ldots,n-1\}$ isomorphic to $X$.
We write $\un \Q_n$ and $\la \Q_n$ to denote the restrictions of $\un \Q$ and $\la \Q$
to the (di)graphs of order~$n$.

Note that $\un \allc$ is precisely the set of unlabeled circulants, and
$\la \allc$ is the set of labeled circulants.

For an arbitrary set of (di)graphs $\Q$, let $a(\Q_n)$ denote the minimum number of automorphisms of a (di)graph in $\Q_n$.
Note that
\begin{equation}\label{eq:QuQ}
 |\la\Q_n|\le|\un\Q_n|\,n!/a(\Q_n). 
\end{equation}

The following important fact is a consequence of the main result in \cite{Muzychuk04};
see \cite[Theorem 1.1]{Muzychuk04} and the discussion right after its statement.

\begin{proposition}[{Muzychuk \cite{Muzychuk04}}]\label{prop:muzychuk}
  For every $S\subseteq \bZ_n\setminus\{0\}$ there are at most $\phi(n)$ sets
$S'\subseteq \bZ_n\setminus\{0\}$ such that $\cay(\bZ_n,S')\cong\cay(\bZ_n,S)$.
\end{proposition}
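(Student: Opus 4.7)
The plan is to translate the counting question into a group-theoretic statement about regular cyclic subgroups of the automorphism group of $X=\cay(\bZ_n,S)$, and then to invoke Muzychuk's classification of Schur rings over cyclic groups. Let $G=\aut(X)$ and let $G_0$ be the stabilizer of $0\in\bZ_n$ in $G$, so $|G|=n\,|G_0|$ by vertex-transitivity. Let $C\leq\sym(\bZ_n)$ be the cyclic subgroup generated by the shift $x\mapsto x+1$, and write $\mathcal{R}$ for the set of regular cyclic subgroups of $G$ (necessarily of order~$n$).

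First I would set up the double-counting. A bijection $\pi\in\sym(\bZ_n)$ transports $X$ to the labeled graph $\pi^{-1}(X)$, and $\pi^{-1}(X)$ is of the form $\cay(\bZ_n,S')$ for some $S'$ precisely when $\pi C\pi^{-1}\leq G$; moreover two bijections $\pi,\pi'$ yield the same $S'$ iff $\pi'\in G\pi$. For each $R\in\mathcal{R}$, the set $\{\pi:\pi C\pi^{-1}=R\}$ is a single coset of $N_{\sym(\bZ_n)}(C)$, which is the holomorph of $\bZ_n$ and has order $n\phi(n)$. Therefore
$$
\#\{S':\cay(\bZ_n,S')\cong X\} \;=\; \frac{|\mathcal{R}|\,n\,\phi(n)}{|G|} \;=\; \frac{|\mathcal{R}|}{|G_0|}\cdot\phi(n).
$$
Proving the proposition thus reduces to the inequality $|\mathcal{R}|\leq|G_0|$.

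This inequality is the main obstacle. A purely group-theoretic attempt---using that $C_G(R)=R$ and $|N_G(R)|\leq n\phi(n)$ for every regular $R$, so each $G$-conjugacy class of regular cyclic subgroups has size $\geq|G_0|/\phi(n)$---controls $|\mathcal{R}|$ only in terms of the number of $G$-conjugacy classes of such subgroups, and there is no elementary reason for this count to be bounded by $\phi(n)$. The sharp bound follows instead from the structural theory of Schur rings over cyclic groups: every $R\in\mathcal{R}$ corresponds to a realization of the Schur ring whose basic sets are the $G_0$-orbits on $\bZ_n$, and the classification in \cite[Theorem~1.1]{Muzychuk04} pins down these realizations tightly enough to give $|\mathcal{R}|\leq|G_0|$.

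Main obstacle: the inequality $|\mathcal{R}|\leq|G_0|$, which appears to genuinely require Muzychuk's Schur-ring classification. I would not attempt to rederive it; the write-up would simply cite \cite[Theorem~1.1]{Muzychuk04} (together with the Schur-ring machinery of \cite{EvdokimovP03}) for this step, and present the double-counting above as the translation from the counting of Cayley representations to the required group-theoretic bound.
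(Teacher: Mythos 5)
The paper gives no internal proof of this proposition: it is imported from Muzychuk, the entire justification being the pointer to \cite[Theorem~1.1]{Muzychuk04} and the discussion following its statement. Your proposal is compatible with that, and its verifiable part is correct: $\pi^{-1}(X)$ is of the form $\cay(\bZ_n,S')$ exactly when $\pi C\pi^{-1}\le\aut(X)$, two bijections produce the same $S'$ exactly when they lie in the same coset of $\aut(X)$, and the fibre over each regular cyclic subgroup $R\le\aut(X)$ is a coset of $N_{\sym(\bZ_n)}(C)$, which is the holomorph of order $n\phi(n)$; this does give the exact identity $\#\{S'\}=\phi(n)\,|\mathcal{R}|/|G_0|$ and hence the equivalence of the proposition with $|\mathcal{R}|\le|G_0|$ (your diagnosis that elementary conjugacy/normalizer arguments cannot deliver this, since they only relate $|\mathcal{R}|$ to the number of conjugacy classes of regular cyclic subgroups, is also sound --- $\bZ_n$ is not a CI-group in general). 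The caveat is that this reduction does not lighten the reliance on Muzychuk: by your own identity, $|\mathcal{R}|\le|G_0|$ is a restatement of the proposition, and your sketch of how \cite[Theorem~1.1]{Muzychuk04} ``pins down the realizations'' is no more substantiated than citing the counting statement directly, which is exactly what the paper does (Muzychuk's discussion yields the bound in the connection-set form, so no translation into subgroup counting is needed). So both routes rest on the same external citation; what your reformulation buys is a clean structural criterion (equality $\#\{S'\}=\phi(n)$ holds iff $|\mathcal{R}|=|G_0|$) rather than an independent or more elementary proof.
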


Proposition \ref{prop:muzychuk} readily implies that if $\Q\subseteq\allc$, then
\begin{equation}\label{eq:QlQ}
|\un \Q_n|\le|\Q_n|\le\phi(n)|\un\Q_n|.
\end{equation}

Seeing $\bZ_n$ as a ring with addition and multiplication modulo $n$,
we write $\bZ_n^\times$ to denote the multiplicative group of order $\phi(n)$ consisting
of the invertible elements of~$\bZ_n$.
For a set $S\subseteq\bZ_n\setminus\{0\}$, we define the subgroup $K(S)\le\bZ_n^\times$
by $K(S)=\Set{k\in \bZ_n^\times}{kS=S}$. In the case of digraphs,
we call a connection set $S$ \emph{multiplier-free} if $K(S)=\{1\}$. In the case of graphs,
an inverse-closed connection set $S=-S$ is called \emph{multiplier-free} if $K(S)=\{1,-1\}$.
The set of Cayley (di)graphs with multiplier-free connection sets is denoted by~\asym.

We say that a set $\Q\subseteq\allc$ is \emph{isomorphism-invariant within $\allc$}
if for every $X\in\Q$ and $Y\in\allc$, we have $Y\in\Q$ whenever $X\cong Y$.

\begin{lemma}\label{lem:zibin}
  The set \asym is isomorphism-invariant within~$\allc$.
\end{lemma}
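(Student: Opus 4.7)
The plan is to combine Proposition~\ref{prop:muzychuk} with the natural multiplication action of $\bZ_n^\times$ on connection sets. Given $X = \cay(\bZ_n, S) \in \asym$ and $Y = \cay(\bZ_n, T) \in \allc$ with $X \cong Y$, I want to show that $K(T)$ equals the minimum possible value ($\{1\}$ for digraphs, $\{1,-1\}$ for graphs) that defines membership in $\asym$.

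The first step is the observation that for every $k \in \bZ_n^\times$ the map $x\mapsto kx$ is a graph isomorphism from $\cay(\bZ_n, S)$ to $\cay(\bZ_n, kS)$, so the entire $\bZ_n^\times$-orbit of $S$ lies inside the isomorphism class of $X$ and has size $\phi(n)/|K(S)|$. Moreover, because $\bZ_n^\times$ is abelian, every connection set in a single $\bZ_n^\times$-orbit shares the same multiplier group: $K(kS)=\{j: jkS=kS\}=\{j: jS=S\}=K(S)$. Thus if $T$ lies in the orbit of $S$, then $K(T)=K(S)$ automatically and $Y\in\asym$.

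In the digraph case ($K(S) = \{1\}$), the orbit of $S$ already contains $\phi(n)$ distinct sets, which by Proposition~\ref{prop:muzychuk} matches the maximum possible size of the isomorphism class of $X$. Hence the orbit \emph{is} the whole class, so $T = k_0 S$ for some $k_0 \in \bZ_n^\times$ and the previous paragraph concludes the argument. For the graph case ($K(S) = \{1,-1\}$), I would run the parallel argument on inverse-closed sets: the $\bZ_n^\times$-orbit of $S$ among inverse-closed connection sets has only $\phi(n)/2$ elements, so it no longer saturates Proposition~\ref{prop:muzychuk} on its own. To finish, I would need to rule out the existence of another $\bZ_n^\times$-orbit $O_T$ in the isomorphism class with $|K(T)|>2$, by appealing to the finer structural form of Muzychuk's theorem in~\cite{Muzychuk04}, which provides more than the cardinality bound quoted here.

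The main obstacle I anticipate is precisely this graph-case step. The bound of Proposition~\ref{prop:muzychuk} alone does not pin down $|K(T)|=2$, since two disjoint orbits $O_S,O_T$ of sizes $\phi(n)/2$ and $\phi(n)/|K(T)|$ respectively fit comfortably below $\phi(n)$ whenever $|K(T)|\ge 2$. Resolving this cleanly requires a sharper ingredient—either a tightening of the Muzychuk bound to $\phi(n)/2$ for inverse-closed connection sets, or a structural argument (for example, exploiting the embedding of $\bZ_n\rtimes K(S)$ inside $\aut(X)$) preventing any additional multiplier from appearing in $K(T)$.
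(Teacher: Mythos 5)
Your digraph argument is correct and is genuinely different from the paper's: since $K(S)=\{1\}$, the multiplier orbit $\{kS : k\in\bZ_n^\times\}$ consists of $\phi(n)$ distinct connection sets, all yielding digraphs isomorphic to $\cay(\bZ_n,S)$, and Proposition~\ref{prop:muzychuk} then forces $T$ to lie in this orbit; commutativity of $\bZ_n^\times$ gives $K(T)=K(S)$. That is a clean counting proof of the directed half which the paper does not use.

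However, the undirected case is a genuine gap, exactly as you yourself diagnose. With $|K(S)|=2$ the orbit of $S$ has only $\phi(n)/2$ elements, and the cardinality bound of Proposition~\ref{prop:muzychuk} cannot exclude a second orbit: a hypothetical $T$ with $|K(T)|=m>2$ contributes only $\phi(n)/m$ further sets, and $\phi(n)/2+\phi(n)/m\le\phi(n)$ holds for every $m\ge 2$, so no contradiction arises from counting alone. The sharper ingredient you gesture at is precisely what the paper invokes, and it makes the whole argument uniform in both the directed and undirected cases: by Zibin's conjecture, proved by Muzychuk, Klin and P\"oschel (\cite[Theorem 5.1]{MuzychukKP99}), an isomorphism $\cay(\bZ_n,S)\cong\cay(\bZ_n,T)$ implies that for every divisor $d$ of $n$ there is $m_d\in\bZ_n^\times$ with $(T)_d=m_d(S)_d$, where $(S)_d=\Set{s\in S}{\gcd(s,n)=d}$. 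If $kS=S$, then each layer $(S)_d$ is a union of orbits of $x\mapsto kx$, and since multiplication by $m_d$ commutes with multiplication by $k$, each layer $(T)_d=m_d(S)_d$ is again such a union; hence $kT=T$. This yields $K(S)=K(T)$ for \emph{all} isomorphic circulants, which is what your proposal needs but does not establish; without importing this layer-wise multiplier theorem (or an equivalent structural fact), the graph case of the lemma remains unproved.
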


\begin{proof}
  Assume that
  \begin{equation}
    \label{eq:ScongT}
\cay(\bZ_n,S)\cong\cay(\bZ_n,T).    
  \end{equation}
  Let $kS=S$ for $k\in \bZ_n^\times$.
  To obtain the lemma, it is enough to prove that $kT=T$ as well.

  We use the following fact, which was conjectured by Zibin and proved by
  Muzychuk, Klin, and Pöschel \cite{MuzychukKP99}. Let $d\mid n$, i.e., $d$ is a divisor of $n$.
  For $S\subseteq\bZ_n$, define $(S)_d=\Set{s\in S}{\gcd(s,n)=d}$. Then, according to \cite[Theorem 5.1]{MuzychukKP99},
  the isomorphism \refeq{ScongT} implies that for every $d\mid n$ there exists $m_d\in \bZ_n^\times$ such that
  $(T)_d=m_d(S)_d$.

  For $k\in \bZ_n^\times$, let $\mu_k$ be the permutation of $\bZ_n$ defined by $\mu_k(x)=kx$.
  The equality $kS=S$ means that $S$ is a union of orbits of $\mu_k$, that is,
  there is a set $H\subset\bZ_n$ such that
  $$
S=\bigcup_{h\in H}\langle k\rangle h=\bigcup_{d\mid n}\bigcup_{h\in (H)_d}\langle k\rangle h,
$$
where $\langle k\rangle$ denotes the subgroup of $\bZ_n^\times$ generated by $k$.
Note that $(S)_d=\bigcup_{h\in (H)_d}\langle k\rangle h$. It follows that
$$
T=\bigcup_{d\mid n}(T)_d=\bigcup_{d\mid n}m_d(S)_d=\bigcup_{d\mid n}\bigcup_{h\in (H)_d}\langle k\rangle (hm_d).
$$
This shows that $T$ is also a union of orbits of $\mu_k$ and, as required, $kT=T$.  
\end{proof}

If $\Q\subseteq\asym$ is isomorphism-invariant within $\allc$, then the lower bound in \refeq{QlQ} can be improved as follows:
\begin{eqnarray}
&|\Q_n|&=\ \phi(n)|\un\Q_n|\quad\text{for digraphs,}\label{eq:QuQdi}\\
\frac{\phi(n)}2\,|\un\Q_n|\ \le&|\Q_n|&\le\ \phi(n)|\un\Q_n|\quad\text{for graphs.}\label{eq:QuQg}
\end{eqnarray}
Indeed, if $\cay(\bZ_n,S)\in\asym$ and $k,k'\in \bZ_n^\times$, then the isomorphic copies
$\cay(\bZ_n,kS)$ and $\cay(\bZ_n,k'S)$ of this graph are distinct (i.e., $kS\ne k'S$)
whenever $k'\ne k$ in the case of digraphs and $k'\ne \pm k$ in the case of graphs.

\begin{lemma}\label{lem:m-free}\hfill
  \begin{enumerate}[\bf 1]
  \item 
  A random connection set $S\subseteq\bZ_n\setminus\{0\}$ is not multiplier-free
  with probability less than $n\,2^{-n/4}$.
  \item 
  A random inverse-closed connection set $S=-S$ is not multiplier-free
  with probability less than $2n\,2^{-n/8}$.
  \end{enumerate}  
\end{lemma}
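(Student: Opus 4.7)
The plan is to establish both parts via a union bound over \emph{multipliers} $k\in\bZ_n^\times$, bounding for each fixed $k$ the probability that $kS=S$ by counting $\mu_k$-invariant sets, where $\mu_k(x)=kx$. The key observation is that $kS=S$ if and only if $S$ is a union of $\mu_k$-orbits on the relevant domain, turning the probability into $2^{o-m}$, where $o$ is the number of orbits and $m$ is the domain size. In Part~1 the domain is $\bZ_n\setminus\{0\}$ with $m=n-1$. In Part~2, because $\mu_k$ commutes with $\mu_{-1}$, it descends to a permutation of the set $P$ of $\langle -1\rangle$-orbits on $\bZ_n\setminus\{0\}$, of size $m=\lfloor n/2\rfloor$; the inverse-closed sets then correspond to arbitrary subsets of~$P$.

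For each $k$ I will invoke the elementary inequality $o\le f+(m-f)/2=(m+f)/2$, where $f$ is the number of fixed points of $\mu_k$, to reduce everything to counting fixed points. In Part~1, the fixed points of $\mu_k$ on $\bZ_n\setminus\{0\}$ are the nonzero solutions of $(k-1)x\equiv 0\pmod n$, numbering $d_1-1$ with $d_1=\gcd(k-1,n)$. Since $k\ne 1$ forces $d_1$ to be a proper divisor of $n$, one obtains $d_1\le n/2$ and hence $\prob{kS=S}\le 2^{-n/4}$. A union bound over the at most $\phi(n)-1<n$ multipliers $k\ne 1$ then completes Part~1.

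For Part~2, a pair $\{x,-x\}\in P$ is fixed by $\mu_k$ iff $x\in A\cup B$, where $A$ and $B$ are the nonzero solution sets of $(k-1)x\equiv 0$ and $(k+1)x\equiv 0$. A short bookkeeping---using that, when $n$ is even, $n/2$ always lies in $A\cap B$ because every $k\in\bZ_n^\times$ is then odd---shows that the number of fixed pairs equals $(d_1+d_2-2)/2$ with $d_i=\gcd(k\mp 1,n)$. Substituting into the orbit bound yields
\[
\prob{kS=S\mid S=-S}\;\le\;2^{(d_1+d_2-2-2\lfloor n/2\rfloor)/4}.
\]

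The main obstacle will be the arithmetic estimate $d_1+d_2\le n/2+4$ for every $k\in\bZ_n^\times$ with $k\ne\pm 1$. I plan to derive it from two facts: each $d_i$ is a proper divisor of $n$, hence at most $n/2$; and $\gcd(d_1,d_2)$ divides $\gcd(k-1,k+1)$, which divides $2$, so $\mathrm{lcm}(d_1,d_2)\ge d_1d_2/2$, which combined with $\mathrm{lcm}(d_1,d_2)\mid n$ yields $d_1d_2\le 2n$. Treating $d_1+d_2$ as a function of $d_1\in[1,n/2]$ subject to $d_2\le 2n/d_1$, a one-variable calculus argument bounds $d_1+d_2\le \max(n/2+4,\,2\sqrt{2n})\le n/2+4$ whenever $n\ge 8$ (smaller $n$ are absorbed into the trivial regime where $2n\cdot 2^{-n/8}\ge 1$). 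With this estimate in hand, $\prob{kS=S\mid S=-S}\le\sqrt{2}\cdot 2^{-n/8}$, and a union bound over the $\phi(n)-2<n$ multipliers $k\ne\pm 1$ produces the claimed bound $2n\cdot 2^{-n/8}$.
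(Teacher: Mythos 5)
Your proposal is correct and follows essentially the same route as the paper: a union bound over the fewer than $n$ multipliers $k$, with $\prob{kS=S}$ bounded by counting orbits of $\mu_k$ on $\bZ_n\setminus\{0\}$ (respectively on the $\pm$-pairs), the fixed points being controlled by $d_1=\gcd(k-1,n)$ and $d_2=\gcd(k+1,n)$ together with $\gcd(k-1,k+1)\mid 2$, which is exactly the paper's argument phrased via $\ann(k-1)$ and $\ann(k+1)$. The only nit is an intermediate constant in Part~2: for odd $n$ your substitution gives the per-multiplier bound $2^{3/4}\,2^{-n/8}$ rather than $\sqrt{2}\,2^{-n/8}$, but since this is still below $2\cdot 2^{-n/8}$, the union bound over fewer than $n$ multipliers still yields the claimed $2n\,2^{-n/8}$.
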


\begin{proof} 
Define the \emph{annihilator} of $a\in\bZ_n$ by $\ann(a)=\Set{x\in\bZ_n}{xa=0}$.
Since $\ann(a)$ is a subgroup of $\bZ_n$, we have $|\ann(a)|\le n/2$ for every~$a\ne0$.

1.
It suffices to prove that, for each $a\ne1$ in $\bZ_n^\times$, the equality $aS=S$
is fulfilled with probability at most $2^{-n/4}$. Let $\mu_a$ be the permutation
of $\bZ_n$ defined by $\mu_a(x)=ax$. We have $\mu_a(x)=x$ exactly when $x\in\ann(a-1)$.
Thus, $\mu_a$ is the identity on $\ann(a-1)$ and a fixed-point-free permutation of
the set $\bZ_n\setminus\ann(a-1)$.
Denote the restriction of $\mu_a$ to this set by $\mu'_a$.
Let $c_1,\ldots,c_t$ be the cycle type of $\mu'_a$. Note that $\sum_{i=1}^tc_i=n-|\ann(a-1)|\ge n/2$.
Note also that $t\le(\sum_{i=1}^tc_i)/2$ because $c_i\ge 2$ for all $i$.
The equality $aS=S$ is true if and only if every cycle of $\mu'_a$ either is entirely
in $S$ or is disjoint from $S$. This happens with probability
$$
\prod_{i=1}^t 2^{-c_i+1}=2^{-(\sum_{i=1}^tc_i)+t}\le2^{-(\sum_{i=1}^tc_i)/2}\le2^{-n/4}.
$$
2.
Again, it is enough to prove that, for each $a\ne\pm1$ in $\bZ_n^\times$, the equality $aS=S$
is fulfilled with probability at most $2^{-n/8+2}$. Let $Z$ be the set of all pairs
$\{x,-x\}$ for $x\in\bZ_n$ such that $x\ne-x$. Note that $|Z|=\lfloor(n-1)/2\rfloor$.
The permutation $\mu_a$ naturally acts on $Z$ by $\mu_a(\{x,-x\})=\{ax,-ax\}$.
Let us estimate the number of fixed points under this action. We have $\mu_a(\{x,-x\})=\{x,-x\}$
if and only if $ax=x$ or $ax=-x$, which happens exactly when $x\in\ann(a-1)\cup\ann(a+1)$.
Note that
\begin{equation}
  \label{eq:annaa}
|\ann(a-1)\cup\ann(a+1)|\le \frac n2+1.
\end{equation}
Indeed, let $b=\gcd(a-1,n)$ and $c=\gcd(a+1,n)$, and note that $|\ann(a-1)|=b$ and $|\ann(a+1)|=c$.
Since the sets $\ann(a-1)$ and $\ann(a+1)$ share at least one element, namely $0$,
it is enough to prove that $b+c\le n/2+2$. Since neither $a-1=0$ nor $a+1=0$, neither $b$ nor $c$
exceeds $n/2$. Hence, we are immediately done if $b\le2$ or $c\le2$.
Suppose, therefore, that both $b\ge3$ and $c\ge3$.

Let $d=\gcd(b,c)$. Since $d$ divides both $a-1$ and $a+1$, we have $d\le2$.
If $d=1$, then $b$ and $c$ are coprime divisors of $n$ and, therefore,
$b+c\le b+n/b$. Note that $3\le b\le n/c\le n/3$. In particular, $n\ge9$ in this case.
It follows that $b+c\le n/3+3<n/2+2$.

If $d=2$, then $b/2$ and $c/2$ are coprime divisors of $n/2$. The similar argument
yields
$$
\frac b2 + \frac c2 \le \frac b2 + \frac{n/2}{b/2} \le \frac n3 + \frac32 < \frac n2+2,
$$
completing the proof of Bound~\refeq{annaa}.

Since $x$ belongs to $\ann(a-1)\cup\ann(a+1)$ simultaneously with $-x$,
the number of fixed points under the action of $\mu_a$ on $Z$ is at most $n/4$. It follows that
at least $n/4-1$ pairs in $Z$ are non-fixed.
Similarly to Part 1, we conclude that $aS=S$ with probability at most~$2^{-n/8+1}$.
\end{proof}

To conclude the notational stuff, let $\Q\subseteq\allc$.
Then $\prob{\Q_n}=|\Q_n|/|\allc_n|$ is the probability that a random Cayley (di)graph $\cay(\bZ_n,S)$,
where $S$ is chosen equiprobably among all possible connection sets $S\subseteq\bZ_n\setminus\{0\}$, belongs to $\Q$.
Exactly this random Cayley (di)graph model is considered in Theorem \ref{thm:whp}
and studied in Section \ref{s:whp}. In what follows, 
we also write $\probla{\la\Q_n}$ to denote the probability that a random labeled circulant
of order $n$, i.e., a (di)graph chosen randomly and uniformly in $\la\allc_n$, belongs to $\la\Q_n$. Thus,
$\probla{\la\Q_n}=|\la\Q_n|/|\la\allc_n|$. Similarly, $\probun{\un\Q_n}=|\un\Q_n|/|\un\allc_n|$
is the probability that a random unlabeled circulant of order $n$ belongs to~$\un\Q_n$.

\subsection{From Cayley (di)graphs to unlabeled circulants}\label{ss:cay-to-u}

\begin{lemma}\label{lem:u}
  Let $\R\subseteq\allc$. If $\R$ is isomorphism-invariant within $\allc$, then
\begin{eqnarray*}
  \probun{\un\R_n}&\le&(1+o(1))\,\prob{\R_n}+n^2\,2^{-n/4}\quad\text{for digraphs,}\label{eq:udi}\\
  \probun{\un\R_n}&\le&(2+o(1))\,\prob{\R_n}+2n^2\,2^{-n/8}\quad\text{for graphs.}\label{eq:ug}                     
\end{eqnarray*}  
\end{lemma}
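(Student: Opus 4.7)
The plan is to split $\R_n$ into its multiplier-free part $\R_n \cap \asym_n$ and its complement $\R_n \setminus \asym_n$, and to handle the two parts by completely different means. For the multiplier-free part we exploit the fact, recorded in \refeq{QuQdi} and \refeq{QuQg}, that within \asym the relation between labeled Cayley counts and unlabeled counts is controlled by a clean factor of $\phi(n)$ (for digraphs) or lies between $\phi(n)/2$ and $\phi(n)$ (for graphs). For the residual non-multiplier-free part we just use Lemma~\ref{lem:m-free} as a crude bound on its total cardinality.

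Concretely, I would proceed as follows. Note first that, by Lemma~\ref{lem:zibin}, \asym is isomorphism-invariant within \allc, and since $\R$ is assumed iso-invariant within \allc, so is $\R \cap \asym$. Therefore \refeq{QuQdi} applies to $\Q = \R \cap \asym$ in the digraph case, giving the identity
\[
|\un(\R \cap \asym)_n| \;=\; |(\R \cap \asym)_n|/\phi(n),
\]
and \refeq{QuQg} gives $|\un(\R \cap \asym)_n| \le 2|(\R \cap \asym)_n|/\phi(n)$ in the graph case. For the complement I use the trivial bound $|\un(\R \setminus \asym)_n| \le |(\R \setminus \asym)_n| \le |\allc_n \setminus \asym_n|$ and invoke Lemma~\ref{lem:m-free}, which gives $|\allc_n \setminus \asym_n| \le n\,2^{-n/4}|\allc_n|$ for digraphs and $2n\,2^{-n/8}|\allc_n|$ for graphs.

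Next I need a matching lower bound on the denominator $|\un\allc_n|$. Using $|\un\allc_n| \ge |\un\asym_n|$ together with the same identities \refeq{QuQdi}/\refeq{QuQg} applied to $\Q = \asym$, I obtain
\[
|\un\allc_n| \;\ge\; |\asym_n|/\phi(n) \;\ge\; (1-\varepsilon_n)\,|\allc_n|/\phi(n),
\]
where $\varepsilon_n = n\,2^{-n/4}$ in the digraph case and $\varepsilon_n = 2n\,2^{-n/8}$ in the graph case. Combining the three estimates,
\[
\probun{\un\R_n} \;\le\; \frac{|\un(\R\cap\asym)_n|+|\un(\R\setminus\asym)_n|}{|\un\allc_n|}
\;\le\; \frac{c\,\prob{\R_n}}{1-\varepsilon_n} + \frac{\phi(n)\,\varepsilon_n}{1-\varepsilon_n},
\]
where $c = 1$ for digraphs and $c = 2$ for graphs. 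Bounding $\phi(n) \le n$ and absorbing the factor $1/(1-\varepsilon_n) = 1 + o(1)$ into the leading terms yields the two claimed inequalities (the exponentially small residuals swallow the $(1+o(1))$ factor on the additive term).

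The main obstacle, if any, is conceptual rather than computational: it consists in checking that the iso-invariance of $\R\cap\asym$ within \allc is genuinely enough to invoke \refeq{QuQdi}/\refeq{QuQg}, and in getting the two directions of the $\phi(n)$-comparison right so that the numerator and denominator bounds combine cleanly. Once the iso-invariance is pinned down via Lemma~\ref{lem:zibin}, the remaining work is just bookkeeping of the form above, and the exponential smallness furnished by Lemma~\ref{lem:m-free} leaves considerable slack.
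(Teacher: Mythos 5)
Correct, and essentially the paper's own argument: you split along $\asym$, use Lemma~\ref{lem:zibin} together with the assumed invariance of $\R$ to justify applying \refeq{QuQdi}/\refeq{QuQg} to $\R\cap\asym$ (and to $\asym$ itself), and dispose of the non-multiplier-free part via Lemma~\ref{lem:m-free}; the paper differs only cosmetically, normalizing the main term by $|\un\asym_n|$ instead of lower-bounding $|\un\allc_n|$ by $(1-\varepsilon_n)|\allc_n|/\phi(n)$. One small bookkeeping point: your factor $1/(1-\varepsilon_n)$ also multiplies the exponentially small term, so with $\phi(n)\le n$ you obtain slightly more than the stated $n^2\,2^{-n/4}$ (resp.\ $2n^2\,2^{-n/8}$), and this excess cannot in general be absorbed into $(1+o(1))\,\prob{\R_n}$ (which may be far smaller); the fix is immediate---either bound the residual term using the denominator $|\allc_n|/\phi(n)$ coming directly from \refeq{QlQ}, as the paper does in \refeq{m-free}, or use $\phi(n)\le n-1$.
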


\begin{proof}
We prove the inequality for graphs; the case of digraphs is similar.
Note that $\un\allc_n\setminus\un\asym_n=\un{(\allc_n\setminus\asym_n)}$ due to Lemma \ref{lem:zibin}.
Applying the inequalities \refeq{QlQ} to $\Q=\allc\setminus\asym$ and to $\Q=\allc$,
we derive from Lemma \ref{lem:m-free}
\begin{equation}\label{eq:m-free}
  \probun{\un\allc_n\setminus\un\asym_n}=\frac{|\un\allc_n\setminus\un\asym_n|}{|\un\allc_n|}
=\frac{|\un{(\allc_n\setminus\asym_n)}|}{|\un\allc_n|}\le
\frac{|\allc_n\setminus\asym_n|}{|\allc_n|/\phi(n)}=\phi(n)\,\prob{\allc_n\setminus\asym_n}\le 2n^2\,2^{-n/8}.
\end{equation}

Note that $\un\R_n\cap\un\asym_n=\un{(\R_n\cap\asym_n)}$ because
$\asym$ is isomorphism-invariant within $\allc$ by Lemma \ref{lem:zibin}
and $\R$ is isomorphism-invariant within $\allc$ by assumption.
Using Bound \refeq{m-free} and applying the inequalities \refeq{QuQg} to $\Q=\R\cap\asym$ and to $\Q=\asym$, we obtain
\begin{multline*}
\probun{\un\R_n}\le\probun{\un\R_n\cap\un\asym_n}+\probun{\un\allc_n\setminus\un\asym_n}\le
\frac{\probun{\un\R_n\cap\un\asym_n}}{\probun{\un\asym_n}}+2n^2\,2^{-n/8}
=\frac{|\un\R_n\cap\un\asym_n|}{|\un\asym_n|}+2n^2\,2^{-n/8}\\
=\frac{|\un{(\R_n\cap\asym_n)}|}{|\un\asym_n|}+2n^2\,2^{-n/8}
\le\frac{2\,|\R_n\cap\asym_n|/\phi(n)}{|\asym_n|/\phi(n)}+2n^2\,2^{-n/8}\le
\frac{2\,|\R_n|}{|\asym_n|}+2n^2\,2^{-n/8}\\
=\frac{2\,\prob{\R_n}}{\prob{\asym_n}}+2n^2\,2^{-n/8}
=(2+o(1))\,\prob{\R_n}+2n^2\,2^{-n/8}.
\end{multline*}
For the last equality we used Lemma \ref{lem:m-free} once again.
\end{proof}

Let $\R$ be the set of all Cayley digraphs $X=\cay(\bZ_n,S)$ whose spectrum is \emph{not} simple.
In the case of graphs, we set $\R$ to be the set of all $X=\cay(\bZ_n,S)$ whose spectrum is \emph{not} saturated.
Lemma \ref{lem:u} implies that the bound of Theorem \ref{thm:whp} holds true also for unlabeled circulants.
This completes the proof of Theorem \ref{thm:main} in the unlabeled case.

\subsection{From unlabeled to labeled circulants}\label{ss:firm}

For $a\in\bZ_n$, define a bijection $\sigma_a\function{\bZ_n}{\bZ_n}$ by $\sigma_a(x)=x+a$.
For every Cayley digraph $X=\cay(\bZ_n,S)$,
the map $\sigma_a$ is an automorphism of $X$ for all $a$.
If $X$ has no other automorphism, that is, $\aut(X)$ is as small as possible
(see \cite[Section 8.1]{DobsonMM22}), then we call $X$ \emph{firm}.
In other words, $X$ is firm exactly when $\aut(X)\cong\bZ_n$.

Now, let $X$ be a Cayley graph. In this case there is also another automorphism $\rho$
defined by $\rho(x)=-x$. The automorphisms $\sigma_1$ and $\rho$ generate a subgroup of $\aut(X)$
isomorphic to the dihedral group $D_{2n}$. If $X$ has no other automorphisms, i.e.,
$\aut(X)\cong D_{2n}$, then we say that $X$ is a \emph{firm Cayley graph} of $\bZ_n$.
Note that a firm graph is not firm as a digraph; this should not
make any confusion because we treat random graphs and random digraphs separately
(even when in parallel).

The set of firm (di)graphs is denoted by \firm.
The following equalities easily follow from the definitions:
If $\Q\subseteq\firm$, then
\begin{eqnarray}
|\la\Q_n|&=&(n-1)!\,|\un\Q_n|\quad\text{for digraphs,}\label{eq:firm-l-di}\\
|\la \Q_n|&=&\frac{(n-1)!}2\,|\un\Q_n|\quad\text{for graphs.}\label{eq:firm-l-g}                     
\end{eqnarray}  

We will need the following estimates obtained in \cite{BhoumikDM14,DobsonSV16}.\footnote{%
Note that, since $\firm\subseteq\asym$, we could use Proposition~\ref{prop:firm} instead of Lemma~\ref{lem:m-free}
in Subsection~\ref{ss:cay-to-u}. However, Lemma~\ref{lem:m-free} has the advantage of being proved by an elementary method.
}

\begin{proposition}\label{prop:firm}\hfill
  \begin{enumerate}[\bf 1.]
  \item
    $\prob{\allc_n\setminus\firm_n}=2^{-n/4+o(n)}$ for digraphs
    \textup{(Dobson, Spiga, and Verret \cite[Theorem~1.6]{DobsonSV16}).}
  \item
    $\prob{\allc_n\setminus\firm_n}=O(n2^{-n/8})$ for graphs
    \textup{(Bhoumik, Dobson, and Morris \cite[Theorem~3.2]{BhoumikDM14}).}  
  \end{enumerate}
\end{proposition}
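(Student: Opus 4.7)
The plan is to bound the probability that $\aut(\cay(\bZ_n,S))$ strictly contains the cyclic shift group (for digraphs) or the dihedral group $D_{2n}$ (for graphs). I would first record the structural dichotomy for the automorphism group $G=\aut(\cay(\bZ_n,S))$: either $G$ is contained in the holomorph $\bZ_n\rtimes\bZ_n^\times$ --- equivalently, the stabilizer $G_0$ of the vertex $0$ consists entirely of the multipliers $\mu_k\colon x\mapsto kx$ with $kS=S$ --- or $G$ is \emph{not} of this normal/holomorphic type, in which case $X$ admits a non-obvious imprimitivity system over some proper nontrivial subgroup of $\bZ_n$, forcing $X$ to be a (generalized) wreath product in the sense of Kaluznin--Klin and Muzychuk. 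This dichotomy is the central tool imported from algebraic graph theory.

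In the holomorphic case, $X$ fails to be firm exactly when $K(S)=\Set{k\in\bZ_n^\times}{kS=S}$ is strictly larger than $\{1\}$ (digraphs) or $\{1,-1\}$ (graphs). This is precisely the content of Lemma~\ref{lem:m-free}: for each fixed $k\ne 1$ (resp.\ $k\ne\pm 1$) the event $kS=S$ has probability at most $2^{-n/4}$ (resp.\ $2^{-n/8+1}$), because the cycles of $\mu_k$ on the support of $S$ have combined length at least $n/2$ and must each lie entirely inside or outside $S$. A union bound over $|\bZ_n^\times|=\phi(n)\le n$ possible multipliers yields the contribution $\phi(n)\cdot 2^{-n/4}=2^{-n/4+o(n)}$ in the digraph case and $O(n\cdot 2^{-n/8})$ in the graph case, matching the two stated bounds exactly.

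In the non-holomorphic case, the structural decomposition expresses $X$ as a generalized wreath product relative to some divisor chain of $n$. Concretely, there exists a proper nontrivial subgroup $H\le\bZ_n$ such that the connection set $S$ is \emph{$H$-invariant} on the $H$-cosets outside some small exceptional set: $S\cap (C\setminus H)$ is either empty or a full coset $C\setminus H$ for each coset $C$ of $H$. The number of such connection sets is at most $2^{|H|}\cdot 2^{n/|H|}$, and dividing by the total $2^{n-1}$ gives a probability of at most $2^{-n+|H|+n/|H|+1}$. Summing over all divisors $d=|H|$ of $n$ with $1<d<n$ and using $d(n)=n^{o(1)}$, the geometric dominance of the term $d=2$ (or $d=n/2$) shows that the non-holomorphic contribution is $2^{-n/2+o(n)}$, which is dwarfed by the holomorphic contribution.

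The main obstacle is the second case: making rigorous the assertion that every non-holomorphic automorphism group forces a wreath-product structure on $S$. This is not elementary --- it rests on the classification of Schur rings over $\bZ_n$ by Leung--Man and Muzychuk, together with the Kaluznin--Klin analysis of 2-closed overgroups of the regular cyclic group. Once this black box is in hand, the probabilistic estimates are straightforward counting. For the graph case, one must additionally quotient by the obligatory factor $\{1,-1\}$ throughout and verify that the exceptional cycle $\{x,-x\}$ with $x=-x$ does not spoil the cycle-length accounting in Lemma~\ref{lem:m-free}; this is exactly the $+1$ appearing in the bound $|\ann(a-1)\cup\ann(a+1)|\le n/2+1$ used there.
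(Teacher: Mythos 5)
This proposition is not proved in the paper at all: it is imported verbatim from Dobson--Spiga--Verret \cite{DobsonSV16} and Bhoumik--Dobson--Morris \cite{BhoumikDM14}. The only related statement the paper proves itself is the weaker, elementary Lemma~\ref{lem:m-free} (non-firmness \emph{within the normal case}, i.e.\ the existence of a nontrivial multiplier), and the paper explicitly notes that Proposition~\ref{prop:firm} could replace that lemma but is a deeper external result. So your proposal is an attempt to reprove the cited theorems, and its first half --- the holomorphic/normal case, where non-firmness means $K(S)\ne\{1\}$ (resp.\ $\{1,-1\}$) and a union bound over the at most $\phi(n)$ multipliers gives $\phi(n)2^{-n/4}$ resp.\ $O(n2^{-n/8})$ --- is indeed just Lemma~\ref{lem:m-free} and is fine.

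The genuine gap is in the non-holomorphic case. The structural claim you invoke is misstated: a circulant whose automorphism group is not contained in the holomorph need \emph{not} be a plain wreath (lexicographic) product over a single subgroup $H$ with $S$ a union of full $H$-cosets outside $H$. The classification actually available (Leung--Man, Evdokimov--Ponomarenko, and the form used in \cite{BhoumikDM14,DobsonSV16}) produces \emph{generalized} wreath products over a chain of subgroups $H_1\le H_2$, meaning only that $S\setminus H_2$ is a union of $H_1$-cosets while $S\cap H_2$ is unconstrained, together with a further ``deleted wreath''/tensor-with-$K_m$ case (automorphism group with a symmetric-group direct factor) that your sketch omits entirely. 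This is not a cosmetic difference: taking $|H_1|=2$ and $[\bZ_n:H_2]=2$, there are roughly $2^{3n/4}$ generalized wreath connection sets in the digraph case, so this case contributes $2^{-n/4+o(n)}$ --- the same order as the multiplier term, not the $2^{-n/2+o(n)}$ your count of plain wreath products yields (and similarly of order $2^{-n/8+o(n)}$ in the undirected case). Hence the step ``the non-holomorphic contribution is dwarfed by the holomorphic contribution'' fails, and the stated bounds do not follow from your counting; controlling the generalized wreath and symmetric-factor cases is precisely the nontrivial content of the theorems of \cite{DobsonSV16} and \cite{BhoumikDM14} that the paper cites. The bounds themselves are of course true, but the proposed argument does not establish them.
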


Note that $\un\allc_n\setminus\un\firm_n=\un{(\allc_n\setminus\firm_n)}$
because $\firm$ is obviously isomorphism-invariant within $\allc$.
Applying the inequalities \refeq{QlQ} to $\Q=\allc\setminus\firm$ and to $\Q=\allc$,
similarly to \refeq{m-free} we get the relation
$$
\probun{\un\allc_n\setminus\un\firm_n}\le\phi(n)\,\prob{\allc_n\setminus\firm_n}.
$$
By Proposition \ref{prop:firm}, this implies that
\begin{eqnarray}
  \probun{\un\allc_n\setminus\un\firm_n}&=&2^{-n/4+o(n)}\quad\text{for digraphs,}\label{eq:firm-di}\\
  \probun{\un\allc_n\setminus\un\firm_n}&=&O(n^2\,2^{-n/8})\quad\text{for graphs.}\label{eq:firm-g}                     
\end{eqnarray}

\begin{lemma}\label{lem:l}
  Let $\cS\subseteq\allc$. If $\cS$ is isomorphism-invariant within $\allc$, then
\begin{eqnarray*}
  \probla{\la\cS_n}&\ge&\probun{\un\cS_n}-2^{-n/4+o(n)}\quad\text{for digraphs,}\\
  \probla{\la\cS_n}&\ge&\probun{\un\cS_n}-O(n^2\,2^{-n/8})\quad\text{for graphs.}                    
\end{eqnarray*}  
\end{lemma}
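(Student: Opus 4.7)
The plan is to translate between the labeled and unlabeled counts via the orbit--stabilizer identity $|\la\Q_n|=\sum_{[X]\in\un\Q_n}n!/|\aut(X)|$, valid for any $\Q\subseteq\allc$. The starting observation is that every Cayley digraph of $\bZ_n$ carries the regular $\bZ_n$-action by translations, so $|\aut(X)|\ge n$, and every Cayley graph additionally admits $\rho\colon x\mapsto -x$, which together with translations generates a copy of the dihedral group of order $2n$, giving $|\aut(X)|\ge 2n$. Firmness is exactly the equality case in these bounds, and since firmness is a property of the underlying (di)graph alone, $\firm$ is isomorphism-invariant within $\allc$.

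Applying the automorphism lower bound termwise in the identity above gives the blanket estimate
$$
|\la\allc_n|\le (n-1)!\,|\un\allc_n|\qquad\text{(digraphs)},
$$
and its analog with factor $(n-1)!/2$ for graphs. For the numerator $|\la\cS_n|$ I would go in the opposite direction, restricting the sum to firm representatives only: applying~\refeq{firm-l-di} (respectively~\refeq{firm-l-g}) to $\Q=\cS\cap\firm$, and using that $\un$ commutes with intersection on isomorphism-invariant subfamilies (so that $\un(\cS_n\cap\firm_n)=\un\cS_n\cap\un\firm_n$), I obtain
$$
|\la\cS_n|\ge |\la(\cS_n\cap\firm_n)|=(n-1)!\,|\un\cS_n\cap\un\firm_n|\ge (n-1)!\bigl(|\un\cS_n|-|\un\allc_n\setminus\un\firm_n|\bigr),
$$
with the parallel inequality (factor $(n-1)!/2$) for graphs.

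Dividing the two displayed bounds, the common factor $(n-1)!$ (resp.~$(n-1)!/2$) cancels, and both parts of the lemma collapse to the single inequality
$$
\probla{\la\cS_n}\ge \probun{\un\cS_n}-\probun{\un\allc_n\setminus\un\firm_n}.
$$
Substituting~\refeq{firm-di} produces the error term $2^{-n/4+o(n)}$ in the digraph case and substituting~\refeq{firm-g} produces $O(n^2\,2^{-n/8})$ in the graph case. I do not anticipate any genuine obstacle in this argument: the counting identity and the automorphism lower bounds are elementary, and all the nontrivial content has already been imported---Proposition~\ref{prop:firm} and its unlabeled translations~\refeq{firm-di},~\refeq{firm-g} do the real work of bounding $\probun{\un\allc_n\setminus\un\firm_n}$.
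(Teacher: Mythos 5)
Your proposal is correct and follows essentially the same route as the paper: restrict to firm circulants, use the exact labeled/unlabeled count \refeq{firm-l-di}/\refeq{firm-l-g} on $\cS\cap\firm$ (valid because both families are isomorphism-invariant, so $\un(\cS_n\cap\firm_n)=\un\cS_n\cap\un\firm_n$), bound $|\la\allc_n|$ via the minimum automorphism count $n$ (resp.\ $2n$), cancel the $(n-1)!$ factors, and absorb the non-firm unlabeled circulants using \refeq{firm-di} and \refeq{firm-g}. The only cosmetic difference is that you phrase the denominator bound through the orbit--stabilizer sum rather than the paper's inequality \refeq{QuQ}, which is the same estimate.
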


\begin{proof}
We prove the inequality for graphs; the case of digraphs is similar.
Note that $\un\cS_n\cap\un\firm_n=\un{(\cS_n\cap\firm_n)}$ and $\la\cS_n\cap\la\firm_n=\la{(\cS_n\cap\firm_n)}$
because both $\cS$ and $\firm$ are isomorphism-invariant within $\allc$.
Applying the inequality \refeq{QuQ} to $Q=\allc$ and the equality \refeq{firm-l-g}
to $\Q=\cS\cap\firm$, we derive
\begin{multline}
\probla{\la\cS_n}\ge\probla{\la\cS_n\cap\la\firm_n}=
\frac{|\la\cS_n\cap\la\firm_n|}{|\la\allc_n|}=\frac{|\la{(\cS_n\cap\firm_n)}|}{|\la\allc_n|}\\
\ge
\frac{(n-1)!\,|\un{(\cS_n\cap\firm_n)}|/2}{(n-1)!\,|\un\allc_n|/2}
=\frac{|\un\cS_n\cap\un\firm_n|}{|\un\allc_n|}=
\probun{\un\cS_n\cap\un\firm_n}.\label{eq:la-un}
\end{multline}
This implies that
$$
\probla{\la\cS_n}\ge\probun{\un\cS_n}-\probun{\un\allc_n\setminus\un\firm_n}\ge\probun{\un\cS_n}-O(n^2\,2^{-n/8}).
$$
The last inequality follows from the bound~\refeq{firm-g}.
\end{proof}

Now, let $\cS$ be the set of all Cayley digraphs $X=\cay(\bZ_n,S)$ with simple spectrum.
In the case of graphs, we set $\cS$ to be the set of all $X=\cay(\bZ_n,S)$ with saturated spectrum.
We already know that the bound of Theorem \ref{thm:whp} holds true for unlabeled circulants.
Lemma \ref{lem:l} implies that it is as well true for labeled circulants.

Lemma \ref{lem:l} provides a rather general way of showing that if a property holds for
almost all unlabeled circulants, then it also holds for almost all labeled circulants.
We remark that that for the property $\cS$ that a circulant digraph has simple spectrum
(or that a circulant graph has saturated spectrum) this can be alternatively derived from
Lemmas \ref{lem:simple}, \ref{lem:saturated}, and \ref{lem:w-saturated}. Indeed, these lemmas imply that $\cS\subseteq\firm$,
which allows us to obtain the inequality $\probla{\la\cS_n}\ge\probun{\un\cS_n}$ directly from~\refeq{la-un}.

The proof of Theorem \ref{thm:main} is complete.

\section{Canonical Cayley representations via \wl}\label{s:wl2}

This section is devoted to the proof of Theorem~\ref{thm:CCR}.
Before presenting the proof in Subsection~\ref{ss:CCR-proof},
we provide a formal description of the \wl algorithm in Subsection~\ref{ss:alg},
and collect relevant preliminary results in Subsections~\ref{ss:orbitals} and~\ref{ss:repr-of-firm}.

\subsection{Description of \wl}
\label{ss:alg}

For notational simplicity, an ordered pair $(a,b)$ will be denoted by $ab$.
Given a loopless digraph $X=(V,E)$ as an input, \wl iteratively computes a sequence of colorings $c^i_X$
of the Cartesian square $V\times V$. The initial coloring is defined by
$c^0_X(uv)=(\mathit{type}(uv),\mathit{type}(vu))$ where $\mathit{type}(uv)$
takes on one of three values according to the type of an ordered vertex pair $uv$,
namely \textit{arc} if $uv\in E$, \textit{nonarc}
if $uv\notin E$ and $u\ne v$, and \emph{loop} if $u=v$. The
coloring is modified iteratively as follows:
$$
  c^{i+1}_X(uv)=\Mset{\of{c^{i}_X(uw), c^{i}_X(wv)}}_{w\in V}.
$$
In words, the new color of a pair $uv$ is a ``superposition'' of all old color pairs observable
along the extensions of $uv$ to a triple $uwv$.
Denote the partition of $V\times V$ into the color classes of $c^i_X$ by $\cC^i_X$.
A simple inductive argument shows that 
$$
c^{i+1}_X(uv)\ne c^{i+1}_X(u'v')\text{ whenever }c^{i}_X(uv)\ne c^{i}_X(u'v')
$$
which means that $\cC^{i+1}_X$ is finer than or equal to $\cC^i_X$.
It follows that the partition stabilizes starting from some step
$t=t(X)$, that is, $\cC^{t}_X=\cC^{t-1}_X$, which implies
that $\cC^i_X=\cC^{t-1}_X$ for all $i\ge t$. Note that $t\le|V|^2$. As soon as
the stabilization is reached, the algorithm terminates and outputs the coloring~$c^t_X$.

An easy induction on $i$ shows that, if $\phi$ is an isomorphism from $X$ to $Y$, then
\begin{equation}
  \label{eq:phi}
c^i_X(uv)=c^i_Y(\phi(u)\phi(v)).  
\end{equation}

Note that the length of $c^i_X$-colors (in any natural encoding) grows exponentially with $i$ increasing.
Similarly to CR, the exponential blow-up is remedied by renaming the colors after each step.
Finally, note that \wl can be implemented in time $O(n^3\log n)$; see~\cite{ImmermanL90,ImmermanS19}.

Let $\wll X=\cC^{t}_X$ denote the stabilized partition of $V^2$.
It can be noticed that $\wll X$ forms a \emph{coherent configuration} \cite{CP2019},
but we will not use this fact directly. However, we will need another
source of coherent configurations, which we introduce in the next subsection.

\subsection{Orbitals of a permutation group}\label{ss:orbitals}

Let $G\le\sym(V)$ be a group of permutations of a set $V$.
A natural action of $G$ on $V^2$ is defined by $\alpha(x,y)=(\alpha(x),\alpha(y))$
for $x,y\in V$ and $\alpha\in G$. An orbit of this action is called \emph{orbital}.
We denote the partition of $V^2$ into the orbitals of $G$ by~$\orbb G$.

Equality \refeq{phi} readily implies that the partition $\wll X$ is refined by
the partition $\orbb{\aut(X)}$. Similarly to the concept of a Schurian coherent
configuration \cite{CP2019}, we call a digraph $X$ \emph{Schurian} if $\wll X=\orbb{\aut(X)}$.

For each $a\in\bZ_n$, a digraph $X=\cay(\bZ_n,S)$ has an automorphism $\sigma_a$
defined by $\sigma_a(x)=x+a$. These automorphisms form a subgroup of $\aut(X)$.
If this subgroup is normal, the circulant $X$ is called \emph{normal} (see \cite[Chapter 8.1]{DobsonMM22},
where this concept is discussed in the more general setting of Cayley graphs for any groups).
Note that if $X$ is firm---according to the definition given in Section \ref{ss:firm}, then $X$ is normal.
This is true both for digraphs and graphs (in the case of graphs, recall that every subgroup of index 2 is normal).

\begin{proposition}[{\cite[Theorem 6.1]{EvdokimovP03}}]\label{prop:EP}
  Every normal circulant digraph is Schurian.
\end{proposition}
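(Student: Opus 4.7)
My plan is to re-encode both $\wll{X}$ and $\orbb{\aut(X)}$ as Schur rings over $\bZ_n$, reduce the Schurian property to an equality of these Schur rings, and then close that equality using normality.

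First, since $\bZ_n$ sits inside $\aut(X)$ as the group of translations, the canonicity relation \refeq{phi} forces every class of $\wll{X}$ to be translation-invariant. Hence each such class has the form $R_T = \{(x,x+t) : x \in \bZ_n\}$ for some $T \subseteq \bZ_n$, and the collection $\mathcal{P}$ of these sets $T$ is a partition of $\bZ_n$; a routine verification of the coherent-configuration axioms shows that $\mathcal{P}$ is the set of basic sets of a Schur ring $\mathfrak{S}_X$ over $\bZ_n$. Because the initial 2-WL coloring distinguishes arcs from nonarcs, the connection set $S$ is a union of members of~$\mathcal{P}$.

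Next, normality of $X$ yields $\aut(X) = \bZ_n \rtimes K(S)$, where $K(S) = \{k \in \bZ_n^\times : kS = S\}$ is the stabilizer of $0$ and acts on $\bZ_n$ by multiplication. Consequently $\orbb{\aut(X)}$ is translation-invariant as well, and the induced partition of $\bZ_n$ is precisely $\mathcal{Q}$, the set of $K(S)$-orbits, which forms the basic sets of the Schurian Schur ring $\mathfrak{S}_K$ of $K(S)$. The Schurian property $\wll{X} = \orbb{\aut(X)}$ is thus equivalent to $\mathcal{P} = \mathcal{Q}$, i.e.\ to $\mathfrak{S}_X = \mathfrak{S}_K$.

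Third, the refinement $\orbb{\aut(X)} \preceq \wll{X}$ established earlier translates into $\mathfrak{S}_X \subseteq \mathfrak{S}_K$; equivalently, every basic set of $\mathfrak{S}_X$ is a union of $K(S)$-orbits. Moreover, each multiplier $\mu_k\colon x \mapsto kx$ with $k \in K(S)$ is an automorphism of $X$, so by \refeq{phi} it preserves the 2-WL coloring and fixes every basic set of $\mathfrak{S}_X$ setwise; conversely, any element of the multiplier group of $\mathfrak{S}_X$ stabilizes $S$ (since $S$ is a union of basic sets of $\mathfrak{S}_X$) and hence lies in $K(S)$. Therefore the multiplier group of $\mathfrak{S}_X$ equals $K(S)$, which is also the multiplier group of $\mathfrak{S}_K$.

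The crux, and the main obstacle, is to upgrade this equality of multiplier groups to the actual equality $\mathfrak{S}_X = \mathfrak{S}_K$, since a Schur ring over a cyclic group is not in general determined by its multiplier group alone. I would close this gap by invoking the structural theory of Schur rings over cyclic groups, notably the Leung--Man classification together with the wreath/tensor decomposition results of Muzychuk and of Evdokimov--Ponomarenko: these imply that a Schur ring over $\bZ_n$ contained in a Schurian Schur ring $\mathfrak{S}_K$, with the same multiplier group $K(S)$, and containing $S$ as a union of basic sets, must coincide with $\mathfrak{S}_K$. From $\mathfrak{S}_X = \mathfrak{S}_K$ we obtain $\mathcal{P} = \mathcal{Q}$ and thus $\wll{X} = \orbb{\aut(X)}$, so $X$ is Schurian.
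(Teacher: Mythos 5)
Your first three steps are sound: translation-invariance of $\wll{X}$, the identification $\aut(X)=\bZ_n\rtimes K(S)$ for a normal circulant, the inclusion $\mathfrak{S}_X\subseteq\mathfrak{S}_K$, and the fact that the multipliers fixing every basic set of $\mathfrak{S}_X$ form exactly $K(S)$. The proof collapses at the step you yourself flag as the crux, and not merely because a citation is missing: the bridging statement you propose to import (``a Schur ring over $\bZ_n$ contained in the Schurian ring $\mathfrak{S}_K$, with multiplier group $K(S)$ and containing $S$ as a union of basic sets, must equal $\mathfrak{S}_K$'') is false. Observe that every hypothesis you feed into it holds for \emph{every} circulant digraph, normal or not: $\mathfrak{S}_K=\mathrm{Cyc}(K(S),\bZ_n)$ is always Schurian (it is the orbital scheme of $\bZ_n\rtimes K(S)$), it always contains $S$ as a union of basic sets, hence always contains the generated ring $\mathfrak{S}_X$, and your multiplier-group computation never uses normality---normality entered only to identify $\orbb{\aut(X)}$ with $\mathfrak{S}_K$. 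So a lemma with exactly these hypotheses would prove that $\wll{X}$ is cyclotomic for every circulant digraph, which is not true.

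Concretely, let $n=p^2$ with $p\ge3$ prime, $P=p\bZ_n$, and $S=(P\setminus\{0\})\cup(1+P)$. Here $K(S)=1+P$. Computing $\underline{S}^2$ and using the Schur--Wielandt principle, the Schur ring generated by $S$ (which is the one corresponding to $\wll{X}$) has basic sets $\{0\}$, $P\setminus\{0\}$, and the cosets $t+P$ with $t\not\equiv0\pmod p$ --- a wreath-product ring --- whereas $\mathrm{Cyc}(K(S),\bZ_n)$ splits $P\setminus\{0\}$ into singletons. Both rings have multiplier group $1+P$, the first is contained in the second, $S$ is a union of basic sets of the first, and yet they differ; this $X$ is of course not normal (its automorphism group contains the symmetric group on each residue class mod $p$), which is exactly the point: your distilled hypotheses do not encode normality at all. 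The missing content is the genuine structural analysis of normal Schur rings/cyclotomic schemes over cyclic groups that constitutes Theorem~6.1 of \cite{EvdokimovP03}; the paper does not reprove that theorem but quotes it, and it cannot be recovered by the formal ``same multiplier group plus containment'' argument you outline, nor does the Leung--Man classification yield it in this form.
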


Extending the definition of a firm Cayley (di)graph in Section \ref{ss:firm},
we also call an arbitrary labeled circulant (di)graph \emph{firm}
if it is isomorphic to a firm Cayley (di)graph $\cay(\bZ_n,S)$ or, equivalently,
if $\aut(X)$ is as small as possible, i.e., $\aut(X)$ isomorphic to the
cyclic group in the directed case and to the dihedral group in the undirected case.

The Schurity property of a digraph $X$ is beneficial because it
enables an efficient computation of the partition $\orbb{\aut(X)}$
just by running \wl on $X$. Moreover, if $X$ is a firm circulant (di)graph,
then the knowledge of $\orbb{\aut(X)}$ allows us to determine the automorphism
group $\aut(X)$ as a permutation group. We now state this fact formally
in the form that will be useful in the next subsection.

Let $C_n$ denote a cyclic permutation group on the $n$-element set $V$
that acts on $V$ transitively or, equivalently, contains a cycle of length $n$.
Assume that $n\ge3$ and note that $C_n$ has a unique extention to a dihedral group (of permutations of $V$).
We denote this dihedral permutation group by $D_{2n}$.
Note that all $\phi(n)$ elements of order $n$ in $C_n$ are cycles of length $n$.
The same holds true for $D_{2n}$. Indeed, every element in $D_{2n}\setminus C_n$
has degree~2. We now observe that, given the partitions $\orbb{C_n}$ and $\orbb{D_{2n}}$,
the elements of order $n$ in $C_n$ and $D_{2n}$ can be explicitly constructed
as permutations of the set~$V$.

\begin{lemma}\label{lem:generators}\hfill
  \begin{enumerate}[\bf 1.]
  \item
    The $\phi(n)$ generators of $C_n$ are uniquely determined by $\orbb{C_n}$
    and can be constructed from $\orbb{C_n}$ in time~$O(n^2)$.
  \item
    The $\phi(n)$ elements of order $n$ of $D_{2n}$ are uniquely determined by $\orbb{D_{2n}}$
    and can be constructed from $\orbb{D_{2n}}$ in time~$O(n^2)$.    
  \end{enumerate}
\end{lemma}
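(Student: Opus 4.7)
The plan rests on describing the orbitals of $C_n$ and of $D_{2n}$ explicitly, and then reading off the desired permutations. Since $C_n$ is transitive of order $n$ on an $n$-element set, its action on $V$ is regular, so one may identify $V$ with $\bZ_n$ in such a way that a chosen generator acts as $x\mapsto x+1$. Under this identification, each orbital of $C_n$ has the form $O_k=\{(x,x+k):x\in V\}$ for a unique $k\in\bZ_n$, and in particular every orbital is \emph{functional}: it defines a permutation $\pi_k$ of $V$ by $\pi_k(x)=x+k$. The order of $\pi_k$ equals $n/\gcd(k,n)$, so the generators of $C_n$ correspond precisely to the orbitals $O_k$ with $\gcd(k,n)=1$, of which there are exactly $\phi(n)$.

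For Part 1, I would therefore scan the orbitals of $\orbb{C_n}$, and for each orbital $O$ verify that it is functional (each vertex appears as the first coordinate of exactly one pair in $O$) and that the resulting permutation $\pi_O$ is a single cycle of length $n$. Every orbital passes the first test, and exactly $\phi(n)$ of them pass the second. Reading the partition takes $O(n^2)$ time, and the check for each orbital $O$ can be done in $O(|O|)$ time, giving the required $O(n^2)$ bound; since $\sum_O|O|=n^2$ the total cost stays linear in the input length. Uniqueness is immediate because $\pi_O$ is determined by $O$.

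For Part 2, I would exploit the fact that $D_{2n}$ is generated by $C_n$ together with any involution $\rho\notin C_n$, which acts as $\rho(x)=-x$ under the identification above. Each orbital of $D_{2n}$ is therefore the union $O_k\cup O_{-k}$ of two orbitals of $C_n$. Viewed as the edge set of an undirected graph on $V$, such a union is $2$-regular, and it is connected---hence a Hamiltonian cycle on $V$---exactly when $\gcd(k,n)=1$; otherwise it decomposes into several shorter cycles. Thus I would scan the orbitals of $\orbb{D_{2n}}$, retain those that form a single Hamiltonian cycle, and from each such cycle extract the two cyclic traversals, each giving a permutation of order $n$. For $n\ge 3$ and $\gcd(k,n)=1$ we have $k\not\equiv -k\pmod n$, so the $\phi(n)$ generators split into $\phi(n)/2$ unordered pairs $\{k,-k\}$, one per Hamiltonian orbital, yielding exactly $\phi(n)$ permutations in total. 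The running time and uniqueness follow as before.

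The only subtle point is the equivalence between ``$O_k\cup O_{-k}$ is a Hamiltonian cycle on $V$'' and ``$\gcd(k,n)=1$'', which reduces to the elementary fact that the translation $x\mapsto x+k$ has a single orbit on $\bZ_n$ iff $\gcd(k,n)=1$; I expect no real obstacle here, and the remaining verifications are routine bookkeeping.
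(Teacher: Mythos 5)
Your proposal is correct and follows essentially the same route as the paper: you identify the orbitals of $C_n$ with the digraphs $\cay(\bZ_n,\{a\})$, select the $\phi(n)$ orbitals that form a single directed $n$-cycle to recover the generators, and for $D_{2n}$ use the symmetric unions $\cay(\bZ_n,\{-a,a\})$, keeping the $\phi(n)/2$ orbitals that are undirected Hamiltonian cycles and extracting both traversals. The functionality check and the $k\not\equiv-k\pmod n$ observation are harmless extra bookkeeping that match the paper's implicit assumptions.
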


\begin{proof}
1.
Assume for a while that $V=\bZ_n$ and that $C_n$ consists of the permutations $\sigma_a$,
$a\in\bZ_n$, defined (like above) by $\sigma_a(x)=x+a$. Note that
$\sigma_a$ is a generator of $C_n$ if and only if $a\in\bZ_n^\times$.
Each orbital of $C_n$ will be regarded as a digraph. As easily seen, two pairs $(x,y)$ and $(x',y')$ are
in the same orbital exactly when $y-x=y'-x'$. It follows that the orbitals
of $C_n$ are exactly the digraphs $\cay(\bZ_n,\{a\})$ for $a\in\bZ_n$.
Note that $\cay(\bZ_n,\{a\})$ has $\gcd(a,n)$ connected components, and each of them
is isomorphic to the directed cycle of length $n/\gcd(a,n)$. This implies that, in general,
the generating elements of $C_n$ can be identified by finding all $\phi(n)$
orbitals of $C_n$ that, viewed as digraphs, are directed cycles of length $n$.
For each such cycle, one then forms a cyclic permutation of $V$ along the cycle.

2.
Note that $\orbb{C_n}$ contains, along with each orbital $C=\cay(\bZ_n,\{a\})$,
its transpose $C'=\cay(\bZ_n,\{-a\})$. As easily seen, the orbitals of $D_{2n}$ are
exactly the symmetric closures $C\cup C'=\cay(\bZ_n,\{-a,a\})$ of the orbitals of $C_n$.
Therefore, the elements of order $n$ of $D_{2n}$ are identified by finding all $\phi(n)/2$
orbitals of $D_{2n}$ that are (undirected) cycles of length $n$
and by forming, along each of these cycles, two cyclic permutations of~$V$ in both directions.
\end{proof}

\subsection{Cayley representations of firm circulants}\label{ss:repr-of-firm}

Recall that a Cayley representation of a labeled circulant $X$ on $n$ vertices
is a map $\lambda\function{V(X)}{\bZ_n}$ such that $X^\lambda$ is a Cayley graph,
i.e., $X^\lambda=\cay(\bZ_n,S)$ for some $S$. We call two Cayley representation $\lambda$ and $\lambda'$
of $X$ \emph{equivalent} if $X^\lambda=X^{\lambda'}$. If $X^\lambda=\cay(\bZ_n,S)$, then
$\lambda$ and $\lambda'$ are equivalent if and only if $\lambda'=\alpha\lambda$
for some automorphism $\alpha$ of $\cay(\bZ_n,S)$. Thus, if $X$ is a firm digraph, then $\lambda$ has exactly
$n$ equivalent Cayley representations, and if $X$ is a firm graph, then $\lambda$ has exactly
$2n$ equivalent representations. It is useful to notice the following simple fact.

\begin{lemma}\label{lem:repr-auto}\hfill
  \begin{enumerate}[\bf 1.]
  \item
    Let $X$ be a firm circulant digraph.
    There is a one-to-one correspondence between the equivalence classes of Cayley representations of $X$
    and the generators of~$\aut(X)$.
  \item
   Let $X$ be a firm circulant graph on $n$ vertices.
    There is a one-to-one correspondence between the equivalence classes of Cayley representations of $X$
    and the pairs of mutually inverse cycles of length $n$ in~$\aut(X)$.
  \item
    In both cases, if a cycle of length $n$ in $\aut(X)$ is given, a Cayley representation of $X$
    from the corresponding equivalence class is constructible in linear time.
  \end{enumerate}
\end{lemma}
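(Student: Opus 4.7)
The plan is to associate to each Cayley representation $\lambda$ of $X$ the automorphism $\sigma_\lambda := \lambda^{-1}\circ\tau_1\circ\lambda$ of $X$, where $\tau_1\colon\bZ_n\to\bZ_n$ denotes the shift $x\mapsto x+1$, and to show that this rule induces the claimed bijection. Since $\tau_1$ is an $n$-cycle in $\aut(X^\lambda)$ and $\lambda$ is an isomorphism from $X$ to $X^\lambda$, the permutation $\sigma_\lambda$ is an $n$-cycle in $\aut(X)$. For a firm digraph, $\aut(X)\cong\bZ_n$, so $\sigma_\lambda$ is a generator; for a firm graph, $\aut(X)\cong D_{2n}$, so $\sigma_\lambda$ lies in the cyclic subgroup $C_n$ of index two.

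For Part 1, I would first check that $\sigma_\lambda$ depends only on the equivalence class $[\lambda]$: if $\lambda'=\alpha\circ\lambda$ for some $\alpha\in\aut(X^\lambda)$, then because $X$ is firm we have $\aut(X^\lambda)=\langle\tau_1\rangle$, so $\alpha$ commutes with $\tau_1$ and $\sigma_{\lambda'}=\sigma_\lambda$. Injectivity uses that the centralizer of $\tau_1$ in $\sym(\bZ_n)$ is $\langle\tau_1\rangle$ itself, so $\sigma_\lambda=\sigma_{\lambda'}$ forces $\beta:=\lambda'\circ\lambda^{-1}\in\langle\tau_1\rangle=\aut(X^\lambda)$, yielding $X^{\lambda'}=X^\lambda$. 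For surjectivity, given a generator $\tau$ of $\aut(X)$, pick $v_0\in V(X)$ and define $\lambda(\tau^k(v_0))=k$ for $k=0,1,\ldots,n-1$; the $\tau$-invariance of the edge relation implies that $X^\lambda=\cay(\bZ_n,\Set{j}{(v_0,\tau^j(v_0))\in E(X)})$, and $\sigma_\lambda=\tau$ by construction.

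For Part 2, the undirected case follows the same scheme, but $\aut(X^\lambda)=D_{2n}$ produces a two-to-one correspondence. Writing $\rho(x)=-x$, every element of $D_{2n}$ has the form $\tau_a$ or $\tau_a\rho$ for some $a\in\bZ_n$. If $\lambda'=\alpha\lambda$ with $\alpha\in C_n$ then $\sigma_{\lambda'}=\sigma_\lambda$, whereas a direct computation yields $(\tau_a\rho)^{-1}\tau_1(\tau_a\rho)=\tau_1^{-1}$, so $\sigma_{\lambda'}=\sigma_\lambda^{-1}$ whenever $\alpha\in C_n\rho$. The well-defined invariant of $[\lambda]$ is therefore the unordered pair $\{\sigma_\lambda,\sigma_\lambda^{-1}\}$. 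Conversely, the elements $\beta\in\sym(\bZ_n)$ satisfying $\beta^{-1}\tau_1\beta\in\{\tau_1,\tau_1^{-1}\}$ form precisely the normalizer of $\langle\tau_1\rangle$ in $\sym(\bZ_n)$, which coincides with $D_{2n}=\aut(X^\lambda)$ because $X$ is firm; this gives injectivity, and surjectivity is obtained by the orbit construction of Part 1 starting from either representative of the inverse pair.

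Part 3 is immediate once the orbit construction is in hand: given an $n$-cycle $\tau\in\aut(X)$, pick any $v_0\in V(X)$ and traverse $v_0,\tau(v_0),\ldots,\tau^{n-1}(v_0)$, assigning $\lambda(\tau^k(v_0))=k$ along the way; each step requires a single application of $\tau$, for a total of $O(n)$ time. No step of the argument is deep; the only point that deserves real care---and the main potential obstacle---is the normalizer computation in the undirected case, namely verifying that the elements of $\sym(\bZ_n)$ conjugating $\tau_1$ to $\tau_1^{-1}$ form exactly the reflection coset $C_n\rho$, which is contained in $\aut(X^\lambda)$ precisely because $X$ is firm.
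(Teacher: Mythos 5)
Your proposal is correct and takes essentially the same route as the paper: the paper's proof likewise associates with $\lambda$ the $n$-cycle $(\lambda^{-1}(0),\lambda^{-1}(1),\ldots,\lambda^{-1}(n-1))$, which is exactly your $\lambda^{-1}\tau_1\lambda$, observes that equivalent representations yield the same cycle (the same inverse pair of cycles in the undirected case), and inverts the correspondence by the same orbit construction you use for surjectivity and for Part 3. One small slip worth fixing: the set $\{\beta\in\sym(\bZ_n)\,:\,\beta^{-1}\tau_1\beta\in\{\tau_1,\tau_1^{-1}\}\}$ is not the normalizer of $\langle\tau_1\rangle$ in $\sym(\bZ_n)$ (that normalizer is the holomorph, of order $n\phi(n)$), but it does equal $\langle\tau_1,\rho\rangle\cong D_{2n}$, which is all your injectivity argument needs, since this dihedral group is contained in $\aut(X^\lambda)$ for every undirected circulant, with firmness entering only to give the reverse inclusion used for well-definedness.
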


\begin{proof}
Let $\lambda$ be a Cayley representation of $X$. The cycle $(\lambda^{-1}(0),\lambda^{-1}(1),\ldots,\lambda^{-1}(n-1))$
is a generator of $\aut(X)$. Every Cayley representation $\lambda'$ equivalent to $\lambda$
yields the same generator. Conversely, every cycle of length $n$ in $\aut(X)$ determines,
in an explicit way, $n$ equivalent Cayley representations of~$X$.

The case of graphs is similar with the only difference that a cycle $(v_0,v_1,\ldots,v_{n-1})$
and its inverse $(v_{n-1},\ldots,v_1,v_0)$ yield equivalent Cayley representations of~$X$.
\end{proof}

Lemma \ref{lem:repr-auto} implies that every firm circulant digraph $X$ on $n$ vertices has,
up to equivalence, exactly $\phi(n)$ Cayley representations. In the case that $X$ is a graph,
there are, up to equivalence, exactly $\phi(n)/2$ Cayley representations.

\subsection{Proof of Theorem \ref{thm:CCR}}\label{ss:CCR-proof}

We design an algorithm such that the following three conditions are fulfilled for a certain
class of (di)graphs~$\cC$:
\begin{itemize}
\item
  the algorithm computes a canonical Cayley representation for all inputs in~$\cC$;
\item
  the algorithm gives up on all inputs not in~$\cC$;
\item
  $\cC$ contains all firm circulant (di)graphs.
\end{itemize}
The last condition ensures the success probability bound stated in Theorem \ref{thm:CCR}.
Indeed, a random Cayley (di)graph $\cay(\bZ_n,S)$ is firm with high probability by Proposition \ref{prop:firm},
and this remains true for a random labeled circulant by the transition lemmas obtained in
Section \ref{s:main}, i.e., by Lemmas \ref{lem:u} and~\ref{lem:l}.

Since firm circulant (di)graphs are normal, Proposition \ref{prop:EP} shows that the orbital
partition $\orbb{\aut(X)}$ for a firm circulant $X$ can be computed just by running \wl on $X$.
According to \cite{ImmermanL90,ImmermanS19}, this takes time $O(n^3\log n)$, where $n$
is the number of vertices in $X$. Given $\orbb{\aut(X)}$, one can easily determine
all cycles of length $n$ in the permutation group $\aut(X)$; see Lemma \ref{lem:generators}.
By Lemma \ref{lem:repr-auto}, these can be used to efficiently construct all Cayley representations
of $X$. Note that it is enough to have one representation from each equivalence class.
Summing up, we come to the following procedure.

\medskip

\noindent\textsc{Canonical Cayley representation algorithm}

\smallskip

\noindent\textsc{Input:} a (di)graph $X$.
\begin{enumerate}
\item
  Run \wl on $X$.
\item
  Check whether the partition $\wll{X}$ contains a part that is isomorphic to a cycle (di)graph of length $n$.
If not, terminate. Otherwise, let $C$ be the part of $\wll{X}$ of this kind that has the lexicographically
smallest \wl-color.
\item
  Choose an arbitrary vertex $x_0$ and enumerate the vertices of $X$ along $C$ starting from $x_0$
  (in the case of graphs, choose any of the two directions in $C$). Let $x_0,x_1,\ldots,x_{n-1}$
  be the obtained enumeration.
\item
  Check whether the cyclic permutation $(x_0x_1\ldots x_{n-1})$ is an automorphism of $X$.
  If not, then give up.
\item
  Output the labeling $\lambda_X\function{V(X)}{\bZ_n}$ where $\lambda_X(x_i)=i$.
\end{enumerate}

The proof of Theorem~\ref{thm:CCR} is complete.

\begin{remark}
  Proposition \ref{prop:EP} can also be used to show that \wl distinguishes
  a firm circulant (di)graph $X$ from any other non-isomorphic (di)graph $Y$
  in the sense that the color palettes produced by \wl on $X$ and $Y$
  are different, i.e., $\Mset{c^{t(X)}_X(uv)}{uv\in V(X)^2}\ne \Mset{c^{t(Y)}_Y(uv)}{uv\in V(Y)^2}$.
  The class of the firm circulant digraphs contains all circulant digraphs with simple spectrum \cite[Theorem 3]{ElspasT70}.
  For this smaller class of circulant digraphs,
  the identifiability by \wl follows also from any of the results stated in \cite[Corollary 4.5]{Friedland89} or
  \cite[Corollary of Theorem 1]{EP99} (the latter results is stronger than the former in view of \cite[Theorem 3.3.19]{CP2019}).
\end{remark}


\end{document}